\newcommand{\blind}{1}
\def\Sig{\Sigma}
\def\sig{\sigma}
\def\R{\mathbb{R}}
\def\P{\mathbb{P}}
\def\E{\mathbb{E}}
\def\eps{\epsilon}
\def\lam{\lambda}
\def\mA{\mathcal{A}}
\def\gam{\gamma}
\DeclareMathOperator*{\argmin}{arg\,min}
\newtheorem{theorem}{Theorem}[section]
\newtheorem{lemma}{Lemma}[section]
\newtheorem{remark}{Remark}[section]
\newtheorem{condition}{Condition}[section]
\newcommand{\ignore}[1]{}
\begin{document}

\def\spacingset#1{\renewcommand{\baselinestretch}%
{#1}\small\normalsize} \spacingset{1}


\if1\blind
{
  \title{\bf Transfer Learning in Large-scale Gaussian Graphical Models with False Discovery Rate Control}
  \author{Sai Li\footnote{Department of Biostatistics, University of Pennsylvania, Philadelphia, PA 19104 (E-mail: \emph{Sai.Li@pennmedicine.upenn.edu})}, \ T. Tony Cai \footnote{Department of Statistics, the Wharton School, University of Pennsylvania, Philadelphia, PA 19104 (E-mail:\emph{tcai@wharton.upenn.edu})} \ and \ Hongzhe Li \footnote{Department of Biostatistics, University of Pennsylvania, Philadelphia, PA 19104 (E-mail: \emph{hongzhe@upenn.edu}).}}
    \date{}
  \maketitle
\fi

\if0\blind
{
  \bigskip
  \bigskip
  \bigskip
  \begin{center}
    {\LARGE\bf Transfer Learning in Large-scale Gaussian Graphical Models with False Discovery Rate Control}
\end{center}
  \medskip
} \fi

\bigskip
\begin{abstract}
Transfer learning for high-dimensional Gaussian graphical models (GGMs) is studied with the goal of estimating the target GGM by utilizing the data from similar and related  auxiliary studies.  The similarity between the target graph and each auxiliary graph is characterized by the sparsity of a divergence matrix. An estimation algorithm, Trans-CLIME, is proposed and shown to attain a faster convergence rate than the minimax rate in the single study  setting. Furthermore, a debiased Trans-CLIME estimator is introduced and shown to be element-wise asymptotically normal. It is used to construct a multiple testing procedure for edge detection with false discovery rate control.  The proposed estimation and multiple testing procedures demonstrate superior numerical performance in simulations and are applied  to infer the gene networks in a target brain tissue by leveraging the gene expressions from multiple other brain tissues. A significant decrease in prediction errors and a significant increase in power for link detection are observed.
\end{abstract}

\noindent%
{\it Keywords:} Inverse covariance matrix, meta learning, debiased estimator, multiple testing.
\vfill

\newpage
\spacingset{1.5} 
\section{Introduction}
Gaussian graphical models (GGMs), which represent the dependence structure among a set of random variables, have been widely used to model the conditional dependence relationships in many applications, including gene regulatory networks and brain connectivity maps \citep{DM07, VGP10, Zhao14, GZS19}. In the classical  setting with data from a single study,  the estimation of high-dimensional GGMs has been well studied in a series of papers, including  penalized likelihood methods  \citep{YL07, Fan09, Glasso, Roth08} and convex optimization based methods  \citep{Cai11, ACLIME, Tiger}.  The minimax optimal rates are studied in \citet{ACLIME} and a review can be found in \citet{Cai17}.
\citet{Liu13} considers  the inference in GGMs based on a node-wise regression approach and \citet{Zhao15} studies the estimation optimality and inference for individual entries.

Methods for estimating a single GGM have also been extended to simultaneously estimating multiple graphs when data from  multiple studies are available.  For example, \citet{JointGGM, JointGLasso, Cai2016joint} consider jointly estimating multiple GGMs with some penalties for inducing common structures among different graphs.  This problem falls in the category of multi-task learning  \citep{Lounici09, Wain12}, whose goal is to jointly estimate several related graphs. 

Due to high dimensionality and relatively small sample sizes in many modern applications, estimation of GGMs based on data from a single study often has large uncertainty and low power in detecting links in the corresponding graphs.
However,  the blessing is that samples from some different but related studies can be abundant.  Particularly, 
 for a given target study, there might be other similar studies where we expect some  similar dependence structures among the same set of variables. One example  is to infer the gene regulatory networks among a set of genes for a given issue.  Although gene regulatory networks are expected to vary from tissue to tissue,  certain shared regulatory structures are expected and have indeed been observed \citep{Pierson15, fagny17}. This paper introduces a transfer learning approach to improve the estimation and inference accuracy for the gene regulatory network in one target tissue by incorporating the data in other tissues. 
 
 Transfer learning  techniques have been developed in a range of applications, including pattern recognition, natural language processing, and drug discovery \citep{PY09, Turki17, Bastani18}. 
 It  has been studied in different settings with various similarity measures, but only a few of them offer statistical guarantees.  \citet{CW19} investigates nonparametric classification in transfer learning and proposes minimax and adaptive classifiers. 
 In linear regression models, \citet{LCL20} considers the estimation of high-dimensional regression coefficient vectors when the difference between the auxiliary and the target model is sufficiently sparse and proves the minimax optimal rate. 
 \citet{MJ20} proposes an algorithm that assumes all the auxiliary studies and the target study share a common, low-dimensional linear representation. Transfer learning in general functional classes have been studied in \citet{MJ20theory} and \citet{HK20}.

Our proposed transfer learning algorithm aims to improve the estimation and inference accuracy for GGM in a target study by transferring information  from multiple related studies.  This is different from the multi-task learning  outlined above, where the goal is to simultaneously estimate multiple graphs.  In terms of theoretical results, the convergence rate for estimating the target graph  in transfer learning can be faster than the corresponding rate in the multi-task learning.

 
\subsection{Model set-up}
\label{sec1-1}
Suppose that we observe \textit{i.i.d.} samples $x_i\in\R^p$ generated from $N(0,\Sig)$, $i=1,\dots,n$, and the parameter of interest is the precision matrix $\Omega=\Sig^{-1}$. Indeed, $\Omega$ uniquely determines the conditional dependence structure and the corresponding graph.  If the $i$-th and $j$-th variables are conditionally dependent in the target study, there is an undirected edge between the $i$-th and $j$-th nodes in the Gaussian graph and, equivalently, the $(i,j)$-th entry of $\Omega$ is nonzero. Our focus is on the estimation and inference for high-dimensional sparse Gaussian graphs where $p$ can be much larger than $n$ and $\Omega$ is sparse such that each column of $\Omega$ has at most $s$ nonzero elements with $s\ll p$.  

In the transfer learning setting, in addition to the observations $\{x_1, ..., x_n\}$ from the target distribution  $N(0,\Sig)$, we also observe samples from $K$ auxiliary studies. For $k=1,\dots, K$, the observations $x^{(k)}_i\in\R^p$ are independently generated from $N(0,\Sig^{(k)})$, $i=1,\dots,n_k$.
 Let $\Omega^{(k)}=\{\Sig^{(k)}\}^{-1}$ be the precision matrix of the $k$-th study, $k=1,\dots, K$. If some knowledge can be transferred to the target study, a certain level of similarity needs to be possessed by the auxiliary models and the target one.  
 
 To motivate our proposed similarity measure,  consider the relative entropy, or equivalently the Kullback–Leibler (KL) divergence, between the $k$-th auxiliary model and the target model. That is,
\begin{equation}
\label{eq-KL}
\mathcal{D}_{KL}(N_{\Sig^{(k)}}\parallel N_{\Sig})=\frac{1}{2}\textup{Tr}(\Delta^{(k)})-\frac{1}{2}\log det(I_p+\Delta^{(k)}) ~~\text{for}~~\Delta^{(k)}=\Omega\Sig^{(k)}-I_p,
\end{equation}
where $N_{\Sig^{(k)}}$ and $N_{\Sig}$ denote the normal distributions with mean zero and covariance matrix $\Sig^{(k)}$ and $\Sig$, respectively.
The KL-divergence is parametrized by the matrix $\Delta^{(k)}$ and we call $\Delta^{(k)}$ the $k$-th divergence matrix. 
We characterize the difference between $\Omega$ and $\Omega^{(k)}$ via
\begin{equation}
\label{eq-D}
\mathcal{D}_q(\Omega^{(k)},\Omega)=\max_{1\leq j\leq p}\|\Delta^{(k)}_{j,.}\|_q+\max_{1\leq j\leq p}\|\Delta^{(k)}_{.,j}\|_q
\end{equation}
for some fixed $q\in[0,1]$. In words, $\mathcal{D}_q(\Omega,\Omega^{(k)})$ is the maximum row-wise $\ell_q$-sparsity of $\Delta^{(k)}$ plus the maximum column-wise $\ell_q$-sparsity. Both the row-wise and column-wise norms are taken into account because $\Delta^{(k)}$ is non-symmetric.
The quantity $\mathcal{D}_q(\Omega^{(k)},\Omega)$ measures the ``relative distance'' between $\Omega$ and $\Omega^{(k)}$ in the sense that $\mathcal{D}_q(\Omega^{(k)},\Omega)=\mathcal{D}_q(c\Omega^{(k)},c\Omega)$ for any constant $c>0$. Notice that the spectral norm of $\Delta^{(k)}$  is upper bounded by $\mathcal{D}_1(\Omega^{(k)},\Omega)$,  which further provides an upper bound on the KL-divergence.
We also define $\mA_q$ to be a subset of $\{1,\dots,K\}$ such that
\begin{equation}
\label{eq-A}
\max_{k\in\mA_q} \mathcal{D}_q(\Omega^{(k)},\Omega)\leq h.
\end{equation}
We call $\mA_q$ the informative set under the difference measure $\mathcal{D}_q$ since all the auxiliary studies in $\mA_q$ have the discrepancy no larger than $h$. 

We develop estimation and inference procedures for GGMs given the informative set $\mA_q$ for any fixed $q\in[0,1]$. 
To the best of our knowledge,  estimation and inference of graphical models have not been studied in the transfer learning setting.

\subsection{Our contributions}

A transfer learning algorithm, called Trans-CLIME,  is proposed for estimating the target GGM. The proposed algorithm is inspired by the CLIME  introduced in \citet{Cai11} in the single study setting and is computationally efficient.  Furthermore, edge detection with uncertainty quantification is considered. Specifically, we construct the confidence interval for an edge of interest and perform multiple testing for all the edges with false discovery rate (FDR) control. The statistical inference is based on a new debiasing procedure, which can be coupled with any initial graph estimators. The debiasing step can be analytically computed in one step. We demonstrate its asymptotic validity for inference and its application to multiple testing with FDR control.

Theoretically, we establish the minimax optimal rate of convergence for estimating the GGMs with transfer learning in Frobenius norm by providing matching minimax upper and lower bounds.  We also establish the optimal rate of convergence for estimating individual entries in the graph.
These convergence rates are faster than the corresponding minimax rates in the classical single study setting, where no auxiliary samples are available or used. Our proposed Trans-CLIME and debiased Trans-CLIME are shown to be rate optimal under proper conditions.

\subsection{Organization and notation}
The rest of this paper is organized as follows. In Section \ref{sec2}, we propose an algorithm for estimating the graph in transfer learning with $q=1$. We study statistical inference for each edge of the graph in Section \ref{sec3}. In Section \ref{sec4}, we consider multiple testing of all the edges in the graph with false discovery rate guarantee. We establish the minimax lower and upper bounds for any fixed $q\in[0,1]$ in  Section \ref{sec-l0}. In Section \ref{sec-simu}, we study the numerical performance of Trans-CLIME in comparison to some other relevant methods. We then present an application of the proposed methods to estimate gene regulatory graphs based on data from multiple brain tissues in Section \ref{sec-data}, Finally,  Section \ref{sec-diss} concludes the paper. The the proofs and other supporting information are given in the Supplementary Materials.

For a matrix $A\in\R^{p\times p}$, let $A_j$ denote the $j$-th column of $A$. For any fixed $j\leq p$, we call $\|A_j\|_2$ the column-wise $\ell_2$-norm of $A$.  Let $\|A\|_{\infty,2}=\max_{j\leq p}\|A_j\|_2$, $\|A\|_{\infty,1}=\max_{j\leq p}\|A_j\|_1$, $\|A\|_{\infty,\infty}=\max_{i,j\leq p}|A_{i,j}|$, and $\|A\|_1=\sum_{j=1}^p\|A_j\|_1$. Let $\|A\|_2$ denote the spectral norm of $A$ and $\|A\|_F$ denote the Frobenius norm of $A$.
For a symmetric matrix $A$, let $\Lambda_{\max}(A)$ and $\Lambda_{\min}(A)$ denote the largest and smallest eigenvalues of $A$, respectively. We use $c_0,c_1,\dots$ and $C_0,C_1,\dots$ as generic constants which can be different at different places.
   
 \section{GGM estimation given the informative set}
     \label{sec2}
In this section, we study transfer learning in GGM estimation when the informative set $\mA_q$ is known. We focus on the difference measure with $q=1$. The subscript of $\mA_1$ will be abbreviated in the sequel without special emphasis.
In Section \ref{sec2-1}, we introduce the rationale for the proposed algorithm. Our proposal is introduced in Section \ref{sec2-2} and its theoretical properties are studied in Section \ref{sec2-3}.

\subsection{Rationale and moment equations}
\label{sec2-1}
Statistical methods in parametric models are always derived based on some moment equations. For estimating the GGMs, the likelihood is a natural objective function to optimize \citep{Glasso, Roth08}. The score function of the maximum likelihood estimator gives the following moment equation:
\begin{equation}
\label{m0}
   \Sig\Omega-I_p=0.
\end{equation}
The idea of CLIME \citep{Cai11} is to solve an empirical version of (\ref{m0}) and to encourage the sparsity of the estimator. Specifically, the CLIME estimator is given as 
\begin{align}
\widehat{\Omega}^{(\textup{CL})}&=\argmin_{\Omega\in\R^{p\times p}} \|\Omega\|_1\label{eq-CLIME}\\
&\text{subject to}~\|\widehat{\Sig}\Omega-I_p\|_{\infty,\infty}\leq \lam_{\textup{CL}},\nonumber
\end{align}
where $\widehat{\Sig}$ is a sample covariance matrix and $\lam_{\textup{CL}}>0$ is a tuning parameter.

In the context of transfer learning, we re-express the moment equation (\ref{m0}) to incorporate auxiliary information. Specifically, for $k=1,\dots, K$,
\[
I_p=\Sig^{(k)}\Omega^{(k)}=\Sig^{(k)}\Omega-(\Delta^{(k)})^{\intercal},\label{eq-m3}
\]
where $\Delta^{(k)}$ is the divergence matrix defined in (\ref{eq-KL}). To simultaneously leverage all the informative auxiliary studies, we further define the weighted average of the covariance and divergence matrices 
\[
  \Sig^{\mA}=\sum_{k\in\mA}\alpha_k\Sig^{(k)}~\text{and}~\Delta^{\mA}=\sum_{k\in\mA}\alpha_k\Delta^{(k)},
\]
where $\alpha_k=n_k/n_{\mA}$ for $n_{\mA}=\sum_{k\in \mA} n_k$.
The moment equation we consider is
\begin{equation}
\Sig^{\mA}\Omega-(\Delta^{\mA})^{\intercal}-I_p=0,\label{eq-m3}
\end{equation}
where  $\Sig^{\mA}$ in (\ref{eq-m3}) is an average parameter over $\mA$ and it incorporates the auxiliary information. 
 The moment equation (\ref{eq-m3}) motivates our procedure. First, we will estimate $\Delta^{\mA}$ 
based on the following moment equation:
\begin{equation}
\label{eq-m3a}
  \Sig  \Delta^{\mA}-(\Sig^{\mA}-\Sig)=0.
\end{equation}
Once $\Delta^{\mA}$ is identified, we can estimate our target $\Omega$ via (\ref{eq-m3}).

In most problems of interest, the similarity between $\Omega^{(k)}$ and $\Omega$ can be weak, even for $k\in\mA$, i.e. the unknown $h$ can be large. In this case, information  transfer may negatively affect  the learning performance of the target problem, i.e., the ``negative transfer'' \citep{HK20}. To address this issue, we will further perform an aggregation step. The aggregation methods and theory have been extensively studied in the existing literature, to name a few, \citet{RT11, Tsybakov14, Qagg, Dai18}. This type of methods can guarantee that, loosely speaking, the aggregated estimator has prediction performance comparable to the best prediction performance achieved by the initial estimators.

\subsection{Trans-CLIME algorithm}
\label{sec2-2}
We  introduce our proposed transfer learning algorithm, Trans-CLIME. For the data from target study, we split them into two disjoint folds. Let $\mathcal{I}$ be a subset of $\{1,\dots,n\}$ such that $|\mathcal{I}|=cn$ for some constant $0<c<1$. Let $\mathcal{I}^c$ denote the complement of $\mathcal{I}$. Let 
\[
\widehat{\Sig}=\frac{1}{|\mathcal{I}|}\sum_{i\in\mathcal{I}}x_ix_i^{\intercal}~~\text{and}~~\widetilde{\Sig}=\frac{1}{|\mathcal{I}^c|}\sum_{i\in\mathcal{I}^c}x_ix_i^{\intercal}.
\]
We will use $\widehat{\Sig}$ in Step 1 and 2 and will use $\widetilde{\Sig}$ in Step 3. 
For the auxiliary data, let $\widehat{\Sig}^{\mA}=\sum_{k\in\mA} (X^{(k)})^{\intercal}X^{(k)}/(\sum_{k\in \mA}n_k)$ denote the sample covariance based on the informative auxiliary samples.
Compute the single-study CLIME estimator $\widehat{\Omega}^{(\textup{CL})}$ via (\ref{eq-CLIME}) with input $\widehat{\Sig}$.

\underline{Step 1}. Compute
\begin{align}
\widehat{\Delta}^{(init)}=&\argmin_{\Delta\in\R^{p\times p}} \|\Delta\|_1 \label{eq-Delta-init}\\
&\text{subject to} \quad\|\widehat{\Sig}\Delta-(\widehat{\Sig}^{\mA}-\widehat{\Sig})\|_{\infty,\infty}\leq \lam_{\Delta}.\nonumber
\end{align}
The optimization in (\ref{eq-Delta-init}) is a CLIME-type estimator based on the moment equation (\ref{eq-m3a}). The obtained $\widehat{\Delta}^{(init)}$ is column-wise sparse but not necessarily row-wise sparse. We refine $\widehat{\Delta}^{(init)}$ as follows.
\begin{align}
\widehat{\Delta}^{\mA}=&\argmin_{\Delta\in\R^{p\times p}} \|\Delta\|_1 \label{eq-DeltaA-est}\\
&\text{subject to} \quad\|\Delta-\widehat{\Delta}^{(init)}-\widehat{\Omega}^{(\textup{CL})}(\widehat{\Sig}^{\mA}-\widehat{\Sig}-\widehat{\Sig}\widehat{\Delta}^{(init)})\|_{\infty,\infty}\leq 2\lam_{\Delta}.\nonumber
\end{align}
The optimization (\ref{eq-DeltaA-est}) can be understood as an adaptive thresholding of the bias-corrected $\widehat{\Delta}^{(init)}$, $\widehat{\Delta}^{(init)}+\widehat{\Omega}^{(\textup{CL})}(\widehat{\Sig}^{\mA}-\widehat{\Sig}-\widehat{\Sig}\widehat{\Delta}^{(init)})$. It is a more sophisticated version of hard thresholding and it does not require the knowledge of unknown parameters. The resulted $\widehat{\Delta}^{\mA}$ is row-wise $\ell_1$-sparse and will be used in the next step.

\underline{Step 2}. For $\widehat{\Delta}^{\mA}$ defined in (\ref{eq-DeltaA-est}), compute
\begin{align}
\widehat{\Theta}=&\argmin_{\Theta\in\R^{p\times p}} \|\Theta\|_1\label{eq-Theta-est}\\
&\text{subject to} \quad\|\widehat{\Sig}^{\mA}\Theta-(\widehat{\Delta}^{\mA}+I_p)^{\intercal}\|_{\infty,\infty}\leq \lam_{\Theta}.\nonumber
\end{align}
This step is a CLIME-type optimization based on the moment equation (\ref{eq-m3a}). As we have discussed in Section \ref{sec2-1}, $\widehat{\Theta}$ may not be as good as the single-study estimator if the similarity is weak. Hence, we perform a least-square aggregation in Step 3. The least square aggregation has been well-studied for regression type of problems \citep{Tsybakov14}. In this work, we aggregate the single-study CLIME estimator and $\widehat{\Theta}$ to produce a final graph estimator. Loosely speaking, the moment equation which motivates $\hat{v}_j$ is
\[
   \widetilde{\Sig}(\widehat{\Omega}^{(\textup{CL})}_{j},\widehat{\Theta}_{j})v_j-e_j\approx 0.
\]
Notice that the sample splitting step guarantees that both $\widehat{\Theta}$ and $\widehat{\Omega}^{(\textup{CL})}$ are independent of the samples used for aggregation.

\underline{Step 3}. For $j=1,\dots, p$, compute
\begin{align*}
&\widehat{W}(j)=\begin{pmatrix}
(\widehat{\Omega}^{(\textup{CL})}_{j})^{\intercal}\widetilde{\Sig} \widehat{\Omega}_j^{(\textup{CL})}& (\widehat{\Omega}^{(\textup{CL})}_{j})^{\intercal}\widetilde{\Sig}\widehat{\Theta}_{j}\\
(\widehat{\Omega}^{(\textup{CL})}_{j})^{\intercal}\widetilde{\Sig}\widehat{\Theta}_{j} & \widehat{\Theta}_{j}^{\intercal}\widetilde{\Sig} \widehat{\Theta}_j
\end{pmatrix},\quad \hat{v}_j=\{\widehat{W}(j)\}^{-1}\begin{pmatrix}
\widehat{\Omega}^{(\textup{CL})}_{j,j}\\
\widehat{\Theta}_{j,j}
\end{pmatrix}\in\R^2,
\end{align*}
where $\widehat{\Omega}^{(\textup{CL})}$ is defined in (\ref{eq-CLIME}).
For $j=1,\dots,p$, let
\[
\widehat{\Omega}_j=(\widehat{\Omega}^{(\textup{CL})}_{j},\widehat{\Theta}_{j})\hat{v}_j.
\]

Computationally, all the optimizations in three steps can be separated into $p$ independent optimizations, analogous to the original CLIME algorithm.  This makes the computation scalable.

We mention a significant difference between the Trans-CLIME algorithm and the transfer learning in high-dimensional regression such as the oracle Trans-Lasso in \citet{LCL20}.  For the linear regression problems, the performance of the oracle Trans-Lasso is justified when $\{\Sig^{(k)}\}_{k\in \mA}$ are close enough to $\Sig$, i.e. the designs can be moderately heterogeneous. For the current problem, 
we can rephrase it using the node-wise regression point of view such that each column of $\Omega$ can be viewed as a target regression parameter, and  the covariance matrix of the designs, $\Sig^{(k)}$, are different up to the similarity constraint on $\Omega^{(k)}$ and $\Omega$.
The similarity constraint imposed on $\Delta^{(k)}$ allows larger heterogeneity on the design matrices than that in the regression setting. We get around the challenge of heterogeneous designs by choosing proper moment equations introduced in Section \ref{sec2-1}.

\subsection{Convergence rate of Trans-CLIME}
\label{sec2-3}
In this subsection, we provide theoretical guarantees for the Trans-CLIME algorithm.
We assume the following condition in our theoretical analysis.
\begin{condition}[Gaussian graphs]
\label{cond1}
For $i=1,\dots,n$, $x_i\in\R^p$ are i.i.d. distributed as $N(0,\Sig)$. For each $k\in\mA$, $x_i^{(k)}$ are i.i.d. distributed as $N(0,\Sig^{(k)})$ for $i=1,\dots,n_k$. It holds that $1/C\leq \Lambda_{\min}(\Sig)\leq \Lambda_{\max}(\Sig)\leq C$ and  $1/C\leq \min_{k\in \mA}\Lambda_{\min}(\Sig^{(k)})\leq \max_{k\in \mA}\Lambda_{\max}(\Sig^{(k)})\leq C$.
\end{condition}
The Gaussian assumption facilitates the justification of the restricted eigenvalue conditions of the empirical covariance matrices. The Gaussian distribution of the primary data also simplifies the limiting distribution of our proposed estimator for inference. 

The parameter space we consider is
\begin{align}
\label{eq-Gam}
\mathbb{G}_q(s,h)&=\left\{(\Omega,\Omega^{(1)},\dots,\Omega^{(K)}):~\max_{1\leq j\leq p}\|\Omega_j\|_0\leq s,~\max_{k\in\mA_q}\mathcal{D}_q(\Omega,\Omega^{(k)})\leq h\right\}.
\end{align}
 We mention that the parameter space for GGMs in single study setting \citep{Zhao15} can be written as $\mathbb{G}_q(s,\infty)$ under Condition \ref{cond1} for any $q\in[0,1]$. This is because $\mathbb{G}_q(s,\infty)$ allows the auxiliary study to be arbitrarily far away from the target study and hence the worse case scenario is equivalent to the setting where only the primary data is available.

In the following, we demonstrate the convergence rate of Trans-CLIME under Condition \ref{cond1}. Let $\delta_n=\sqrt{\log p/n}\wedge h$.
\begin{theorem}[Convergence rate of Trans-CLIME]
\label{thm2}
Assume Condition \ref{cond1} .
Let the Trans-CLIME estimator $\widehat{\Omega}$ be computed with 
\[
   \lam_{\Delta}= c_1\sqrt{\frac{\log p}{n}},~\text{ and} ~~\lam_{\Theta}= c_2\sqrt{\frac{\log p}{n_{\mA}}},
   \] where $c_1$, and $c_2$ are large enough constants.
 If $s^2\log p\leq c_3n$, then for any true models in $\mathbb{G}_1(s,h)$, we have
\begin{align}
&\frac{1}{p}\|\widehat{\Omega}-\Omega\|_{F}^2\vee \|\widehat{\Omega}_j-\Omega_j\|_2^2=O_P\left(\frac{s\log p}{n_{\mA}+ n}+h\delta_n\wedge \frac{s\log p}{n}+\frac{1}{n}\right)\label{eq-re1}
\end{align}
 for any fixed $1\leq j\leq p$.
\end{theorem}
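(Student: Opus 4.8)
The plan is to propagate the population moment structure of Section~\ref{sec2-1} through the three steps of Trans-CLIME, tracking how the estimation error accumulates, and then to use the least-squares aggregation of Step~3 to ensure the final rate never exceeds the single-study CLIME rate. From (\ref{eq-m3a}), Step~1 targets the averaged divergence matrix $\Delta^\mA$, which obeys $\Sig\Delta^\mA=\Sig^\mA-\Sig$; from (\ref{eq-m3}), Step~2 targets $\Omega$ itself, since $(\Sig^\mA)^{-1}((\Delta^\mA)^\intercal+I_p)=(\Sig^\mA)^{-1}\Sig^\mA\Omega=\Omega$; and Step~3 returns, up to an $O_P(1/n)$ aggregation price, the more accurate of $\widehat\Omega^{(\textup{CL})}$ and $\widehat\Theta$.

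The analytic backbone is a concentration statement: for a (possibly data-dependent but suitably restricted) matrix $B$ whose columns have bounded $\Sig$-variance, $B_{.,j}^\intercal\Sig B_{.,j}=O(1)$ — which holds for $B=\Omega$ and $B=\Delta^\mA$ under Condition~\ref{cond1} because both have $O(1)$ spectral norm — one has $\|\widehat\Sig B-\Sig B\|_{\infty,\infty}=O_P(\sqrt{\log p/n})$ with \emph{no} dependence on $\|B\|_{\infty,1}$, each entry being an empirical covariance of a coordinate of $x_i$ against a bounded-variance linear functional of $x_i$; the analogue holds with $\widehat\Sig^\mA,\Sig^\mA,n_\mA$. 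Combined with $\|\widehat\Sig-\Sig\|_{\infty,\infty}=O_P(\sqrt{\log p/n})$ and $\|\widehat\Sig^\mA-\Sig^\mA\|_{\infty,\infty}=O_P(\sqrt{\log p/n_\mA})$, this shows on an event of probability $1-o(1)$ that $\Delta^\mA$ is feasible for (\ref{eq-Delta-init}) with $\lam_\Delta\asymp\sqrt{\log p/n}$, and that the bias-corrected matrix $\widehat\Delta^{(init)}+\widehat\Omega^{(\textup{CL})}(\widehat\Sig^\mA-\widehat\Sig-\widehat\Sig\widehat\Delta^{(init)})$ lies within $2\lam_\Delta$ of $\Delta^\mA$ in $\|\cdot\|_{\infty,\infty}$: it equals $\Delta^\mA+(I_p-\widehat\Omega^{(\textup{CL})}\widehat\Sig)(\widehat\Delta^{(init)}-\Delta^\mA)+\widehat\Omega^{(\textup{CL})}\,R$ for a covariance-fluctuation term $R$ of order $\lam_\Delta$, and both correction terms are $O_P(\lam_\Delta)$ using the CLIME constraint satisfied by $\widehat\Omega^{(\textup{CL})}$, the identity $\Sig^\mA-\Sig-\Sig\Delta^\mA=0$, and the column-$\ell_1$ control of $\widehat\Delta^{(init)}-\Delta^\mA$ from the next step; hence $\Delta^\mA$ is feasible for (\ref{eq-DeltaA-est}) as well, so $\|\widehat\Delta^{(init)}\|_1\le\|\Delta^\mA\|_1$, $\|\widehat\Delta^\mA\|_1\le\|\Delta^\mA\|_1$, and the usual cone inequalities are available.

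Converting constraints to $\ell_2$ errors uses a restricted-eigenvalue bound for $\widehat\Sig$ and $\widehat\Sig^\mA$ on cone vectors, which holds when $s^2\log p\le c_3 n$ (this is where that hypothesis is used). For Step~1 each column of $\Delta^\mA$ lies in an $\ell_1$-ball of radius $h$, so the standard $\ell_1$-ball rate gives $\max_j\|(\widehat\Delta^{(init)}-\Delta^\mA)_{.,j}\|_2^2=O_P(h\lam_\Delta\wedge h^2)=O_P(h\delta_n)$; the refinement (\ref{eq-DeltaA-est}) is an $\ell_1$-re-minimization of a matrix within $2\lam_\Delta$ of $\Delta^\mA$ in $\|\cdot\|_{\infty,\infty}$, so $\widehat\Delta^\mA$ keeps the $O_P(h\delta_n)$ column-$\ell_2$ error, stays within $4\lam_\Delta$ of $\Delta^\mA$ entrywise, and — since its rows are now $\ell_1$-bounded like those of $\Delta^\mA$ — also satisfies $\max_j\|(\widehat\Delta^\mA-\Delta^\mA)_{j,.}\|_2^2=O_P(h\delta_n)$. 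In Step~2 the target $\Omega$ is genuinely $s$-column-sparse, and I would bound $\widehat\Theta-\Omega$ by separating the CLIME error against the oracle right-hand side $(\Delta^\mA+I_p)^\intercal$ (which should contribute $O_P(s\log p/n_\mA)$ to each $\|\widehat\Theta_j-\Omega_j\|_2^2$, via the $n_\mA$-sample covariance and the $s$-sparsity of $\Omega$) from the effect of replacing it by $(\widehat\Delta^\mA+I_p)^\intercal$ (contributing $(\Sig^\mA)^{-1}(\widehat\Delta^\mA-\Delta^\mA)^\intercal$, whose $j$-th column has $\ell_2$-norm bounded by the row-$\ell_2$ error of $\widehat\Delta^\mA$, hence $O_P(h\delta_n)$); this should give $\tfrac1p\|\widehat\Theta-\Omega\|_F^2\vee\|\widehat\Theta_j-\Omega_j\|_2^2=O_P(\tfrac{s\log p}{n_\mA}+h\delta_n)$, while the single-study CLIME obeys $\tfrac1p\|\widehat\Omega^{(\textup{CL})}-\Omega\|_F^2\vee\|\widehat\Omega^{(\textup{CL})}_j-\Omega_j\|_2^2=O_P(\tfrac{s\log p}{n})$ by the same $\ell_0$ argument. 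Finally, sample splitting makes $\widetilde{\Sig}$ independent of $(\widehat\Omega^{(\textup{CL})},\widehat\Theta)$, so $\hat v_j$ is the empirical least-squares coefficient vector for projecting $e_j$ onto $\mathrm{span}\{\widehat\Omega^{(\textup{CL})}_j,\widehat\Theta_j\}$ in the $\widetilde{\Sig}$-geometry, and a two-candidate model-selection-aggregation bound in the spirit of \citet{Tsybakov14} gives $\|\widehat\Omega_j-\Omega_j\|_2^2\lesssim\min\{\|\widehat\Omega^{(\textup{CL})}_j-\Omega_j\|_2^2,\|\widehat\Theta_j-\Omega_j\|_2^2\}+O_P(1/n)$; taking the minimum of the two rates and using $1/(n_\mA+n)\asymp\min(1/n_\mA,1/n)$ rewrites the bound as $\tfrac{s\log p}{n_\mA+n}+h\delta_n\wedge\tfrac{s\log p}{n}+\tfrac1n$, and averaging (or maxing, via union bounds over the $p$ subproblems, which only changes constants) over $j$ gives the Frobenius version.

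I expect the Step~2 analysis to be the main obstacle: with $\lam_\Theta\asymp\sqrt{\log p/n_\mA}$ the true $\Omega$ is not exactly feasible for (\ref{eq-Theta-est}) once the $\widehat\Delta^\mA$-error is absorbed into the right-hand side, so one must construct a feasible reference point near $\Omega$ — exploiting the row-$\ell_1$-sparse structure of $\widehat\Delta^\mA$ delivered by the refinement — whose sparsity is controlled well enough that the cone argument still produces the $O_P(s\log p/n_\mA+h\delta_n)$ rate rather than a dense-perturbation rate. The two other delicate points are (a) the debiasing identity for (\ref{eq-DeltaA-est}), which must carry the column-only control of $\widehat\Delta^{(init)}$ through the non-symmetric operator $I_p-\widehat\Omega^{(\textup{CL})}\widehat\Sig$ and out the other side as a row-$\ell_1$-sparse, entrywise-accurate $\widehat\Delta^\mA$; and (b) obtaining exactly an $O_P(1/n)$ aggregation remainder under only the Gaussian tail assumptions, since $\widehat W(j)$ is random and becomes near-singular precisely when the two candidates nearly coincide, a degeneracy that has to be handled on a separate event (on which one candidate is already within the stated rate).
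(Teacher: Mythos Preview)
Your roadmap matches the paper's proof. For the two obstacles you flag: the paper's Step~2 reference point is precisely $\Theta^*_j=\Omega_j+\Omega^{\mA}r_j(\widehat\Delta^{\mA}-\Delta^{\mA})$, analyzed via a Lasso surrogate and a BRT-type oracle inequality (Theorem~6.1 of \citet{BRT09}) with $\Omega_j$ playing the role of sparse approximation to $\Theta^*_j$; and in Step~3 the paper writes $(\widehat\Omega^{(\textup{CL})}_j,\widehat\Theta_j)=U\Lambda V^\intercal$ and controls $\|\Lambda V^\intercal(\hat v_j-v_j)\|_2$ directly through the quadratic form $(\hat v_j-v_j)^\intercal\widehat W(j)(\hat v_j-v_j)\ge\|\Lambda V^\intercal(\hat v_j-v_j)\|_2^2\,\Lambda_{\min}(U^\intercal\widetilde\Sig U)$, so the near-singularity of $\widehat W(j)$ never enters and no separate event is needed. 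One small correction: the Frobenius bound comes from summing the $p$ basic inequalities and applying Markov's inequality to $\sum_j\|U_j^\intercal(e_j-\widetilde\Sig\Omega_j)\|_2^2$ (whose mean is $O(p/n)$), not by a union bound over $j$, which would cost an extra $\log p$ on the $1/n$ aggregation term.
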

Theorem \ref{thm2} demonstrates that under proper choice of tuning parameters, upper bounds can be obtained in column-wise $\ell_2$-norm and in Frobenius norm. The sparsity condition $s\sqrt{\log p}\lesssim\sqrt{n}$ guarantees  a sufficiently fast convergence rate of $\widehat{\Delta}^{\mA}$ and the restricted eigenvalue conditions in Step 2. This sparsity condition has also been considered in \citet{Cai16, Tiger} for establishing the minimax optimality results.

 We first explain the convergence rate of $\widehat{\Omega}$ in column-wise $\ell_{2}$-norm. As all the $\Omega^{(k)}$, $k\in\mA$, share the column-wise $s$-sparse matrix $\Omega$,
  the term $s\log p/(n_{\mA}+n)$ comes from estimating $\Omega$ based on $n_{\mA}+n$ independent samples. 
The term $h\delta_n$ comes from the convergence rate of  $\widehat{\Delta}^{\mA}$ in row-wise $\ell_2$-norm. It is dominated by a relatively small sample size $n$ because the divergence matrix can only be identified based on the primary samples. 
The minimal term $h\delta_n\wedge s\log p/n$ is the faster convergence rate achieved by $\widehat{\Theta}$ and $\widehat{\Omega}^{(\textup{CL})}$, which is a consequence of the least square aggregation performed in Step 3. However, there is a cost of aggregation, which is $O_P(n^{-1})$ in the current problem and it is negligible in most parameter spaces of interest. 

 To understand the gain of transfer learning, we compare the current results with the convergence rate of CLIME in single study setting.
 \begin{remark}
 \label{re1}
Assume Condition \ref{cond1} and $s^2\log p=o(n)$. For the CLIME estimator $\widehat{\Omega}^{(\textup{CL})}$ defined in (\ref{eq-CLIME}) with $\lam_{\textup{CL}}=c_1\sqrt{\log p/n}$ with large enough $c_1$, it can be shown that for any true models in $\mathbb{G}_1(s,\infty)$,
\begin{align*}
  &\|\widehat{\Omega}^{(\textup{CL})}_j-\Omega_j\|_2^2\vee\frac{1}{p}\|\widehat{\Omega}^{(\textup{CL})}-\Omega\|_F^2=O_P\left(\frac{s\log p}{n}\right)
\end{align*}
for any fixed $1\leq j\leq p$.
\end{remark}
We see that the convergence rate of $\widehat{\Omega}$ in Frobenius norm is no worse than the CLIME for any $s\geq 1$. Furthermore, $\widehat{\Omega}$ has faster convergence rate when $n_{\mA}\gg n$ and $h\delta_n\ll s\log p/n$ for $s\geq 1$. One sufficient condition for improvement is $n_{\mA}\gg n$ and $h\ll s\sqrt{\log p/n}$. That is, if the total sample size of informative auxiliary samples are much larger than the primary sample size and the similarity is sufficiently strong, then a significant amount of knowledge can be transferred by using Trans-CLIME.

\section{Inference for each entry in the graph}
\label{sec3}
In this section, we propose a debiasing scheme for inference of each entry in the graph. The main features of this method are its flexibility to couple with any initial graph estimator and  its computational efficiency.
We first introduce the rationale for our construction, then illustrate the method, and provide theoretical guarantees in the end.

\subsection{Rationale of debiasing entry-wise estimates}
To make inference of $\Omega_{i,j}$, we write it into a quadratic form:
\begin{align}
\label{quad1}
\Omega_{i,j}&=\Omega_{i}^{\intercal}\Sig\Omega_{j}=\Omega_{i}^{\intercal}\E[\Sig^n]\Omega_{j},
\end{align}
where $\Sig^n$ denotes the sample covariance matrix based on a subsample of the primary data. In many occasions, $\Sig^n$ can be computed based on all the primary data. Sometimes for a sharp theoretical analysis, sample splitting is performed and $\Sig^n$ can be computed based on a constant proportion of the primary data. Equation  (\ref{quad1}) holds for any inverse covariance matrix $\Omega$ not restricting to Gaussian random graphs.

Leveraging (\ref{quad1}), we are able to use the idea of debiasing quadratic forms \citep{CG18b} to make inference of $\Omega_{i,j}$. Specifically,  $\Omega_{i,j}$ takes the same format as the co-heritability if we view $\Omega_{i}$ and $\Omega_{j}$ as the regression coefficient vectors for two different outcomes and view $X$ as the measurements of genetic variants. Motivated by this observation, we arrive at the following debiased estimator of $\Omega_{i,j}$. Let $\Omega^{(init)}$ be any initial estimator of $\Omega$. The corresponding debiased estimator is 
\begin{align}
\Omega_{i,j}^{(db)}&=(\Omega^{(init)}_{i})^{\intercal}\Sig^n\Omega^{(init)}_{j}+(\Omega^{(init)}_{i})^{\intercal}(e_j-\widehat{\Sig}\Omega^{(init)}_{j})+(\Omega^{(init)}_{j})^{\intercal}(e_i-\Sig^n\Omega^{(init)}_{i})\nonumber\\
&=\Omega^{(init)}_{j,i}+\Omega^{(init)}_{i,j}-(\Omega^{(init)}_{j})^{\intercal}\Sig^n\Omega^{(init)}_{i}.\label{eq-Theta-db}
\end{align}
We mention that $\Omega^{(init)}$ is not necessarily symmetric and hence we distinguish $\Omega^{(init)}_{i,j}$ and $\Omega^{(init)}_{i,j}$. 
It is easy to see that the above debiasing procedure can be coupled with any $\widehat{\Omega}^{(init)}$, including, say, graphical Lasso \citep{Glasso}, CLIME \citep{Cai11},  multi-task graph estimators \citep{JointGGM, JointGLasso, Cai2016joint}, and our proposed Trans-CLIME. In comparison to \citet{Liu13} and \citet{Zhao15}, where the debiased estimators are constructed using  node-wise regression, our proposal in (\ref{eq-Theta-db}) is more flexible in incorporating various types of initial estimators. 
To distinguish the samples for constructing $\widehat{\Omega}^{(init)}$ and the samples used in $\Sig^n$, we will call the samples involved in $\Sig^n$ the debiasing samples.

  \subsection{Entry-wise confidence intervals}
  We now formally introduce the algorithm for debiasing the Trans-CLIME.
  
\vspace{0.1in}\begin{algorithm}[H]
 \SetKwInOut{Input}{Input}
    \SetKwInOut{Output}{Output}
\SetAlgoLined
 \Input{Trans-CLIME estimator $\widehat{\Omega}$, sample covariance matrix $\widetilde{\Sig}$, and confidence level $\alpha$}
\Output{Debiased estimator $\widehat{\Omega}_{i,j}^{(db)}$ and a confidence interval for $\Omega_{i,j}$}

\underline{Step 1}    
For each $1\leq i,j\leq p$,
\begin{equation}
\label{eq-db}
 \widehat{\Omega}_{i,j}^{(db)}
 =\widehat{\Omega}_{j,i}+\widehat{\Omega}_{i,j}-\widehat{\Omega}_j^{\intercal}\widetilde{\Sig}\widehat{\Omega}_{i}.
\end{equation}

  Estimate the variance of $\widehat{\Omega}_{i,j}^{(db)}$ via
  \[
     \widehat{V}_{i,j}=\widehat{\Omega}_{i,i}\widehat{\Omega}_{j,j}+ \widehat{\Omega}_{i,j}\widehat{\Omega}_{j,i}.
\]
\underline{Step 2}. $100\times (1-\alpha)$\% two-sided confidence interval for $\Omega_{i,j}$ is
\[
   \widehat{\Omega}_{i,j}^{(db)}\pm z_{1-\alpha/2} (\widehat{V}_{i,j}/n)^{1/2}.
\]
 \caption{\textbf{Confidence interval for $\Omega_{i,j}$}} 
 \label{alg-db}
\end{algorithm}
In Algorithm \ref{alg-db}, we only use a proportion of primary data, i.e., those involved in $\widetilde{\Sig}$, as debiasing samples, while the realization of $\widehat{\Omega}$ involves both primary and auxiliary information. This is because first, only the primary data are known to be unbiased; second, the samples involved in $\widetilde{\Sig}$ are ``weakly'' dependent with $\widehat{\Omega}$ and can provide a relatively sharp convergence rate. 
The variance estimator $\widehat{V}_{i,j}$ is based on the limiting distribution of $\widehat{\Omega}^{(db)}_{i,j}$ given that the observations are Gaussian distributed. 
 
  \subsection{Theoretical results of debiased Trans-CLIME estimator}
\label{sec3-3}

  \begin{theorem}[Asymptotic normality for debiased Trans-CLIME]
\label{thm-db}
Under the conditions of Theorem \ref{thm2}, for any true models in $\mathbbm{G}_1(s,h)$ and any fixed $1\leq i, j\leq p$,
\[
\widehat{\Omega}_{i,j}^{(db)}-\Omega_{i,j}=\widehat{\zeta}_{i,j}+\widehat{T}_{i,j},
\]
where
\[
   \frac{\sqrt{n}\widehat{\zeta}_{i,j}}{V_{i,j}^{1/2}}\xrightarrow{D}  N(0,1),
\]
$V_{i,j}=\Omega_{i,i}\Omega_{j,j}+\Omega^2_{i,j}$, and
\[
\widehat{T}_{i,j}=O_P\left(\frac{s\log p}{n_{\mA}+n}+h\delta_n\wedge \frac{s\log p}{n}\right)+o_P(n^{-1/2}),
\]
The variance estimator satisfies, for any $1\leq i,j\leq p$,
\[
   |\widehat{V}_{i,j}-V_{i,j}|\leq C\widehat{T}_{i,j}^{1/2}~~\text{for some constant $C>0$}.
\]
\end{theorem}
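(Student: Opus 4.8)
The plan is to derive an exact algebraic decomposition of $\widehat\Omega_{i,j}^{(db)}-\Omega_{i,j}$, isolate its dominant term, prove a central limit theorem for that term, and bound the rest using Theorem~\ref{thm2}. Write $\widehat u_m=\widehat\Omega_m-\Omega_m$ for the column-wise estimation error, $E=\widetilde{\Sig}-\Sig$, and recall the population identities $\Sig\Omega_m=e_m$ and $\Omega_{i,j}=\Omega_i^\intercal\Sig\Omega_j$. Substituting $\widehat\Omega_m=\Omega_m+\widehat u_m$ and $\widetilde{\Sig}=\Sig+E$ into (\ref{eq-db}) and expanding, every term that is linear in $\widehat u$ alone cancels and one is left with the identity
\begin{equation*}
\widehat\Omega_{i,j}^{(db)}-\Omega_{i,j}=-\,\Omega_i^\intercal E\,\Omega_j-\widehat u_i^\intercal\Sig\,\widehat u_j-\Omega_i^\intercal E\,\widehat u_j-\widehat u_i^\intercal E\,\Omega_j-\widehat u_i^\intercal E\,\widehat u_j .
\end{equation*}
I would set $\widehat\zeta_{i,j}:=-\Omega_i^\intercal E\,\Omega_j$ and let $\widehat T_{i,j}$ collect the remaining four terms.

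For the normal limit, note that $\widehat\zeta_{i,j}=-|\mathcal I^c|^{-1}\sum_{l\in\mathcal I^c}\{(\Omega_i^\intercal x_l)(\Omega_j^\intercal x_l)-\Omega_{i,j}\}$ is a centered average of i.i.d.\ terms. Since $(\Omega_i^\intercal x_l,\Omega_j^\intercal x_l)$ is bivariate normal with variances $\Omega_{i,i},\Omega_{j,j}$ and covariance $\Omega_{i,j}$, Isserlis' formula gives $\mathrm{Var}\{(\Omega_i^\intercal x_l)(\Omega_j^\intercal x_l)\}=\Omega_{i,i}\Omega_{j,j}+\Omega_{i,j}^2=V_{i,j}$ and a uniformly bounded fourth moment, while Condition~\ref{cond1} keeps $V_{i,j}$ bounded away from $0$ and $\infty$. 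The Lyapunov CLT then yields $\sqrt{|\mathcal I^c|}\,\widehat\zeta_{i,j}/V_{i,j}^{1/2}\xrightarrow{D}N(0,1)$; as $|\mathcal I^c|$ is a fixed fraction of $n$ (and equals $n$ in the full-sample version of the debiasing step), this is the asserted convergence at rate $\sqrt n$.

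For $\widehat T_{i,j}$, the purely quadratic term is handled deterministically: $|\widehat u_i^\intercal\Sig\,\widehat u_j|\le\Lambda_{\max}(\Sig)\|\widehat u_i\|_2\|\widehat u_j\|_2$, and Theorem~\ref{thm2} at the fixed indices $i,j$ gives $\|\widehat u_i\|_2\|\widehat u_j\|_2=O_P(\frac{s\log p}{n_{\mA}+n}+h\delta_n\wedge\frac{s\log p}{n}+\frac1n)$, whose $n^{-1}$ piece is $o_P(n^{-1/2})$. The term $\widehat u_i^\intercal E\,\widehat u_j$ is bounded by $\|E\|_{\infty,\infty}\|\widehat u_i\|_1\|\widehat u_j\|_1$, which is $o_P(n^{-1/2})$ because $\|E\|_{\infty,\infty}=O_P(\sqrt{\log p/n})$, the column-wise $\ell_1$ errors of Trans-CLIME are of the usual CLIME order, and $s^2\log p\lesssim n$. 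The hard part will be the two bilinear cross terms $\Omega_i^\intercal E\,\widehat u_j$ and $\widehat u_i^\intercal E\,\Omega_j$, since $\widehat\Omega_j$ is \emph{not} independent of $\widetilde{\Sig}$: the matrix $\widetilde{\Sig}$ enters Step~3 of the algorithm. The resolution is to exploit that $\widehat\Omega_j=(\widehat\Omega^{(\textup{CL})}_j,\widehat\Theta_j)\widehat v_j$ with $\widehat\Omega^{(\textup{CL})}_j$ and $\widehat\Theta_j$ independent of $\widetilde{\Sig}$, and $\widehat v_j\in\R^2$ a two-dimensional, hence ``weakly dependent'', functional of $\widetilde{\Sig}$ that concentrates at rate $n^{-1/2}$ around a limit $v_j^*$ that is deterministic conditionally on the independent estimators. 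Decomposing $\widehat u_j=\{(\widehat\Omega^{(\textup{CL})}_j,\widehat\Theta_j)v_j^*-\Omega_j\}+(\widehat\Omega^{(\textup{CL})}_j,\widehat\Theta_j)(\widehat v_j-v_j^*)$, the first bracket is independent of $\widetilde{\Sig}$, so $\Omega_i^\intercal E$ applied to it is a centered linear statistic in $E$ with conditional variance of order $\|(\widehat\Omega^{(\textup{CL})}_j,\widehat\Theta_j)v_j^*-\Omega_j\|_2^2/n=o_P(1/n)$, hence $o_P(n^{-1/2})$; the second bracket contributes $O_P(\cdot)\times O_P(n^{-1/2})$ with the first factor $o_P(1)$. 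Combining the four bounds yields $\widehat T_{i,j}=O_P(\frac{s\log p}{n_{\mA}+n}+h\delta_n\wedge\frac{s\log p}{n})+o_P(n^{-1/2})$.

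Finally, for the variance estimator write $\widehat V_{i,j}-V_{i,j}=(\widehat\Omega_{i,i}-\Omega_{i,i})\widehat\Omega_{j,j}+\Omega_{i,i}(\widehat\Omega_{j,j}-\Omega_{j,j})+(\widehat\Omega_{i,j}-\Omega_{i,j})\widehat\Omega_{j,i}+\Omega_{i,j}(\widehat\Omega_{j,i}-\Omega_{j,i})$. The true entries are bounded under Condition~\ref{cond1} and the estimated ones are consistent, hence $O_P(1)$, so $|\widehat V_{i,j}-V_{i,j}|\le C(\|\widehat u_i\|_\infty+\|\widehat u_j\|_\infty)\le C(\|\widehat u_i\|_2+\|\widehat u_j\|_2)$; by Theorem~\ref{thm2} the right-hand side is $O_P((\frac{s\log p}{n_{\mA}+n}+h\delta_n\wedge\frac{s\log p}{n}+\frac1n)^{1/2})$, which is of order $\widehat T_{i,j}^{1/2}$ (the $n^{-1}$ term under the root contributing $n^{-1/2}$, absorbed into the constant), establishing the last claim.
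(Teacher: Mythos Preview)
Your decomposition and the identification of $\widehat\zeta_{i,j}=-\Omega_i^\intercal E\Omega_j$ match the paper exactly, and your plan for the two bilinear cross terms $\Omega_i^\intercal E\,\widehat u_j$ and $\widehat u_i^\intercal E\,\Omega_j$ --- splitting $\widehat u_j$ via a fixed aggregation weight $v_j$ so that the leading piece is independent of $\widetilde\Sigma$ and the residual piece lives in the two-dimensional range of $(\widehat\Omega^{(\textup{CL})}_j,\widehat\Theta_j)$ --- is precisely the paper's device (it uses the SVD $(\widehat\Omega^{(\textup{CL})}_j,\widehat\Theta_j)=U\Lambda V^\intercal$ together with the bounds on $\|\Lambda V^\intercal(\hat v_j-v_j)\|_2$ established in the proof of Theorem~\ref{thm2}).

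The genuine gap is your treatment of $\widehat u_i^\intercal E\,\widehat u_j$. You bound it by $\|E\|_{\infty,\infty}\|\widehat u_i\|_1\|\widehat u_j\|_1$ and assert that ``the column-wise $\ell_1$ errors of Trans-CLIME are of the usual CLIME order.'' This claim is not available: Theorem~\ref{thm2} gives only $\ell_2$ and Frobenius guarantees, and the paper states explicitly (Section~\ref{sec3-3}) that $\widehat\Omega$ \emph{does not} have a sufficiently fast column-wise $\ell_1$-rate. The culprit is $\widehat\Theta_j$: in the regime where the approximation error $\langle\Omega_j-\Theta^*_j,\widehat\Sigma^{\mA}(\Omega_j-\Theta^*_j)\rangle$ dominates, the proof of Lemma~\ref{thm1} only yields $\|\widehat\Theta_j-\Omega_j\|_1\lesssim h\delta_n/\lambda_\Theta$, which can be much larger than the CLIME $\ell_1$-rate. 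Consequently your $\ell_1$-$\ell_\infty$ split does not deliver the required bound.

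The fix, and what the paper does, is to \emph{not} separate $\widetilde\Sigma=\Sigma+E$ in the quadratic term. Keep $R_{2,i,j}=(\widehat\Omega_j-\Omega_j)^\intercal\widetilde\Sigma(\widehat\Omega_i-\Omega_i)$ intact and apply the same $v_j$/SVD decomposition to \emph{both} factors. The ``independent'' piece $(\widehat\Omega^{v}_j-\Omega_j)^\intercal\widetilde\Sigma(\widehat\Omega^{v}_i-\Omega_i)$ is then controlled purely by $\ell_2$-norms because $\widetilde\Sigma$ concentrates around $\Sigma$ on the fixed direction of $\widehat\Omega^{v}_j-\Omega_j$; the remaining cross piece $(\widehat\Omega_j-\widehat\Omega^{v}_j)^\intercal\widetilde\Sigma(\widehat\Omega_i-\widehat\Omega^{v}_i)$ is bounded by $\|\Lambda V^\intercal(\hat v-v)\|_2^2\,\|U^\intercal\widetilde\Sigma U\|_2$, and $\|U^\intercal\widetilde\Sigma U\|_2=O_P(1)$ since $U$ has two orthonormal columns independent of $\widetilde\Sigma$. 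This avoids $\ell_1$-control of $\widehat\Omega$ altogether.
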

The term $\widehat{\zeta}_{i,j}$ is the asymptotic normal part. It has convergence rate $n^{-1/2}$ as $\widehat{\Omega}^{(db)}_{i,j}$ only incorporates primary data as debiasing samples.
The term $\widehat{T}_{i,j}$ is the remaining bias of $\widehat{\Omega}^{(db)}_{i,j}$ and its rate is dominated by the rate of $\widehat{\Omega}$ in column-wise $\ell_2^2$-norm. We conclude from Theorem \ref{thm-db} that the convergence rate of $\widehat{\Omega}^{(db)}_{i,j}$ is 
\[n^{-1/2}+\frac{s\log p}{n_{\mA}+n}+h\delta_n\wedge \frac{s\log p}{n}.
\]
In comparison, the minimax optimal rate for estimating $\Omega_{i,j}$ in $\mathbb{G}_1(s,\infty)$ is $n^{-1/2}+s\log p/n$ \citep{Zhao15}. We see that the convergence rate in the transfer learning setting is always no worse than the rate in the single study setting. 

We now discuss the improvement with transfer learning. For the asymptotic normality to hold, one  requires $\widehat{T}_{i,j}=o_P(n^{-1/2})$, which gives the sparsity condition that
\begin{equation}
\label{ssc1}
s\log p\ll n_{\mA}/\sqrt{n}~~\text{and}~~ h\delta_n\wedge s\log p/n\ll n^{-1/2}.
\end{equation}
In comparison, the sparsity condition given by the minimax rate in single study setting is $s\log p\ll\sqrt{n}$. We see that the sparsity condition in (\ref{ssc1}) is weaker when $ h\delta_n\ll s\log p/n$ and $n_{\mA}\gg n$. From the discussion below Theorem \ref{thm2}, we  conclude that if $\widehat{\Omega}$ has faster convergence rate than the single-study minimax estimator CLIME, then inference based on Trans-CLIME requires weaker sparsity conditions.

The main challenge in deriving Theorem \ref{thm-db} is that $\widehat{\Omega}$ does not have a sufficiently fast convergence rate in column-wise $\ell_1$-norm. 
As a result, we can only utilize the estimation guarantees in  column-wise $\ell_2$-norm. As far as we know, existing analysis of debiased procedures that only involve $\ell_2$-guarantees either require the debiasing  samples and the initial estimators to be independent or require stronger technical conditions \citep{CG17, JM18}. 
In the current analysis, $\widetilde{\Sig}$ is dependent with $\widehat{\Omega}$ due to the aggregation step. We carefully analyze this dependence and conclude a desirable bound without extra conditions or extra sample splits.

As we have mentioned at the beginning of this section, our proposed debiasing scheme can be applied to many other initial estimators for different purposes. In the Supplementary Materials, we prove that applying the proposed debiasing scheme to $\widehat{\Omega}^{(\textup{CL})}$, termed as debiased CLIME, is rate optimal for $\Omega_{i,j}$ in $\mathbb{G}_q(s,\infty)$. The debiased CLIME has the same asymptotic distribution as the node-wise regression estimators considered in \citet{Liu13} and \citet{Zhao15}.

\section{Edge detection with FDR control }
\label{sec4}

 An important task regarding the graphical models is edge detection with uncertainty quantification. 
That is, we consider testing
  \[
     (H_0)_{i,j}:  \Omega_{i,j}=0 ~~1\leq i< j\leq p.
  \]
This is a multiple testing problem with $q=p(p-1)/2$ hypotheses to test in total. For the uncertainty quantification, we consider
the false discovery proportion (FDP) and false discovery rate (FDR). Let $\widehat{\mathcal{R}}$ denote the set of rejected null hypothesis. The FDP and FDR are defined as, respectively,
  \[
     \text{FDP}(\widehat{\mathcal{R}})=\frac{\sum_{(i,j)\in\widehat{\mathcal{R}}} \mathbbm{1}((i,j)\in\mathcal{H}_0)}{|\widehat{\mathcal{R}}|\vee 1}~~\text{and}~~\text{FDR}(\widehat{\mathcal{R}})=\E[ \text{FDP}(\widehat{\mathcal{R}})],
  \]
where  $\mathcal{H}_0$ is the set of true nulls.
  Many algorithms have been proposed and studied for FDR and FDP control in various settings. Especially, \citet{Liu13} proposes an FDR control algorithm for GGMs which can be easily combined with our proposed debiased estimator.  The proposed procedure is presented as Algorithm \ref{alg-fdr}.

\vspace{0.1in}\begin{algorithm}[H]
 \SetKwInOut{Input}{Input}
    \SetKwInOut{Output}{Output}
\SetAlgoLined
 \Input{$\{\widehat{\Omega}^{(db)}_{i,j}\}_{i<j}$, $\{\widehat{V}_{i,j}\}_{i<j}$, and FDR level $\alpha$}
\Output{A set of selected edges $\widehat{\mathcal{R}}$}
\underline{Step 1}.
For $1\leq i<j\leq p$, let
  \[
      \widehat{z}_{i,j}=\frac{\sqrt{n}\widehat{\Omega}_{i,j}^{(db)}}{\widehat{V}_{i,j}^{1/2}},
  \]
  where $\widehat{\Omega}_{i,j}^{(db)}$ and $\widehat{V}_{i,j}^{1/2}$ are defined in Algorithm \ref{alg-db}.

\underline{Step 2}.
  \begin{equation}
  \label{eq-that}
        \hat{t}=\inf\left\{t\in[0,\sqrt{2\log q-2\log\log q}]:~~\frac{q\Phi^c(t)}{\max\{\sum_{1\leq i<j\leq p}\mathbbm{1}(|\widehat{z}_{i,j}|\geq t),1\}}\leq \alpha\right\}.
  \end{equation}
  If (\ref{eq-that}) does not exist, we set $\hat{t}=\sqrt{2\log q}$.
  
  \underline{Step 3}.
  The rejected hypotheses are
\[
\widehat{\mathcal{R}}=\{(i,j): |\widehat{z}_{i,j}|\geq \hat{t},1\leq i<j\leq p\}.
\]

 \caption{\textbf{Edge detection with FDR control at level $\alpha$}} \label{alg-fdr}
\end{algorithm}

\subsection{Theoretical background for multiple testing}
Let $q_0=|\mathcal{H}_0|$ denote the cardinality of $\mathcal{H}_0$ and $q=(p^2-p)/2$ denote the total number of hypotheses to test.
Define a subset of  random variables ``highly'' correlated with the $i$-th variable
\[
\mathcal{C}_i(\gam)=\left\{j:1\leq j\leq p,j\neq i,|\Omega_{i,j}|\geq (\log p)^{-2-\gam}\right\}.
\]
  \begin{theorem}[FDR control]
\label{thm-fdr}
Let $p\leq n^r$ for some $r>0$ and $q_0\geq cp^2$ for some $c>0$. 
Assume the conditions of Theorem \ref{thm2},
\[
 s(\log p)^{3/2}\ll n_{\mA}/\sqrt{n}, ~~ h\delta_n\wedge s\log p/n\ll (n\log p)^{-1/2}
  \] and $\max_{1\leq i\leq p} |\mathcal{C}_i(\gam)|=O(p^\rho)$ for some $\rho<1/2$ and $\gam>0$.
We have
\begin{align*}
\lim_{(n,p)\rightarrow \infty} \frac{\textup{FDR}(\widehat{\mathcal{R}})}{\alpha q_0/q}=1~\text{and}~\frac{\textup{FDP}(\widehat{\mathcal{R}})}{\alpha q_0/q}\rightarrow 1\text{in probability}
\end{align*}
as $(n,p)\rightarrow \infty$.
\end{theorem}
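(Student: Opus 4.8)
The plan is to follow the now-standard route for BH-type FDR guarantees under weak dependence, as developed by \citet{Liu13}, reducing the statement to two ingredients: a uniform Gaussian approximation for the marginal tails of the studentized statistics $\widehat z_{i,j}$, and a second-moment bound showing that the false-discovery count concentrates around its mean. Write $V(t)=\sum_{(i,j)\in\mathcal{H}_0}\mathbbm{1}(|\widehat z_{i,j}|\ge t)$ and $R(t)=\sum_{1\le i<j\le p}\mathbbm{1}(|\widehat z_{i,j}|\ge t)$, so that $\mathrm{FDP}(\widehat{\mathcal{R}})=V(\hat t)/(R(\hat t)\vee 1)$. The first preliminary step is to upgrade the conclusions of Theorems \ref{thm2} and \ref{thm-db} from ``any fixed $(i,j)$'' to uniformly over $1\le i,j\le p$: the high-probability events behind those proofs (restricted eigenvalue bounds, $\ell_\infty$ control of the residual matrices, the column-wise $\ell_2$ rate) hold simultaneously over the $p$ columns after a union bound already absorbed into the $\log p$ factors, which, together with the strengthened sparsity conditions $s(\log p)^{3/2}\ll n_{\mA}/\sqrt n$ and $h\delta_n\wedge s\log p/n\ll(n\log p)^{-1/2}$, yields $\max_{i,j}\sqrt n\,|\widehat T_{i,j}|=o_P((\log p)^{-1/2})$ and $\max_{i,j}|\widehat V_{i,j}/V_{i,j}-1|=o_P((\log p)^{-1/2})$ (re-examining the proof of Theorem \ref{thm-db}, the $o_P(n^{-1/2})$ remainder there is in fact $o_P((n\log p)^{-1/2})$ under these conditions).

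Next I would prove the marginal moderate-deviation bound: uniformly over $(i,j)\in\mathcal{H}_0$ and $t\in[0,\sqrt{2\log q}]$,
\[
\mathbb P\big(|\widehat z_{i,j}|\ge t\big)=\Phi^c(t)\,(1+o(1)).
\]
Decompose $\widehat z_{i,j}=(V_{i,j}/\widehat V_{i,j})^{1/2}\big(\sqrt n\,\widehat\zeta_{i,j}/V_{i,j}^{1/2}+\sqrt n\,\widehat T_{i,j}/V_{i,j}^{1/2}\big)$. By the previous step the two error factors perturb the statistic by $o_P((\log p)^{-1/2})$, which at the scale $t\asymp\sqrt{\log p}$ multiplies the tail only by $1+o(1)$. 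For the leading term, $\widehat\zeta_{i,j}$ is, up to negligible pieces, an average over the $\Theta(n)$ debiasing observations of the i.i.d., mean-zero, sub-exponential summands $(\Omega_i^{\intercal}x_l)(\Omega_j^{\intercal}x_l)-\Omega_{i,j}$, which have variance exactly $V_{i,j}=\Omega_{i,i}\Omega_{j,j}+\Omega_{i,j}^2$ under Condition \ref{cond1}; a Cram\'er-type moderate-deviation theorem then gives the standard normal tail uniformly for $t=o(n^{1/6})$, a range that covers $\sqrt{2\log q}$ since $p\le n^r$ forces $\log p\ll n^{1/3}$. Summing over $\mathcal{H}_0$ yields $\mathbb E[V(t)]=q_0\Phi^c(t)(1+o(1))$ uniformly.

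The crux, and the step I expect to be hardest, is to show $V(t)=\mathbb E[V(t)]\,(1+o_P(1))$ uniformly over $t\le\sqrt{2\log q-2\log\log q}$ --- the cap being exactly what keeps $\mathbb E[V(t)]\gtrsim\sqrt{\log p}\to\infty$ at the right endpoint. This needs a bound on $\mathrm{Var}(V(t))=\sum\mathrm{Cov}\big(\mathbbm{1}(|\widehat z_{i,j}|\ge t),\mathbbm{1}(|\widehat z_{k,l}|\ge t)\big)$. Under Gaussianity the dependence between $\widehat z_{i,j}$ and $\widehat z_{k,l}$ is controlled by the entries of $\Omega$ connecting $\{i,j\}$ to $\{k,l\}$, so the hypothesis $\max_i|\mathcal{C}_i(\gam)|=O(p^\rho)$ with $\rho<1/2$ guarantees that all but $O(p^{2+\rho})=o(p^4)$ pairs are weakly correlated; for those a bivariate-normal comparison gives $\mathrm{Cov}\le C(\Phi^c(t))^2 r_{ij,kl}$ with $r_{ij,kl}$ small, and their sum is $o((q_0\Phi^c(t))^2)$, while the $O(p^{2+\rho})$ strongly correlated pairs contribute at most $O(p^{2+\rho}\Phi^c(t))=o((q_0\Phi^c(t))^2)$ once $\mathbb E[V(t)]\to\infty$. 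Chebyshev's inequality, together with a discretization of $t$ over a polynomially fine grid (exploiting monotonicity of $V(t)$ and smoothness of $\Phi^c$), upgrades this to the uniform statement; the comparison inequalities must also be shown to be insensitive to the aggregation-induced dependence between $\widetilde\Sig$ and $\widehat\Omega$, which is inherited from the analysis of Theorem \ref{thm-db} rather than redone.

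Finally I would combine the ingredients. Provided $\hat t$ lies in the interior of $[0,\sqrt{2\log q-2\log\log q}]$ with probability tending to one --- guaranteed, as in \citet{Liu13}, by the presence of sufficiently many moderately strong signals --- the defining property of $\hat t$ forces $q\Phi^c(\hat t)=\alpha(R(\hat t)\vee 1)(1+o_P(1))$, and since $V(\hat t)=q_0\Phi^c(\hat t)(1+o_P(1))$ by the two preceding ingredients,
\[
\mathrm{FDP}(\widehat{\mathcal{R}})=\frac{V(\hat t)}{R(\hat t)\vee 1}=\frac{q_0\Phi^c(\hat t)(1+o_P(1))}{q\Phi^c(\hat t)/\alpha}=\frac{\alpha q_0}{q}\,(1+o_P(1)),
\]
which is the FDP claim (the assumptions $q\ge q_0\ge cp^2$ and $q=(p^2-p)/2$ keep $q_0/q$ bounded away from $0$). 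Since $\mathrm{FDP}\le 1$, bounded convergence promotes this in-probability statement to $\mathrm{FDR}(\widehat{\mathcal{R}})/(\alpha q_0/q)\to 1$.
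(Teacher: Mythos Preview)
Your outline follows the same route as the paper --- reduce to the framework of \citet{Liu13} by (a) upgrading the bias and variance-ratio bounds of Theorems~\ref{thm2}--\ref{thm-db} to uniform ones, yielding $\max_{i,j}|\widehat z_{i,j}-z_{i,j}|=o_P((\log p)^{-1/2})$, (b) invoking a Cram\'er-type moderate deviation for the i.i.d.\ sum $z_{i,j}$, and (c) controlling the covariance structure of the indicator process via the assumption on $|\mathcal{C}_i(\gamma)|$. The paper packages (a) and (b) as its Lemma on uniform convergence and then cites Theorem~3.1 of \citet{Liu13} for (c) and the final combination, so at the level of strategy you are aligned.

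There is one genuine gap. In your final paragraph you write that $\hat t$ lying in the search interval is ``guaranteed, as in \citet{Liu13}, by the presence of sufficiently many moderately strong signals.'' That signal-strength requirement is precisely condition~(12) in \citet{Liu13}, and it is \emph{not} among the hypotheses of Theorem~\ref{thm-fdr}. The paper's proof makes a point of explaining why (12) can be dropped: in \citet{Liu13} the only role of (12) is to force $q_0 G(\hat t)\to\infty$ (their equation~(14)), and here the search range has been shrunk from Liu's $[0,2\sqrt{\log p}]$ to $[0,\sqrt{2\log q-2\log\log q}]$, on which $q_0 G(t)\gtrsim\sqrt{\log q}\to\infty$ holds automatically since $q_0\asymp q$ and $G(t)\ge\phi(t)/(t+1/t)$. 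This observation, attributed to \citet{JJ19}, is what lets the theorem go through without any assumption on the alternatives. You already noticed that the cap keeps $\mathbb E[V(t)]\to\infty$ at the right endpoint; what you should do is use this fact to replace condition~(12) in Liu's argument rather than separately assume signal strength to place $\hat t$ in the interior. Without that substitution, your proof as written requires a hypothesis the theorem does not grant.
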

Theorem \ref{thm-fdr} implies that Algorithm \ref{alg-fdr} can asymptotically control FDR and FDP at nominal level under certain conditions. The sample size condition in Theorem \ref{thm-fdr} guarantees that the remaining bias of $\widehat{\Omega}^{(db)}_{i,j}$ is uniformly $o_P((n\log p)^{-1/2})$. The condition on the cardinality of $\mathcal{C}_i(\gam)$ guarantees that the $z$-statistics have mild correlations such that the FDR control is asymptotically valid. 
The proof of Theorem \ref{thm-fdr} is largely based on the proof in \citet{Liu13} and some technical improvements in \citet{JJ19}.

\section{Minimax optimal rates for $q\in[0,1]$}
\label{sec-l0}
In this section, we establish the minimax upper and lower bounds for estimation and inference of GGMs in the parameter space $\mathbb{G}_q(s,h)$ for any fixed $q\in[0,1]$. In practice, the setting with $q\in[0,1)$ can imply relatively strong similarity conditions. Hence, we only provide the theoretical results for $q\in[0,1)$. 

\subsection{Optimal rates under Frobenius norm}

\begin{theorem}[Minimax bounds under Frobenius norm]
\label{thm-mini-frob2}
Assume Condition \ref{cond1} and $3<s\log p<c_1n$ for some small constant $c_1$. (i) If $ h\leq c_2n/\log p$ for some small enough constant $c_2$, then for some positive constants $C_1$, $C_2$ and $C_3$,
\begin{align*}
&\inf_{\widehat{\Omega}}\sup_{\mathbb{G}_0(s,h)}\P\left(\frac{1}{p}\|\widehat{\Omega}-\Omega\|_F^2\geq C_1\left\{\frac{s\log p}{n_{\mA_0}+n}+(h\wedge s)\frac{\log p}{n}\right\}\right)>1/4.\\
&\inf_{\widehat{\Omega}}\sup_{\mathbb{G}_0(s,h)}\P\left(\frac{1}{p}\|\widehat{\Omega}-\Omega\|_F^2\geq C_2\left\{\frac{s\log p}{n_{\mA_0}+n}+(h\wedge s)\frac{\log p}{n}\right\}\right)\leq \exp(-C_3\log p).
\end{align*}
(ii) If $ h^q(\log p/n)^{1-q/2}<c_4<\infty$. Then for any fixed $q\in(0,1]$, there are some positive constants $C_4$, $C_5$, and $C_6$ such that
\begin{align*}
&\inf_{\widehat{\Omega}}\sup_{\mathbb{G}_q(s,h)}\P\left(\frac{1}{p}\|\widehat{\Omega}-\Omega\|_F^2\geq C_4\left\{\frac{s\log p}{n_{\mA_q}+n}+h^q\delta_n^{2-q}\wedge \frac{s\log p}{n}\right\}\right)>1/4.\\
&\inf_{\widehat{\Omega}}\sup_{\mathbb{G}_q(s,h)}\P\left(\frac{1}{p}\|\widehat{\Omega}-\Omega\|_F^2\geq C_5\left\{\frac{s\log p}{n_{\mA_q}+n}+h^q\delta_n^{2-q}\wedge \frac{s\log p}{n}\right\}\right)\leq \exp(-C_6\log p).
\end{align*}
\end{theorem}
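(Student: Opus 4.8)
The plan is to prove the two-point (actually multi-point) minimax lower bounds by a Fano/Assouad-type argument on a carefully constructed sub-family of $\mathbb{G}_q(s,h)$, and to obtain the matching upper bounds (the $\exp(-C\log p)$ statements) by analyzing an estimator that solves the same moment equations as Trans-CLIME with a high-probability (rather than in-probability) control on the noise terms. I would split the argument into a \emph{lower bound} part and an \emph{upper bound} part, and within the lower bound further decompose the rate into its two additive pieces, building one packing set for each.

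For the lower bound, the key is that the target rate has two regimes. The term $\tfrac{s\log p}{n_{\mathcal{A}_q}+n}$ is the ``pooled-sample'' rate: here I would take \emph{all} precision matrices equal, $\Omega=\Omega^{(1)}=\cdots=\Omega^{(K)}$, so that $\mathcal{D}_q=0\le h$ trivially and all $n+n_{\mathcal{A}_q}$ samples are i.i.d.\ from the same model; then I invoke the standard sparse-precision-matrix packing construction (as in \citet{ACLIME, Zhao15}), perturbing an $s$-sparse pattern per column by $\pm a\sqrt{\log p/(n_{\mathcal{A}_q}+n)}$ entries while keeping eigenvalues in $[1/C,C]$, and apply Fano's inequality with the KL budget $\lesssim (n+n_{\mathcal{A}_q})\|\Omega_1-\Omega_2\|_F^2$. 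The second term $h^q\delta_n^{2-q}\wedge \tfrac{s\log p}{n}$ is the ``divergence'' rate and is the genuinely new part: I would fix $\Omega$ (well-conditioned, sparse) and instead perturb the \emph{auxiliary} covariances, i.e.\ perturb $\Delta^{(k)}$ within the $\ell_q$-ball of radius $h$, arranging that the target samples alone must resolve a perturbation of $\Omega$ of the corresponding size. Concretely, when $h\le s$ (resp.\ $h^q(\log p/n)^{1-q/2}$ small), one can ``spread'' the signal over roughly $h^{q}\delta_n^{-q}$ coordinates each of magnitude $\delta_n$, and since only the $n$ target samples are informative about $\Delta^{\mathcal{A}}$, Fano gives the $h^q\delta_n^{2-q}$ bound; the $\wedge\, s\log p/n$ cap is the point beyond which the pure single-study construction (perturb $\Omega$, ignore the auxiliaries since they can always be taken far away) already dominates. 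Taking the larger of the two sub-family bounds — or, to be careful about the ``$1/4$'' constant, combining them into a single product construction — yields the stated constant-probability lower bound; part (i) with $q=0$ is the limiting case where $h^q\delta_n^{2-q}\to h$ and the count of perturbed coordinates is exactly $h\wedge s$.

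For the upper bound (the $\le \exp(-C_3\log p)$ / $\exp(-C_6\log p)$ statements), I would run essentially the Trans-CLIME analysis of Theorem~\ref{thm2} but replace every $O_P$ bound on an empirical-process term (entries of $\widehat\Sig-\Sig$, $\widehat\Sig^{\mathcal{A}}-\Sig^{\mathcal{A}}$, and the cross terms in the aggregation step) by its Gaussian concentration inequality, each of which fails with probability at most $p^{-c}$; a union bound over the $O(p^2)$ relevant entries and the $p$ columns then gives a single exceptional event of probability $\le \exp(-C\log p)$, off which the deterministic cone/restricted-eigenvalue argument (using $s\log p \le c_1 n$ for the RE condition and the constraint on $h$ to control the $\Delta^{\mathcal{A}}$ step) delivers the claimed Frobenius bound. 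For general $q\in(0,1]$ the only change is that the $\ell_q$-to-$\ell_2$ interpolation replaces $h\delta_n\wedge s\log p/n$ by $h^q\delta_n^{2-q}\wedge s\log p/n$ in the thresholding step for $\widehat\Delta^{\mathcal{A}}$; the condition $h^q(\log p/n)^{1-q/2}<c_4$ is exactly what keeps this term at a benign order and keeps $\widehat\Sig^{\mathcal{A}}$ close enough to $\Sig$ for the RE condition in Step~2 to survive.

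The main obstacle I anticipate is the lower-bound construction for the divergence term with $q\in(0,1)$: one must simultaneously (a) keep all of $\Omega,\Omega^{(1)},\dots,\Omega^{(K)}$ positive definite with eigenvalues in $[1/C,C]$, (b) make the $\ell_q$-norm of each $\Delta^{(k)}$ at most $h$ while packing enough distinguishable hypotheses to extract the $h^q\delta_n^{2-q}$ rate, and (c) ensure the auxiliary samples genuinely carry \emph{no} usable information about the perturbation of $\Omega$, so that the effective sample size in Fano is $n$ and not $n+n_{\mathcal{A}_q}$. Achieving (c) together with (a) is delicate because $\Sig^{(k)}=\Omega^{-1}(\Delta^{(k)}+I)$ couples the auxiliary covariance to $\Omega$; the resolution is to choose the perturbations of $\Delta^{(k)}$ to lie in a subspace ``orthogonal'' to the perturbation directions of $\Omega$, so that the auxiliary likelihood is (to leading order) insensitive to the target perturbation. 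The bookkeeping of the two competing regimes — i.e.\ verifying that the minimum $h^q\delta_n^{2-q}\wedge s\log p/n$, and not some intermediate expression, is the sharp rate — is the other place where care is needed, and I would handle it by exhibiting two separate constructions and taking whichever is active in the relevant parameter regime.
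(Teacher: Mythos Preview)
Your high-level decomposition of the lower bound into the ``pooled'' piece $s\log p/(n_{\mathcal A_q}+n)$ and the ``divergence'' piece $h^q\delta_n^{2-q}\wedge s\log p/n$ is exactly what the paper does, and your construction for the first piece (set $\Omega^{(k)}=\Omega$ for all $k$ so that $n+n_{\mathcal A_q}$ i.i.d.\ samples see a standard sparse packing) matches the paper's proof.

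For the second piece, however, you are making the problem much harder than it needs to be, and the ``orthogonal subspace'' idea is the wrong direction. You propose to perturb both $\Omega$ and the $\Delta^{(k)}$ in a coordinated way so that the auxiliary likelihood is insensitive ``to leading order'' to the perturbation of $\Omega$. But ``to leading order'' is not enough for Fano: any residual dependence of $\Sigma^{(k)}$ on the hypothesis index contributes a KL term scaled by $n_{\mathcal A_q}$, which can be arbitrarily large relative to $n$ and would wipe out the bound. The paper's construction sidesteps this entirely: it \emph{fixes} $\Sigma^{(k)}=I_p$ for every $k\in\mathcal A_q$ across all hypotheses. Then the auxiliary samples are i.i.d.\ $N(0,I_p)$ regardless of which hypothesis is true, so their KL contribution is \emph{exactly} zero and only the $n$ target samples enter the information bound. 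The divergence matrix becomes $\Delta^{(k)}=\Omega\cdot I_p-I_p=\Omega-I_p$, so the $\ell_q$-constraint $\mathcal D_q(\Omega^{(k)},\Omega)\le h$ is just a size constraint on the perturbation $\Omega-I_p$ itself. One then takes $\Omega=I_p+\epsilon_{n,p}\sum_i\gamma_iA_i(b_i)$ with $\epsilon_{n,p}\asymp\sqrt{\log p/n}$ (or $\epsilon_{n,p}=h$ when $h\le\sqrt{\log p/n}$) and the per-row sparsity $k$ equal to $s$, to $h^q(\log p/n)^{-q/2}$, or to $1$, according to which of $s\log p/n$, $h^q(\log p/n)^{1-q/2}$, $h^2$ is smallest. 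This is three clean sub-cases rather than one delicate orthogonality argument.

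On the upper bound, your plan to rerun the Trans-CLIME analysis with exponential-tail concentration is essentially right, but note that the paper does \emph{not} use the aggregated Trans-CLIME for the minimax upper bound; it uses the non-adaptive oracle estimator that picks $\widehat\Theta^{\mathcal A_q}$ when $h^q\delta_n^{-q}\ll s$ and $n_{\mathcal A_q}\gtrsim n$, and $\widehat\Omega^{(\textup{CL})}$ otherwise. This matters because the aggregation step adds a $1/n$ term, and the high-probability analysis of the aggregation weights would also need separate care.
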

Theorem \ref{thm-mini-frob2} establishes the minimax optimal rates under Frobenius norm. These lower bounds generalize the existing lower bound in $\mathbb{G}_q(s,\infty)$ \citep{ACLIME} to allow for arbitrarily small $h$.  We first mention that the estimator of $\Omega$ which achieves the minimax upper bounds depends on the relative magnitude of $h$ and $s$ and hence is not adaptive.
In fact, a minimax optimal estimator for $q=1$ is $\widehat{\Theta}$ if $h\delta_n\lesssim s\log p/n$ and $\widehat{\Omega}^{(\textup{CL})}$ if $h\delta_n\gg s\log p/n$. The cut-off concerns whether the informative auxiliary samples are useful or not, which depends on unknown parameters. In comparison, the Trans-CLIME estimator does not depends on the unknown parameter and is minimax optimal when $h\gtrsim n^{-1/2}$ under the conditions of Theorem \ref{thm2}. 


\subsection{Optimal rates for  estimating $\Omega_{i,j}$}
\begin{theorem}[Minimax bounds for estimating $\Omega_{i,j}$]
\label{thm-mini-db2}
Assume Condition \ref{cond1} and $3<s\log p<c_1n$ for some small constant $c_1$. (i) If $1\leq h\leq c_2n/\log p$ for some small enough constant $c_2$, then for some constant $C_1>0$,
\[
\inf_{\widehat{\Omega}}\sup_{\mathbb{G}_0(s,h)}\P\left(|\widehat{\Omega}_{i,j}-\Omega_{i,j}|\geq C_1 \left\{n^{-1/2}+\frac{s\log p}{n_{\mA_0}+n}+(h\wedge s)\frac{\log p}{n}\right\}\right)>1/4.
\]
For any constant $\eps_0>0$, there exists constant $C_2$ depending on $\eps_0$ such that
\[
\inf_{\widehat{\Omega}}\sup_{\mathbb{G}_0(s,h)}\P\left(|\widehat{\Omega}_{i,j}-\Omega_{i,j}|\geq C_2\{n^{-1/2}+\frac{s\log p}{n_{\mA_0}+n}+(h\wedge s)\frac{\log p}{n}\}\right)\leq \eps_0.
\]
(ii) If $ h^q(\log p/n)^{1-q/2}<c_3$ for some small enough constant $c_3$, then for any fixed $q\in(0,1]$, 
\[
\inf_{\widehat{\Omega}}\sup_{\mathbb{G}_q(s,h)}\P\left(|\widehat{\Omega}_{i,j}-\Omega_{i,j}|\geq C_3\{R_q+\frac{s\log p}{n_{\mA_q}+n}+ h^q\delta_n^{2-q}\wedge \frac{s\log p}{n}\}\right)>1/4,
\]
where $C_3$ is a positive constant and $R_q=(n_{\mA_q}+n)^{-1/2}+n^{-1/2}\wedge h$.
For any constant $\eps_0>0$, there exists constant $C_4$ depending on $\eps_0$ such that for any $q\in(0,1]$,
\[
\inf_{\widehat{\Omega}}\sup_{\mathbb{G}_q(s,h)}\P\left(|\widehat{\Omega}_{i,j}-\Omega_{i,j}|\geq C_4\{R_q+\frac{s\log p}{n_{\mA_q}+n}+ h^q\delta_n^{2-q}\wedge \frac{s\log p}{n}\}\right)\leq \eps_0.
\]
\end{theorem}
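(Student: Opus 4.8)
The plan is to prove the upper and lower bounds separately; the lower bound is the substantive part.

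\emph{Upper bound.} I would exhibit a (generally non-adaptive) estimator built from an $\ell_q$ version of Trans-CLIME followed by the debiasing step of Section \ref{sec3}. If $h^q\delta_n^{2-q}\wedge s\log p/n\gtrsim n^{-1/2}$, no debiasing is needed: replacing the thresholding step (\ref{eq-DeltaA-est}) by its coordinatewise $\ell_q$ analogue makes the divergence estimate attain row-wise $\ell_2^2$ rate $h^q\delta_n^{2-q}$ (with $(h\wedge s)\log p/n$ when $q=0$), and arguing as in Theorem \ref{thm2} the resulting estimator of $\Omega$ has entrywise error $O_P(\frac{s\log p}{n_{\mA_q}+n}+h^q\delta_n^{2-q}\wedge\frac{s\log p}{n})$. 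In the complementary regime $h^q\delta_n^{2-q}\wedge s\log p/n\ll n^{-1/2}$, which forces $s\log p\ll\sqrt n$ so that the sparsity hypothesis of Theorem \ref{thm2} holds, I would run $\ell_q$-Trans-CLIME and debias via (\ref{eq-db}); by the reasoning behind Theorem \ref{thm-db} the remaining bias is $O_P(\|\widehat\Omega_i-\Omega_i\|_2\|\widehat\Omega_j-\Omega_j\|_2)$, of that same order, plus a Gaussian fluctuation. With the $n$ primary debiasing samples the fluctuation is $O_P(n^{-1/2})$; pooling the full sample of size $n_{\mA_q}+n$ lowers it to $O_P((n_{\mA_q}+n)^{-1/2})$ at the cost of an $O(h)$ heterogeneity bias in the quadratic-form correction, so the optimal estimator uses whichever choice is smaller, producing a term of order $R_q=(n_{\mA_q}+n)^{-1/2}+n^{-1/2}\wedge h$ (which collapses to $n^{-1/2}$ for $q=0$). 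Summing the contributions gives the claimed rate, and the $\eps_0$-level statement follows from the definition of $O_P$; the hypotheses $1\le h\le c_2 n/\log p$ and $h^q(\log p/n)^{1-q/2}<c_3$ are exactly what keep the relevant covariances uniformly well-conditioned so these analyses apply.

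\emph{Lower bound.} I would split the target rate into three pieces and build a least-favorable sub-family for each, then take the maximum, which is within a constant factor of the sum. (a) For $R_q$, two two-point arguments. First, set $\Omega^{(k)}\equiv\Omega$ for $k\in\mA_q$ so that $\mathcal{D}_q=0$, and perturb $\Omega$ — and hence every $\Omega^{(k)}$ in lockstep — by $\pm\tau(e_ie_j^{\intercal}+e_je_i^{\intercal})$: the informative set is preserved, the effective sample size is $n_{\mA_q}+n$, and Le Cam's bound forces $\tau\asymp(n_{\mA_q}+n)^{-1/2}$. Second, take $\Sig=\Sig^{(k)}=I_p$ nominally, keep $\Omega^{(k)}=I_p$, and perturb only $\Omega$ by $\pm\tau(e_ie_j^{\intercal}+e_je_i^{\intercal})$; then $\Delta^{(k)}=\pm\tau(e_ie_j^{\intercal}+e_je_i^{\intercal})$, so $\mathcal{D}_q\asymp\tau$ forces $\tau\lesssim h$, while only the $n$ primary samples detect the change, giving $\tau\asymp n^{-1/2}\wedge h$. (b) For $\frac{s\log p}{n_{\mA_q}+n}$, again take $\Omega^{(k)}\equiv\Omega$ and perturb over the single-entry sparse-precision-matrix family of \citet{ACLIME, Zhao15}, maintaining $\Omega^{(k)}=\Omega$, so the problem reduces to estimating one entry of an $s$-sparse precision matrix from $n_{\mA_q}+n$ i.i.d.\ Gaussians. (c) For $h^q\delta_n^{2-q}\wedge\frac{s\log p}{n}$, fix $\Omega^{(k)}=I_p$ so the auxiliary samples carry no information about $\Omega$, and perturb $\Omega$ within $\{\,\|\Omega_j\|_0\le s,\ \mathcal{D}_q(I_p,\Omega)\le h\,\}$, for which $\Delta^{(k)}=\Omega-I_p$; the minimax rate for $\Omega_{i,j}$ over this class from $n$ primary samples is the classical single-coordinate rate for an $\ell_q$-ball of radius $h$ intersected with an $\ell_0$-ball of size $s$, namely $h^q(\log p/n)^{1-q/2}$ capped at $s\log p/n$ — the $\ell_0$ cap binding precisely when $h^q\delta_n^{2-q}$ would exceed it — and equal to $(h\wedge s)\log p/n$ for $q=0$. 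In each case one verifies that Condition \ref{cond1} is retained, the perturbations having operator norm bounded by a small constant under $s\log p\ll n$ and the stated control on $h$, and bounds the relevant KL divergences (or $\chi^2$-affinities) to obtain the constants $1/4$ and $\eps_0$.

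\emph{Main obstacle.} The delicate part is arranging constructions (b) and (c) so that the $s$-sparsity of $\Omega$, the $\ell_q$ budget $h$ on the divergence, and the uniform conditioning required by Condition \ref{cond1} hold simultaneously over the entire sub-family throughout the whole range $s\log p\ll n$: a naive perturbation using $s$ coordinates of size $\sqrt{\log p/n}$ would need the stronger scaling $s^2\log p\ll n$ to stay positive definite, so one must instead employ a structured (e.g.\ block-diagonal) perturbation whose row sums remain bounded. The second technical point is the exact identification of the $\ell_q$-ball-intersect-$\ell_0$-ball single-coordinate minimax rate in (c) and the matching of its cut-off with $s\log p/n$; here one needs a carefully chosen two-point (or, for the $\frac{s\log p}{n_{\mA_q}+n}$ term, Varshamov--Gilbert) sub-family on which the Gaussian log-likelihood ratios are controlled, all carried out with the support of $\Omega_j$ frozen and disjoint from the perturbed coordinates, and then combined with (a)--(b) into a single configuration lying in $\mathbb{G}_q(s,h)$.
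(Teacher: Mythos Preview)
Your lower-bound plan is essentially the paper's: the same three-way split, with $\Omega^{(k)}\equiv\Omega$ for the pooled-sample pieces $(n_{\mA_q}+n)^{-1/2}$ and $s\log p/(n_{\mA_q}+n)$ (reducing to the single-study lower bound of \citet{Zhao15} with $n_{\mA_q}+n$ observations), and $\Omega^{(k)}=I_p$ frozen for the primary-only pieces $n^{-1/2}\wedge h$ and $h^q\delta_n^{2-q}\wedge s\log p/n$. Your worry about positive definiteness under merely $s\log p\ll n$ is correctly diagnosed but the remedy you propose (block-diagonal) is not what is actually used: the paper, following \citet{Zhao15}, perturbs a \emph{single} row/column of $\Sigma$ by a vector $\delta$ with $\|\delta\|_0\asymp s$ (or $h^q(\log p/n)^{-q/2}$, or $1$) and entries of size $\sqrt{\log p/n}$ (or $h$), so the operator norm of the perturbation is $\|\delta\|_2=O(\sqrt{s\log p/n})$, bounded under the theorem's hypothesis; the induced change in $\Omega_{1,2}$ is then $\asymp\|\delta\|_2^2$, which is exactly the quadratic-functional rate you want.

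There is, however, a genuine gap in your upper bound. In the regime $h^q\delta_n^{2-q}\wedge s\log p/n\gtrsim n^{-1/2}$ you assert that ``no debiasing is needed'' and that the $\ell_q$-Trans-CLIME estimator has \emph{entrywise} error $O_P\bigl(\frac{s\log p}{n_{\mA_q}+n}+h^q\delta_n^{2-q}\wedge\frac{s\log p}{n}\bigr)$. This is not what Theorem~\ref{thm2} delivers: it bounds $\|\widehat\Omega_j-\Omega_j\|_2^2$ by that quantity, hence the entrywise error is only the \emph{square root}, which is too large. Debiasing is what converts the column-wise $\ell_2^2$ rate into the remaining-bias rate $\widehat T_{i,j}$ (cf.\ Theorem~\ref{thm-db}); it is needed in every regime, not just when the bias is $o(n^{-1/2})$. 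The paper's upper bound always debiases --- with $\widetilde\Sigma$ (primary only) when $h\gtrsim n^{-1/2}$, and with a pooled $\widetilde\Sigma^{\mA_q}$ when $h\ll n^{-1/2}$ --- and it is the decomposition $\widehat\Omega^{(db)}_{i,j}-\Omega_{i,j}=\widehat\zeta_{i,j}+\widehat T_{i,j}$ that yields the target rate $R_q+\widehat T_{i,j}$, with $\widehat T_{i,j}$ now equal to (not the square root of) the column-wise $\ell_2^2$ rate. The fix is simply to debias in both regimes, as you already do in the second one.
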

Theorem \ref{thm-mini-db2} establishes the minimax optimal rates for estimating each entry in the graph.
This lower bound generalizes the existing lower bound in $\mathbb{G}_q(s,\infty)$ \citep{Zhao15} to allow for arbitrarily small $h$.  We see that when $q=0$, the parametric rate is $n^{-1/2}$, which is same as in single study setting.
When $q\in(0,1]$, the parametric rate $R_q$ can be sharper than $n^{-1/2}$.  We know illustrate this phenomenon in details with $q=1$. 

For $q=1$, a minimax optimal estimator of $\Omega_{i,j}$ is $\widehat{\Omega}^{(db)}_{i,j}$ when $h\gtrsim n^{-1/2}$ and is debiased Trans-CLIME using $X^{(k)}$, $k\in\mA\cup\{0\}$, as debiasing samples when $h\ll n^{-1/2}$. In the scenario $h\ll n^{-1/2}$, the informative auxiliary studies are very similar to the target study and using $n_{\mA}+n$ debiasing samples can have faster parametric rate, $(n+n_{\mA})^{-1/2}$, with bias no larger than $h$. However, the central limit theory may not hold for the rate optimal estimator when $h\ll n^{-1/2}$. This is because the parametric rate is dominated by the bias $h$ when $(n+n_{\mA})^{-1/2}\lesssim h\lesssim n^{-1/2}$. In contrast, $\widehat{\Omega}^{(db)}_{i,j}$ has parametric rate $n^{-1/2}$ and its asymptotic normality holds for arbitrarily small $h$ under the conditions of Theorem \ref{thm-db}. Hence, $\widehat{\Omega}^{(db)}_{i,j}$ is a proper choice for statistical inference.  

\section{Numerical experiments}
\label{sec-simu}
We compare the performance of three methods using simulations. The first one is the proposed Trans-CLIME.  The second one is CLIME that only uses the data from the target study. 
 The third one is the Trans-CLIME which  assumes  $\mA=\{1,\dots, K\}$, which includes data from both informative and non-informative studies,  denoted by ``pooled''. We include the last method to understand the robustness of Trans-CLIME to non-informative auxiliary studies.
For the choice of tuning parameters, we consider $\lam_{\textup{CL}}=2c_n\sqrt{\log p/n}$ for CLIME. We pick $c_n$ to minimize the prediction error defined in (\ref{eq-Qhat}) based on five fold cross-validation. For the Trans-CLIME, we set $\lam_{\Delta}=2\hat{\sig}\sqrt{\log p/n}$ and $\lam_{\Theta}=2c_n\sqrt{\log p/n_{\mA}}$ where $c_n$ is the same as in the CLIME optimization. For the pooled-CLIME, the tuning parameters are set in the same way as in Trans-CLIME except that $n_{\mA}$ is replaced by $\sum_{k=1}^Kn_k$. For Trans-CLIME based methods, we split the target data into two folds such that $\widehat{\Omega}^{(\textup{CL})}$ and $\widehat{\Theta}$ are computed based on $2n/3$ samples and the aggregation step (Step 3) is based on the rest $n/3$ samples. For $\hat{v}_j$ obtained in Step 3, we project it onto a two-dimensional positive simplex. This is because the oracle $v_j$ is in that simplex. For the debiased Trans-CLIME, we use all the primary data as debiasing samples as it has a better empirical performance. The R code for the three methods is available at \url{https://github.com/saili0103/TransCLIME}.

We set $n=150$, $p=200$, $K=5$, and $n_k=300$ for $k=1,\dots,K$. We consider two types of precision matrix $\Omega$.
\begin{itemize}
\item[(i)] Banded matrix with bandwidth 8. For $1\leq i, j\leq p$, $\Omega_{i,j}=2\times 0.6^{|i-j|}\mathbbm{1}(|i-j|\leq 7)$. 

\item[(ii)] Block diagonal matrix  with block size 4, where  each block is Toeplitz $(1.2,0.9,0.6,0.3)$. 
\end{itemize} 
For $k\in \mA$, $\{\Delta^{(k)}\}_{i,j}$ is zero with probability 0.9 and is nonzero with probability 0.1. If an entry is nonzero, it is randomly generated from $U[-r/p,r/p]$ for $r\in\{10,20,30\}$. For such divergence matrices $\Delta^{(k)}$, the empirical $h \in\{5.72, 11.45, 15.89\}$ for $r\in\{10,20,30\}$ in setting (i) and the empirical $h \in\{9.11, 18.18, 24.37\}$ for $r\in\{10,20,30\}$  in setting (ii).
For $\Omega^{(k)}$, $k\notin \mA$, we generate $\{\Omega^{(k)}\}_{i,j}=1.5\mathbbm{1}(i=j)+\delta_{i,j}$, where $\delta_{i,j}$ is zero with probability 0.9 and is 0.2 with probability 0.1. For $k=1,\dots,K$, we symmetrize $\Omega^{(k)}$ and if $\Omega^{(k)}$ is not positive definite, we redefine $\Omega^{(k)}$ to be its positive definite projection. The positive definite projection is realized via R package ``BDCoColasso'' \citep{spd}.

\subsection{Estimation results}
In Figure \ref{fig1}, we report the estimation errors in Frobenius norm for three methods in setting (i) and (ii), respectively. As the number of informative auxiliary studies increases, $n_{\mA}$ increases and the estimation errors of two Trans-CLIME based methods decrease. As $r$ increases, the estimation errors of all three Trans-CLIME based methods increase.  Trans-CLIME has a faster convergence rate than the pooled version. This is because $K-|\mA|$ non-informative studies are used in the pooled version, which affects the convergence rates. We see from the pooled version that the Trans-CLIME algorithm is robust to the non-informative auxiliary studies as the performance of the pooled version is always not much worse than the single-study CLIME.
\begin{figure}[H]
\includegraphics[width=0.99\textwidth, height=5.5cm]{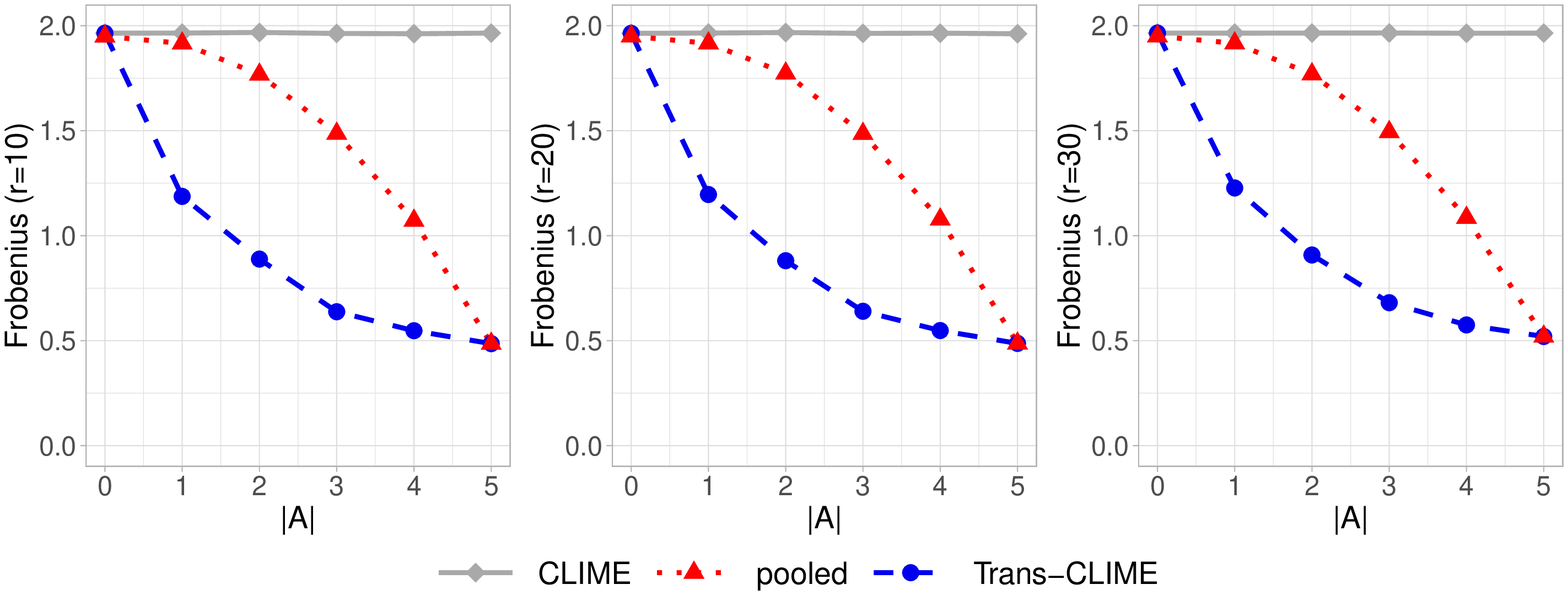}
\includegraphics[width=0.99\textwidth, height=5.5cm]{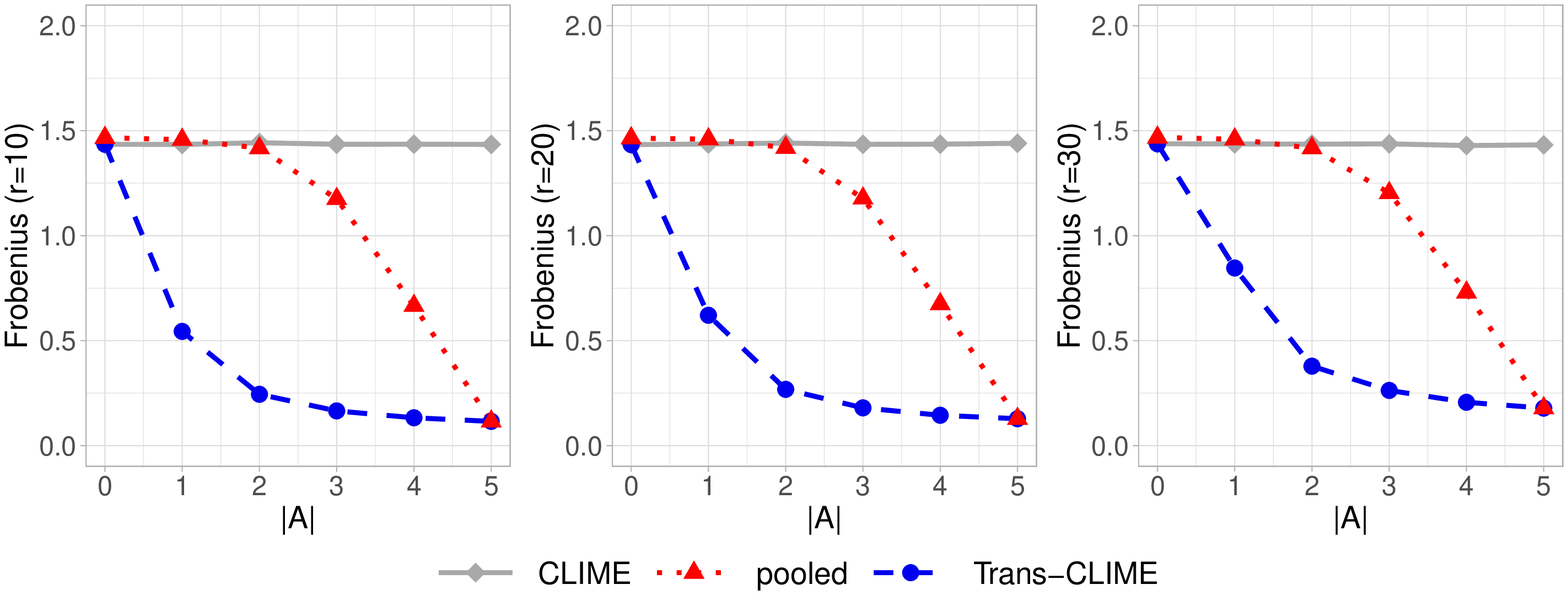}
\caption{Estimation errors in Frobenius norm for banded $\Omega$ (first row) and for block diagonal $\Omega$ (second row)  as a function of the number of informative studies (out of a total of  $K=5$ studies) for different values of $r$.}
\label{fig1}
\end{figure}

\subsection{Prediction errors}
We use the negative log-likelihood as the risk function for prediction. Specifically, we generate $x_i^{(test)}\sim N(0,\Sig)$ for $i=1,\dots,n_{\text{test}}=100$ and $x_i^{(test)}$ are independent of the samples for estimation. We evaluate the out-of-sample prediction error of an arbitrary graph estimator $\Omega^{(init)}$ in the following way. We symmetrize $\Omega^{(init)}$ and compute the positive definite projection of the symmetrized $\Omega^{(init)}$, denoted by $\Omega^{(init)}_{+}$. The prediction error of $\Omega^{(init)}$ is evaluated via
\begin{equation}
\label{eq-Qhat}
   \widehat{Q}(\Omega^{(init)})=\frac{1}{p}\left\{\frac{1}{2n_{\text{test}}}\sum_{i=1}^{n_{\text{test}}}\textup{Tr}(x_i^{(test)}(x_i^{(test)})^{\intercal}\Omega^{(init)}_+)-\frac{1}{2}\log det(\Omega^{(init)}_+)\right\}.
\end{equation}
In Figure \ref{fig3}, one can see that the prediction errors exhibit similar patterns as the estimation errors reported above.
\begin{figure}[H]
\includegraphics[width=0.99\textwidth, height=5.5cm]{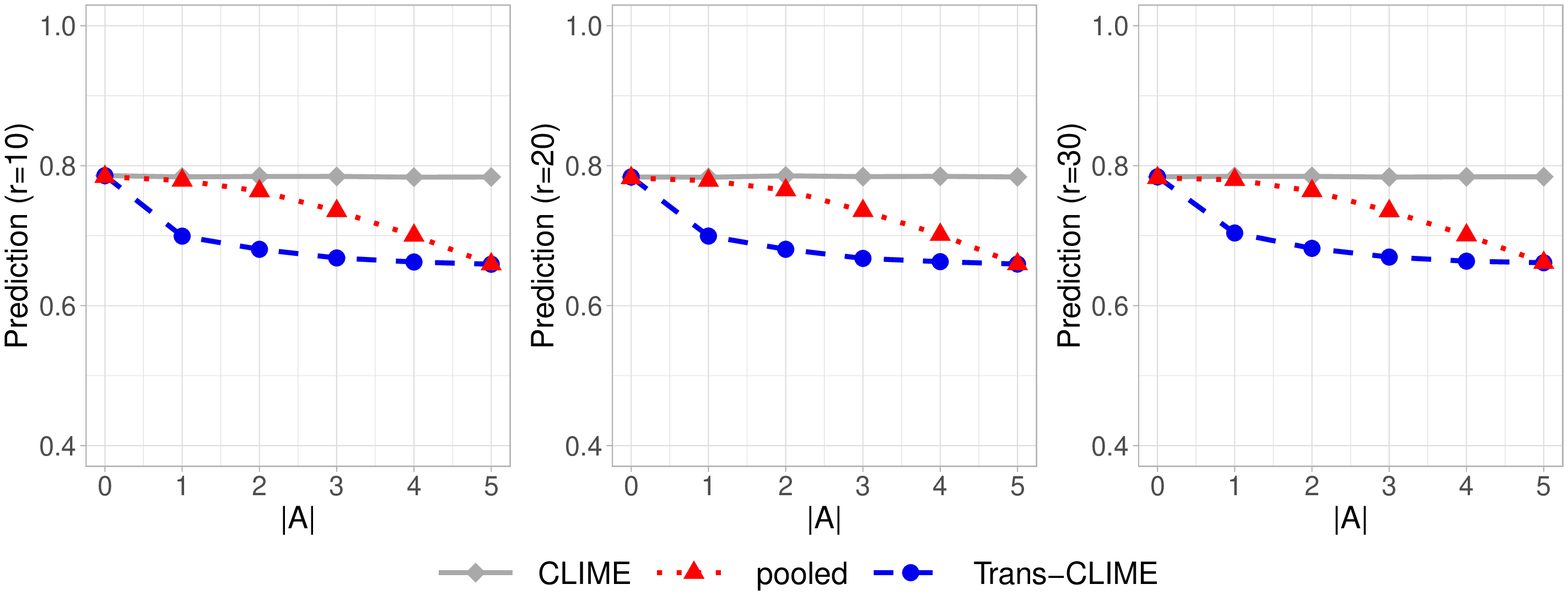}
\includegraphics[width=0.99\textwidth, height=5.5cm]{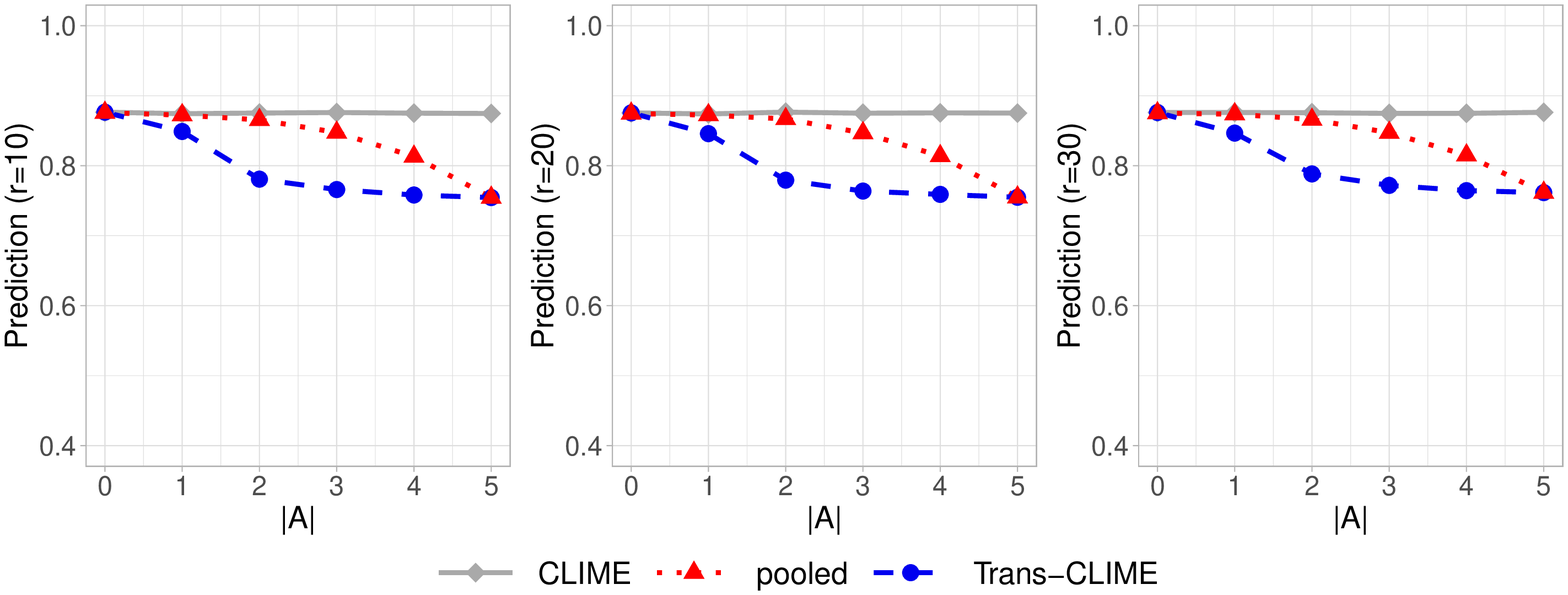}
\caption{Prediction errors with banded $\Omega$ (first row) and block diagonal $\Omega$ (second row) as a function of the number of informative studies (out of a total of  $K=5$ studies) for different values of $r$. }
\label{fig3}
\end{figure}

\subsection{FDR control}
We then consider FDR control at level $\alpha=0.1$  for the three methods introduced above. We still consider two types of target graphs defined in (i) and (ii).
From Figure \ref{fig-fdr1} and Figure \ref{fig-fdr2}, we see that all three methods have empirical FDR no larger than the nominal level. Specifically, the FDR of Trans-CLIME is closer to the nominal level. In terms of power, the Trans-CLIME has higher power when $\mA$ is nonempty.  We observe  the robustness of Trans-CLIME in the sense that the FDR is under control even if non-informative studies are included. However, the power can be lower than CLIME when some non-informative studies are included. We also  observe that the power for the banded $\Omega$ is much lower than the power for the block diagonal $\Omega$. This is because a proportion of entries  in the banded graph are weak, which are hard detect.
\begin{figure}[H]
\includegraphics[width=0.99\textwidth, height=5.5cm]{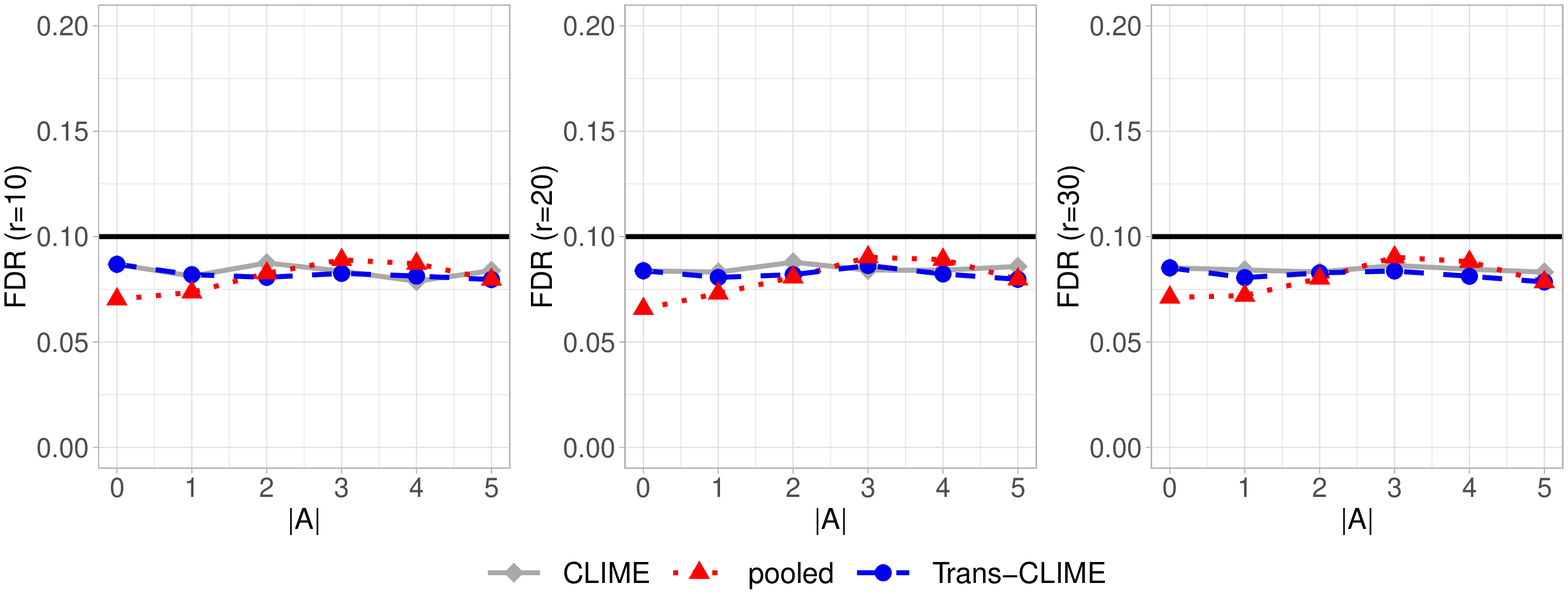}
\includegraphics[width=0.99\textwidth, height=5.5cm]{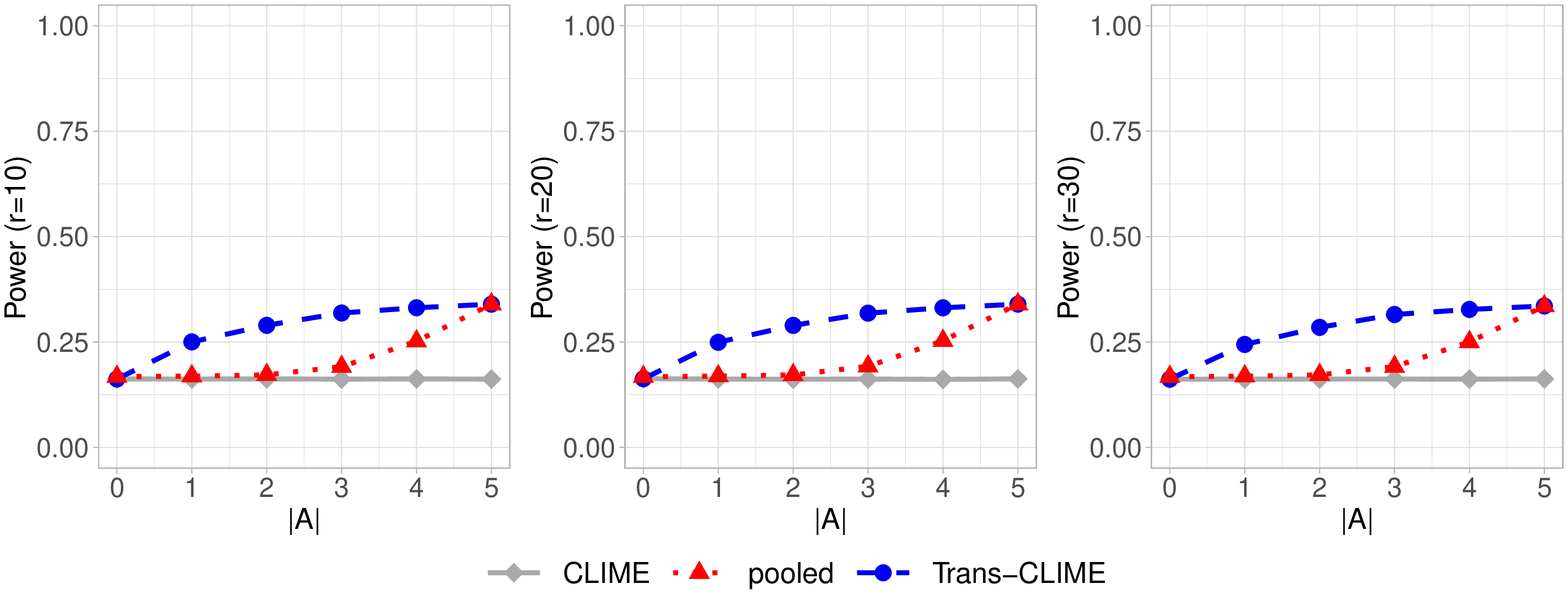}
\includegraphics[width=0.99\textwidth, height=5.5cm]{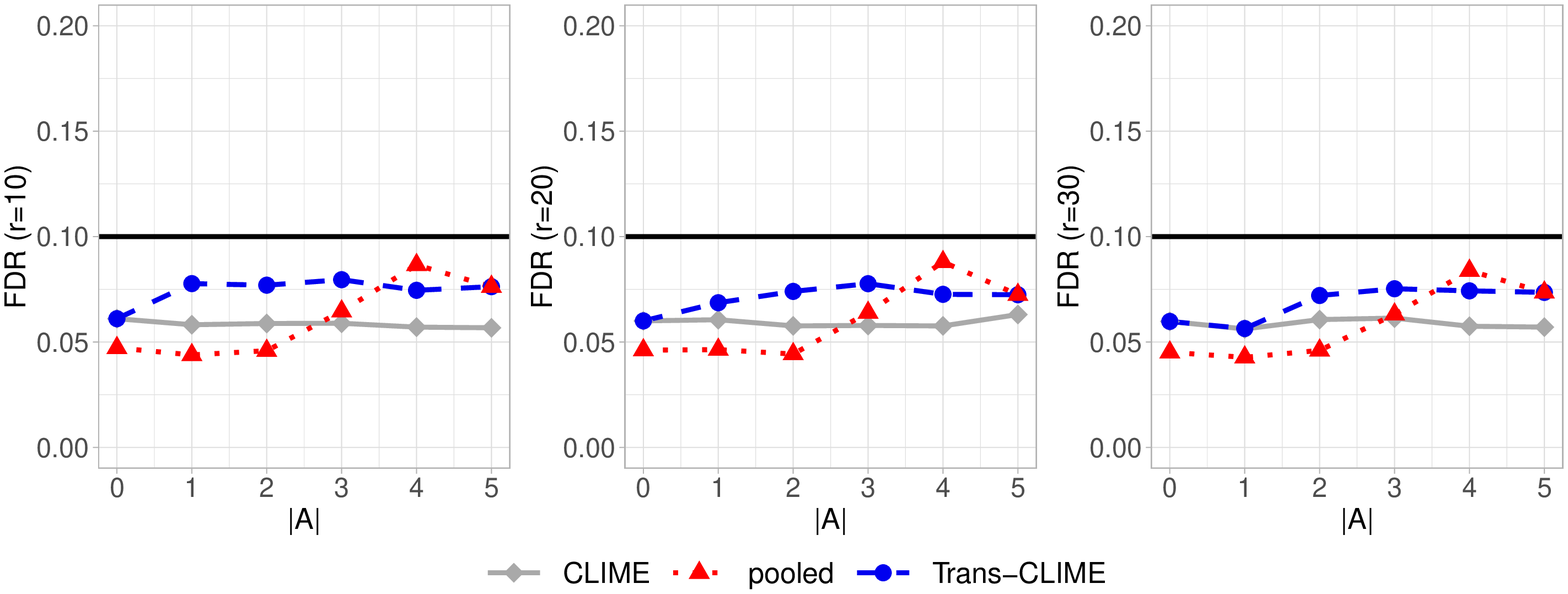}
\includegraphics[width=0.99\textwidth, height=5.5cm]{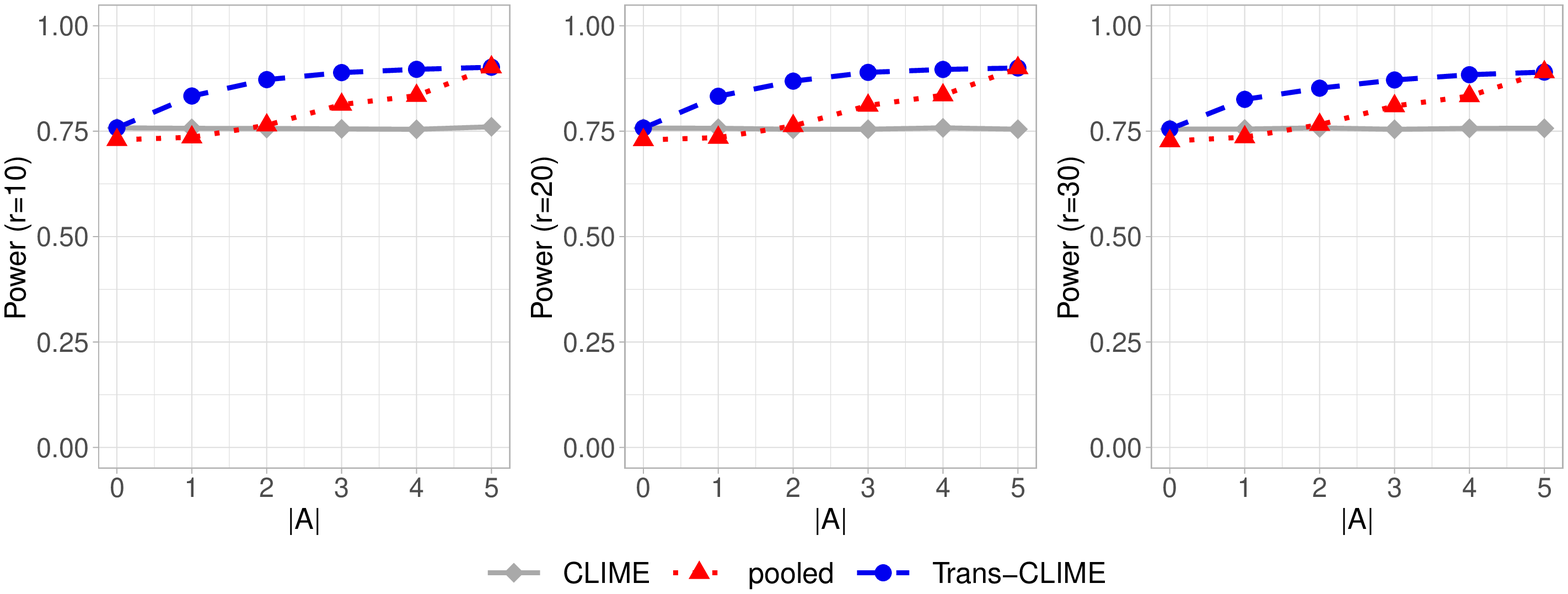}
\caption{The FDR and power with three methods at nominal level 0.1 as a function of the number of informative studies (out of $K=5$) and $r$ for banded $\Omega$ (first and second rows) and block diagonal $\Omega$ (third and fourth rows).}
\label{fig-fdr1}
\end{figure}
\ignore{
\begin{figure}
\includegraphics[width=0.99\textwidth, height=5.5cm]{Bdiag-fdr-n150p200}
\includegraphics[width=0.99\textwidth, height=5.5cm]{Bdiag-power-n150p200}
\caption{The FDR (first row) and power (second row) with three methods at nominal level 0.1 as a function of the number of informative studies (out of $K=5$) and $r$ for block diagonal $\Omega$.}
\label{fig-fdr2}
\end{figure}}

\section{Gene networks detection in multiple tissues}
\label{sec-data}
In this section, we apply our proposed algorithms to detect gene networks in different tissues using the Genotype-Tissue Expression (GTEx) data (\url{https://gtexportal.org/}). Overall, the data sets measure  gene expression levels in  49 tissues from 838 human donors, comprising a total of  1,207,976 observations of 38,187 genes.  We focus on genes related to central nervous system neuron differentiation, annotated as \texttt{GO:0021953}. 
This gene set includes a total of 184 genes.  A complete list of the  genes can be found at \url{https://www.gsea-msigdb.org/gsea/msigdb/cards/GO_CENTRAL_NERVOUS_SYSTEM_NEURON_DIFFERENTIATION}. 

Our goal is to estimate and detect the gene network in a target brain tissue. Since we use 20\% of the samples to compute test errors, the sample size for the target tissue should not be too small. We therefore consider each brain tissue with at least 100 samples as the target tissue in each experiment.   We use the data from multiple other brain tissues as auxiliary samples with $K=12$. We remove the genes that  have missing values in these 13 tissues, resulting a total of 141 genes for the graph construction. The average sample size in each tissue is 115. A complete list of tissues and their sample sizes are given in the Supplementary Materials.

We apply CLIME and Trans-CLIME to estimate the Gaussian graph among these 141 genes in multiple target brain tissues. We first compare the prediction performance of CLIME and Trans-CLIME, where  we randomly split the samples of the target tissue into five folds. We fit the model with four folds of the samples and compute the prediction error with the rest of the samples. We report the mean of the prediction errors, each based on a different fold of the samples. The prediction errors are measured by the negative log-likelihood defined in (\ref{eq-Qhat}). 

The prediction results are reported in the left panel of Figure \ref{fig1-data}. We see that the prediction errors based on Trans-CLIME are significantly lower than those based on CLIME in many cases, indicating that the brain tissues in GTEx possess relatively high similarities in gene associations. On the other hand, these brain tissues are also heterogeneous in the sense that the improvements with transfer learning are significant in some tissues (e.g., A.C. cortex and F. cortex) and they are relatively mild in others (e.g., C. hemisphere and Cerebellum).

We then apply  Algorithm \ref{alg-fdr} with $\alpha=0.1$ to identify the connections among these genes.  The proportion of detected edges are reported in the right panel of Figure \ref{fig1-data}. We see that the percentages of detected edges are relatively low, implying that the networks are sparse. We see that Trans-CLIME has larger power than CLIME in almost all the tissues in detecting the gene-gene links,  agreeing with  our simulation results. In Figure \ref{fig2-data}, we evaluate the similarities among the tissues in terms of the degree distributions of the constructed graphs.  Specifically, we examine the degrees of nodes in A.C. cortex  in comparison to the degrees of nodes in the other nine tissues, all estimated using Trans-CLIME.  We see that the degree distribution in A.C. cortex is relatively similar to the degree distributions in Cortex, and F. cortex. 

In the Supplementary Material  (Section \ref{sec-hub}), we report the hubs detected by these two methods in different tissues  and observe  that many hubs appear more than once in different tissues based on the results of debiased Trans-CLIME, further demonstrating a certain level of similarity in gene regulatory networks among different brain  tissues. For example, for A.C Cortex and with Trans-CLIME, we are able to identify the hub genes SOX1, SHANK3, ATF5, and SEMA3A. These genes are either the known transcriptional factors (SOX1, ATF5) and have been shown to be related to neurological diseases, including the leading autism gene SHANK3 \citep{Lutzeaaz3267} and gene-related to motor neurons in ALS patients (Sema3A) \citep{sema3a}. In comparison, the graphs estimated using CLIME in single tissue are sparse and do not reveal any of these hub genes. 

 \begin{figure}[H]
 \includegraphics[height=5.7cm,width=0.49\textwidth]{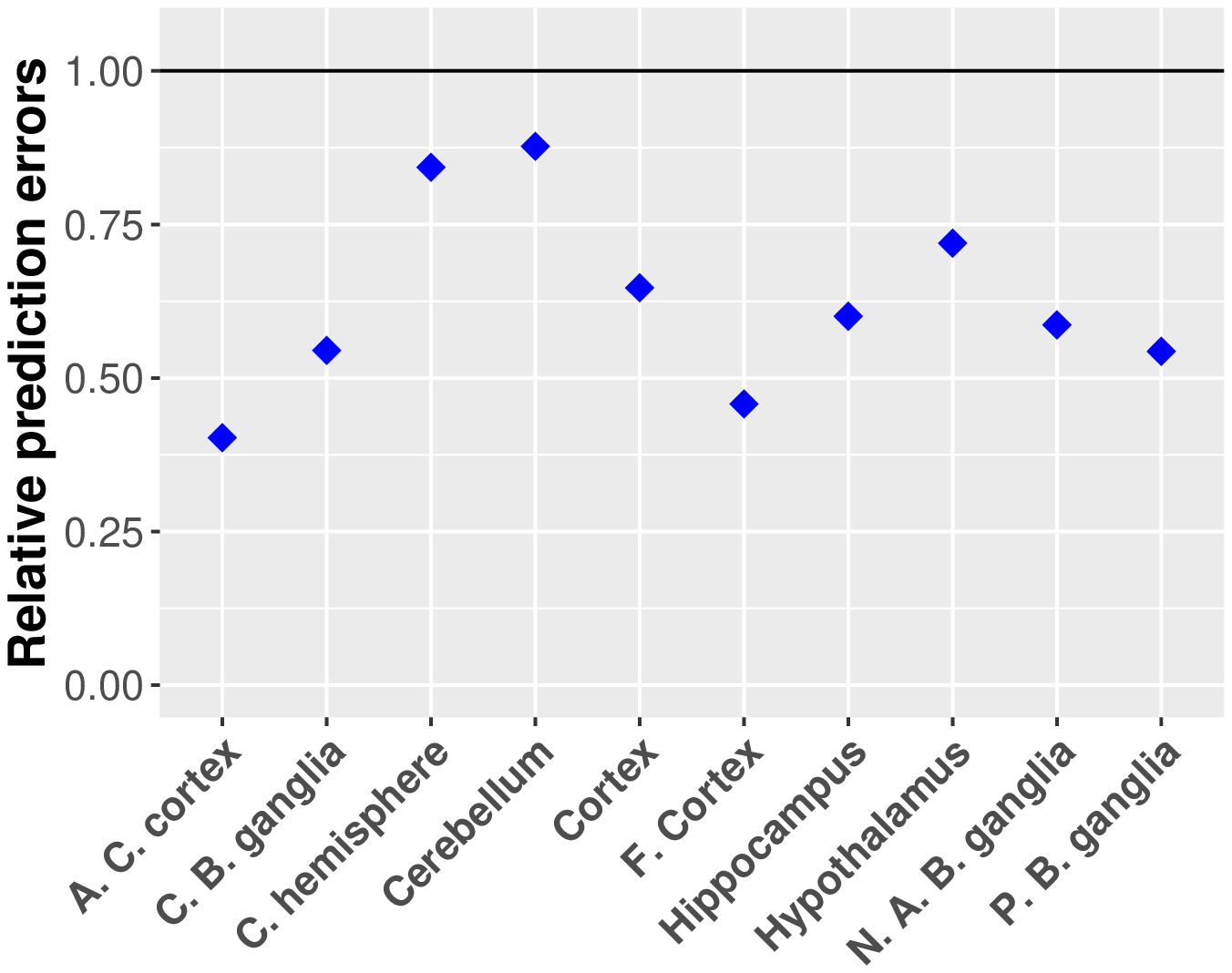}
   \includegraphics[width=0.495\textwidth, height=5.9cm]{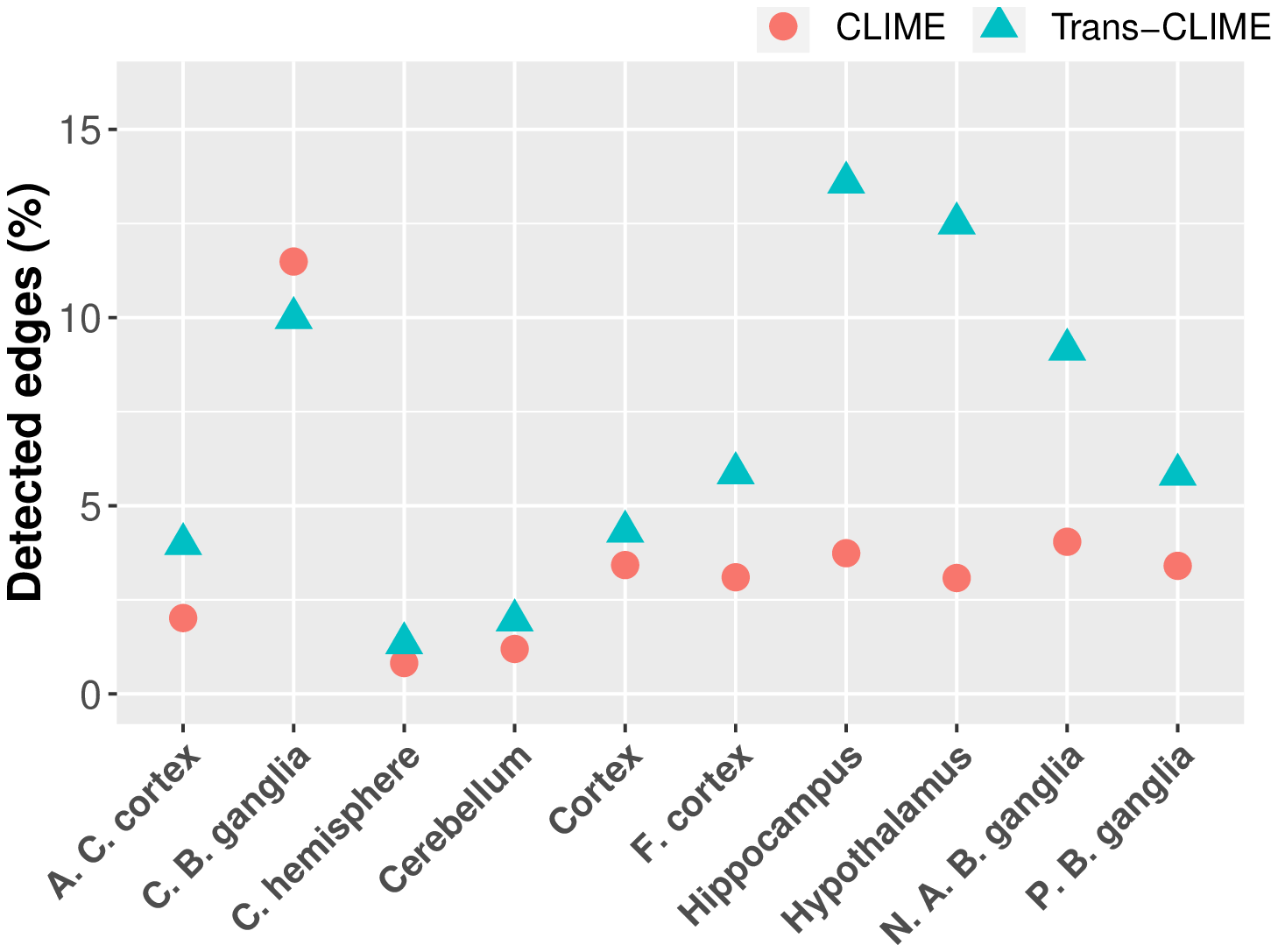}
 \centering
 \caption{Results of analysis of 10 different brain tissues. The left panel presents the prediction errors of Trans-CLIME relative to the prediction errors of CLIME for 10 different target tissues. The right panel presents the number of detected edges divided by $p(1-p)$ using CLIME and Trans-CLIME with FDR=0.1. The full names of the target tissues are given in the supplementary files.}
 \label{fig1-data}
 \end{figure}

  \begin{figure}[H]
  \includegraphics[width=0.995\textwidth, height=6.5cm]{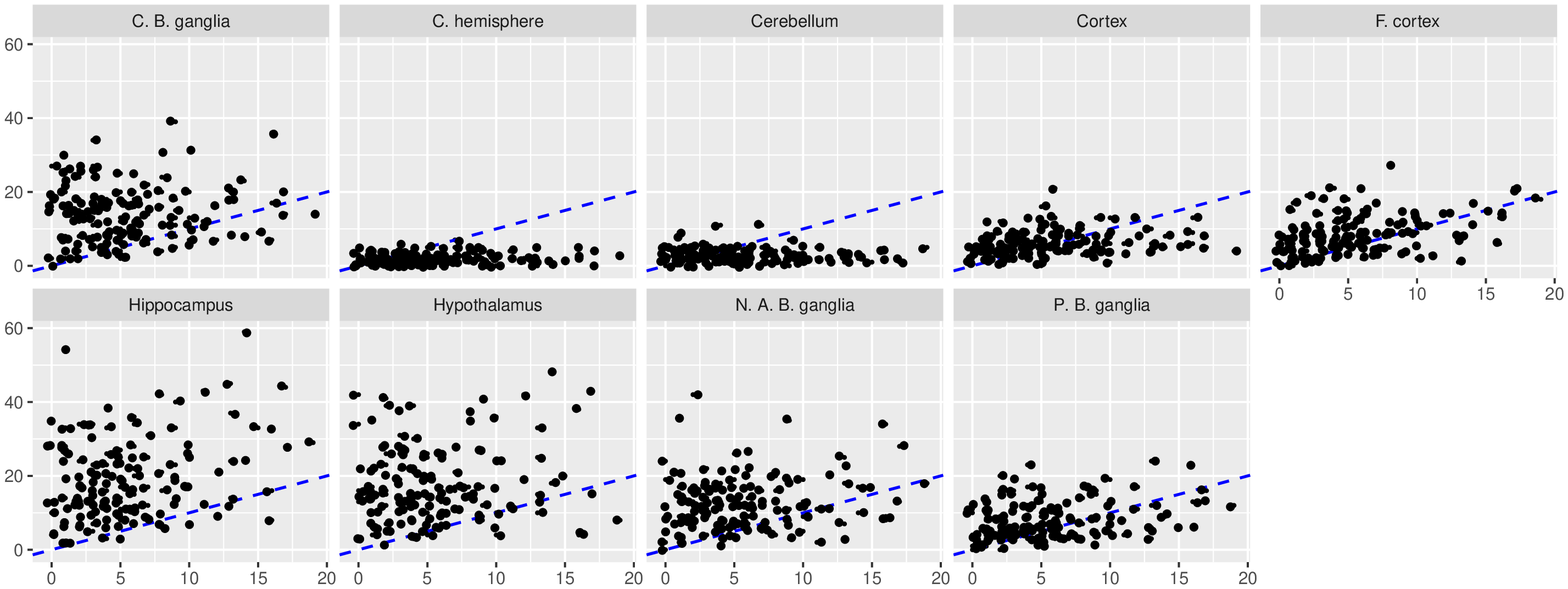}
 \caption{Comparison of the node degree distribution based on the graph estimated by Trans-CLIME for each of the tissue at FDR level of 10\%.  The $x$-axis represents the degrees of the nodes in A. C. cortex  and the $y$-axis represents the degrees of the nodes in nine other tissues. The dashed line is diagonal.}
  \label{fig2-data}
 \end{figure}
 
 \section{Discussion}
 \label{sec-diss}
 In this paper,  we have studied the estimation and inference of Gaussian graphical models with transfer learning. We assume the divergence matrices between the target graph and the informative auxiliary graphs are relatively sparse. Our proposed algorithm Trans-CLIME admits a faster convergence rate than the minimax rate in single study setting under mild conditions. The Trans-CLIME estimator can be further debiased for statistical inference.

 A practical challenge in transfer learning is to identify the informative auxiliary studies, i.e., the set $\mA_q$. While our proposal is guaranteed to be no worse than the single-study minimax estimator, it may not be the most efficient way to use the auxiliary studies. In the high-dimensional regression problem, \citet{LCL20} proposes to first rank all the auxiliary studies according to their similarities to the target and then perform a model selection type of model aggregation. They prove that the aggregated estimator can be adapted to $\mA_q$ under certain conditions. In a more recent paper, \citet{HK20} proves that, loosely speaking, if the ranks of the auxiliary studies can be recovered, then performing empirical risk minimization in a cross-fitting manner can achieve adaptation to $\mA_q$ to some extent in some functional classes. For the high-dimensional GMMs, heuristic rank estimators can also be derived using their connections to linear models, based on which one can perform aggregation towards an adaptive estimator. However, theoretical analysis for such rank estimators may require strong conditions, especially in the high-dimensional scenario. Adaptation to $\mA_q$ is an important topic for further studies.
  
  \section*{FUNDING}
  This research was supported by NIH grants R01GM123056 and R01GM129781.
  
  \bigskip
\begin{center}
{\large\bf SUPPLEMENTARY MATERIAL}
\end{center}
Supplement to ``Transfer Learning in Large-scale Gaussian Graphical Models with False Discovery Rate Control''.
In the Supplementary Materials, we provide the proofs of theorems and more results for data applications.
  
  {\small
\bibliographystyle{chicago}
  \bibliography{TL-GGM}
  }
  
  \appendix

Let $S_j$ denote the support of $\Omega_j$ for $j=1,\dots, p$. For an arbitrary matrix $A$, let $r_j(A)=(A_{j,.})^{\intercal}$. Let $\mathcal{B}_q(r)$ denote the $\ell_q$-ball centered at zero with radius $r$.

  \section{Proof of Theorem \ref{thm2}}
  To prove Theorem \ref{thm2}, we present two lemmas. In Lemma \ref{lem0-thm1}, we show the convergence rate of $\widehat{\Omega}^{(\textup{(CL)}}$. In Lemma \ref{thm1}, we show the $\widehat{\Theta}$ when $h\lesssim s\sqrt{\log p/n}$. 
  \subsection{Two useful lemmas}
  \begin{lemma}[Convergence rate of CLIME]
\label{lem0-thm1}
Under the conditions of Theorem \ref{thm2}, we have
\begin{align*}
&\P\left(\max_j\|\widehat{\Omega}^{(\textup{(CL)}}-\Omega\|_{\infty,2}^2\geq c_1s\frac{\log p}{n}\right)\leq \exp\{-c_2\log p\}+\exp\{-c_3n\}\\
&\P\left(\max_j\|\widehat{\Omega}^{(\textup{(CL)}}-\Omega\|_{\infty,1}\geq c_1s\sqrt{\frac{\log p}{n}}\right)\leq \exp\{-c_2\log p\}+\exp\{-c_3n\}
\end{align*}
for some positive constants $c_1$, $c_2$ and $c_3$.
\end{lemma}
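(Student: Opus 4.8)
The plan is to run the standard CLIME analysis column by column, using that both the objective $\|\cdot\|_1$ in (\ref{eq-CLIME}) and the feasibility constraint decouple over the $p$ columns, so that $\widehat\Omega^{(\textup{CL})}_j=\argmin\{\|v\|_1:\ \|\widehat\Sig v-e_j\|_\infty\le\lam_{\textup{CL}}\}$ for each $j$, where $\lam_{\textup{CL}}\asymp\sqrt{\log p/n}$ as in Remark~\ref{re1}. Fix $j$ and let $S_j$ be the support of $\Omega_j$, so $|S_j|\le s$.

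\textbf{Concentration step.} First I would establish, on an event of probability at least $1-\exp(-c\log p)-\exp(-cn)$, the two bounds $\|\widehat\Sig-\Sig\|_{\infty,\infty}\le C\sqrt{\log p/n}$ and $\|\widehat\Sig\Omega-I_p\|_{\infty,\infty}=\|(\widehat\Sig-\Sig)\Omega\|_{\infty,\infty}\le\tfrac{1}{2}\lam_{\textup{CL}}$. The second is the only slightly delicate point: writing $e_i^\intercal(\widehat\Sig-\Sig)\Omega_j=|\mathcal I|^{-1}\sum_{l\in\mathcal I}(x_{l,i}\,x_l^\intercal\Omega_j-\delta_{ij})$ and noting that under Condition~\ref{cond1} both $x_{l,i}$ (variance $\Sig_{ii}\le C$) and $x_l^\intercal\Omega_j$ (variance $\Omega_j^\intercal\Sig\Omega_j=\Omega_{jj}\le C$) are $O(1)$-sub-Gaussian, the summands are centered sub-exponential with $O(1)$ parameter; Bernstein's inequality together with a union bound over the $p^2$ index pairs yields the claim once the constant in $\lam_{\textup{CL}}$ is large enough. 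The Gaussianity is exactly what lets one treat $x_l^\intercal\Omega_j$ as a single sub-Gaussian and so avoid a spurious $\sqrt s$ factor; the crude alternative $\|\widehat\Sig-\Sig\|_{\infty,\infty}\|\Omega_j\|_1$ would be of order $\sqrt{s\log p/n}$ and would force $\lam_{\textup{CL}}$ to that larger scale, degrading the rate.

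\textbf{Deterministic step.} On this event $\Omega$ is feasible in (\ref{eq-CLIME}), so $\widehat\Omega^{(\textup{CL})}$ exists and columnwise optimality gives $\|\widehat\Omega^{(\textup{CL})}_j\|_1\le\|\Omega_j\|_1$. For $\widehat\delta_j:=\widehat\Omega^{(\textup{CL})}_j-\Omega_j$ this yields the cone inequality $\|\widehat\delta_{j,S_j^c}\|_1\le\|\widehat\delta_{j,S_j}\|_1$, hence $\|\widehat\delta_j\|_1\le 2\sqrt s\,\|\widehat\delta_j\|_2$. A restricted eigenvalue bound for $\widehat\Sig$ on this cone then follows deterministically from the entrywise bound: $|v^\intercal(\widehat\Sig-\Sig)v|\le\|\widehat\Sig-\Sig\|_{\infty,\infty}\|v\|_1^2\le 4Cs\sqrt{\log p/n}\,\|v\|_2^2$, so $v^\intercal\widehat\Sig v\ge(\tfrac{1}{C}-4Cs\sqrt{\log p/n})\|v\|_2^2\ge\tfrac{1}{2C}\|v\|_2^2$ using $\Lambda_{\min}(\Sig)\ge 1/C$ and $s^2\log p\le c_3 n$ with $c_3$ small enough. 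Since both $\Omega_j$ and $\widehat\Omega^{(\textup{CL})}_j$ are feasible, $\|\widehat\Sig\widehat\delta_j\|_\infty\le 2\lam_{\textup{CL}}$, and combining, $\tfrac{1}{2C}\|\widehat\delta_j\|_2^2\le\widehat\delta_j^\intercal\widehat\Sig\widehat\delta_j\le\|\widehat\delta_j\|_1\|\widehat\Sig\widehat\delta_j\|_\infty\le 4\sqrt s\,\lam_{\textup{CL}}\|\widehat\delta_j\|_2$, giving $\|\widehat\delta_j\|_2\lesssim\sqrt s\,\lam_{\textup{CL}}\asymp\sqrt{s\log p/n}$ and then $\|\widehat\delta_j\|_1\le 2\sqrt s\,\|\widehat\delta_j\|_2\lesssim s\sqrt{\log p/n}$.

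\textbf{Conclusion and main obstacle.} Every high-probability event above involves only $\widehat\Sig$ and $\Sig$, not the column index, so the bounds hold simultaneously over $j$, which gives both displayed inequalities with the stated failure probabilities. The argument is essentially routine; the one place that genuinely needs care is the sharp control of $(\widehat\Sig-\Sig)\Omega$ in the concentration step, since the naive bound loses a factor $\sqrt s$. The dependence between $\widehat\Sig$ and $\widehat\Omega^{(\textup{CL})}$ causes no trouble, because after conditioning on the concentration event only deterministic inequalities are used.
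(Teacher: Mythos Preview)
Your proof is correct and follows essentially the same route as the paper's: column decoupling via Lemma~1 of \citet{Cai11}, feasibility of $\Omega_j$ from sub-exponential concentration of $(\widehat\Sig-\Sig)\Omega_j$, the cone condition from $\ell_1$-minimality, and a restricted-eigenvalue bound combined with $\widehat\delta_j^{\intercal}\widehat\Sig\widehat\delta_j\le 2\|\widehat\delta_j\|_1\lam_{\textup{CL}}$ to close. The paper's own proof is only a terse sketch (``standard analysis lead to the desired results assuming $s\log p=o(n)$''), and your write-up is a faithful fleshing-out of exactly those steps; the only cosmetic difference is that you derive the RE bound deterministically from $\|\widehat\Sig-\Sig\|_{\infty,\infty}$ under $s^2\log p\le c_3 n$, whereas the paper's phrasing suggests a probabilistic RE under the weaker $s\log p=o(n)$, but both are valid under the hypotheses of Theorem~\ref{thm2}.
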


\begin{lemma}[Convergence rate of $\widehat{\Theta}$ in Step 2]
\label{thm1}
Under the conditions of Theorem \ref{thm2} and $h\lesssim s\sqrt{\log p/n}$ and $n_{\mA}\gtrsim n$, we have for any true models in $\mathbb{G}_1(s,h)$ and fixed $1\leq j\leq p$,
\begin{align*}
  &\P\left(\|\widehat{\Theta}_j-\Omega_j\|_2^2\geq c_1\frac{s\log p}{n_{\mA}}+c_2h\delta_n\right)\leq \exp\{-c_3\log p\}+\exp\{-c_4n\}
\end{align*}
for some positive constants $c_1$, $c_2$, $c_3$ and $c_4$.
\end{lemma}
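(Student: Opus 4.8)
The plan is to run the standard primal--dual analysis of a CLIME‑type estimator on the program (\ref{eq-Theta-est}), treating the estimated divergence matrix $\widehat{\Delta}^{\mathcal{A}}$ produced in Step~1 as a controlled perturbation of the moment equation $\Sig^{\mathcal{A}}\Omega=(\Delta^{\mathcal{A}}+I_p)^{\intercal}$. First I would isolate the high‑probability ``good event'': (a) $\|\widehat{\Sig}-\Sig\|_{\infty,\infty}\lesssim\sqrt{\log p/n}$ and $\|\widehat{\Sig}^{\mathcal{A}}-\Sig^{\mathcal{A}}\|_{\infty,\infty}\lesssim\sqrt{\log p/n_{\mathcal{A}}}$, by Bernstein for sums of independent (not identically distributed) sub‑exponential entries under Condition~\ref{cond1}; (b) a restricted eigenvalue bound $v^{\intercal}\widehat{\Sig}^{\mathcal{A}}v\ge\kappa\|v\|_2^2$ on the cone $\{\|v_{S_j^c}\|_1\le\|v_{S_j}\|_1\}$, valid since $n_{\mathcal{A}}\ge n\gtrsim s^2\log p$; (c) the CLIME guarantees of Lemma~\ref{lem0-thm1}, which also give $\|\widehat{\Omega}^{(\textup{CL})}\|_2\lesssim 1$; and (d) — crucially — since the auxiliary samples are independent of the primary fold used to form $\widehat{\Omega}^{(\textup{CL})}$, a conditional concentration bound $\|\widehat{\Omega}^{(\textup{CL})}(\widehat{\Sig}^{\mathcal{A}}-\Sig^{\mathcal{A}})\|_{\infty,\infty}\lesssim\sqrt{\log p/n_{\mathcal{A}}}$. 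The stated failure probability $\exp\{-c_3\log p\}+\exp\{-c_4 n\}$ will come from union bounds over the $O(p^2)$ entries in (a), (d) and from the eigenvalue/RE event in (b), (c).

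Next I would analyze Step~1. The identity $\Sig\Delta^{\mathcal{A}}=\Sig^{\mathcal{A}}-\Sig$ (from $\Sig\Omega=I_p$ and $\Delta^{\mathcal{A}}=\Omega\Sig^{\mathcal{A}}-I_p$) together with (a) shows $\Delta^{\mathcal{A}}$ is feasible for (\ref{eq-Delta-init}) with $\lam_{\Delta}=c_1\sqrt{\log p/n}$; since each column of $\Delta^{\mathcal{A}}$ lies in an $\ell_1$‑ball of radius $\le h$, the $\ell_1$‑minimality of $\widehat{\Delta}^{(init)}$ and (b) yield the column‑wise rates $\max_j\|\widehat{\Delta}^{(init)}_{.,j}-\Delta^{\mathcal{A}}_{.,j}\|_2^2\lesssim h\delta_n$ and $\max_j\|\widehat{\Delta}^{(init)}_{.,j}-\Delta^{\mathcal{A}}_{.,j}\|_1\lesssim h$. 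I would then expand the bias‑corrected matrix in (\ref{eq-DeltaA-est}): with $G=\widehat{\Delta}^{(init)}+\widehat{\Omega}^{(\textup{CL})}(\widehat{\Sig}^{\mathcal{A}}-\widehat{\Sig}-\widehat{\Sig}\widehat{\Delta}^{(init)})$ and the CLIME constraint $\widehat{\Omega}^{(\textup{CL})}\widehat{\Sig}\approx I_p$, one gets $G-\Delta^{\mathcal{A}}=(I_p-\widehat{\Omega}^{(\textup{CL})}\widehat{\Sig})(\widehat{\Delta}^{(init)}-\Delta^{\mathcal{A}})+\widehat{\Omega}^{(\textup{CL})}\big[(\widehat{\Sig}^{\mathcal{A}}-\Sig^{\mathcal{A}})-(\widehat{\Sig}-\Sig)(I_p+\Delta^{\mathcal{A}})\big]$, which is controlled by (a), (c), (d), the column‑wise $\ell_1$ rate just obtained, and $\max_j\|(I_p+\Delta^{\mathcal{A}})_{.,j}\|_1\le 1+h$. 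Interpreting (\ref{eq-DeltaA-est}) as adaptive thresholding of $G$, a basic‑inequality argument that exploits the row‑wise $\ell_1$‑sparsity of $\Delta^{\mathcal{A}}$ ($\max_j\|r_j(\Delta^{\mathcal{A}})\|_1\le h$) then shows $\widehat{\Delta}^{\mathcal{A}}$ retains a row‑wise $\ell_1$‑budget of order $h$ and attains $\max_j\|r_j(\widehat{\Delta}^{\mathcal{A}}-\Delta^{\mathcal{A}})\|_2^2\lesssim h\delta_n$.

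Finally, Step~2 is handled column by column as a CLIME program against the moment equation $\Sig^{\mathcal{A}}\Omega_j=r_j(\Delta^{\mathcal{A}})+e_j$. Using event~(a) and the row‑wise bounds on $\widehat{\Delta}^{\mathcal{A}}$, $\Omega_j$ (or a feasible surrogate obtained by perturbing it) satisfies the Step‑2 constraint up to $\lam_{\Theta}=c_2\sqrt{\log p/n_{\mathcal{A}}}$ plus a sparse, $\ell_2$‑small slack driven by $r_j(\widehat{\Delta}^{\mathcal{A}}-\Delta^{\mathcal{A}})$; $\ell_1$‑minimality then gives the cone condition for $v=\widehat{\Theta}_j-\Omega_j$, and $v^{\intercal}\widehat{\Sig}^{\mathcal{A}}v\le\|v\|_1\|\widehat{\Sig}^{\mathcal{A}}v\|_{\infty}$ combined with (b) and $|v^{\intercal}r_j(\widehat{\Delta}^{\mathcal{A}}-\Delta^{\mathcal{A}})|\le\|v\|_2\sqrt{h\delta_n}$ yields $\|v\|_2^2\lesssim s\lam_{\Theta}^2+h\delta_n\lesssim \frac{s\log p}{n_{\mathcal{A}}}+h\delta_n$, where the assumption $h\lesssim s\sqrt{\log p/n}$ (equivalently $h\delta_n\lesssim s\log p/n$) is used to absorb the cross terms.

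\textbf{Main obstacle.} The delicate part is the two‑stage construction of $\widehat{\Delta}^{\mathcal{A}}$ and how its error propagates into Step~2: one must show that the refined $\widehat{\Delta}^{\mathcal{A}}$ simultaneously keeps a row‑wise $\ell_1$ budget $O(h)$ and a row‑wise $\ell_2^2$ rate $O(h\delta_n)$, and that this error — which is necessarily governed by the \emph{primary} sample size $n$, since only the primary data identify $\Delta^{\mathcal{A}}$ — enters the final Step‑2 bound only through $h\delta_n$ and does not corrupt the faster auxiliary rate $s\log p/n_{\mathcal{A}}$. This is exactly where the sparsity restrictions $h\lesssim s\sqrt{\log p/n}$ and $s^2\log p\lesssim n$ are essential, and where the independence between the auxiliary samples and the primary‑data‑based $\widehat{\Omega}^{(\textup{CL})}$ must be carefully exploited so that the auxiliary‑driven fluctuations (and in particular the cross term $\widehat{\Omega}^{(\textup{CL})}(\widehat{\Sig}^{\mathcal{A}}-\Sig^{\mathcal{A}})$) are pinned at rate $\sqrt{\log p/n_{\mathcal{A}}}$ rather than being inflated by the dimension.
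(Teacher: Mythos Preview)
Your Step~1 analysis of $\widehat{\Delta}^{(init)}$ and $\widehat{\Delta}^{\mathcal{A}}$ is essentially what the paper does (its Lemma~\ref{lem1-thm1}): feasibility of $\Delta^{\mathcal{A}}$, the decomposition $G-\Delta^{\mathcal{A}}=rem_1+rem_2+rem_3$, and the row‑wise rates $\max_j\|r_j(\widehat{\Delta}^{\mathcal{A}}-\Delta^{\mathcal{A}})\|_1\le 2h$, $\max_j\|r_j(\widehat{\Delta}^{\mathcal{A}}-\Delta^{\mathcal{A}})\|_2^2\lesssim h\delta_n$ are all correct.

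The gap is in Step~2. You write ``$\ell_1$‑minimality then gives the cone condition for $v=\widehat{\Theta}_j-\Omega_j$,'' but $\Omega_j$ is \emph{not} feasible for (\ref{eq-Theta-est}) with $\lam_{\Theta}\asymp\sqrt{\log p/n_{\mathcal A}}$: the residual is $\widehat{\Sig}^{\mathcal A}\Omega_j-(e_j+r_j(\widehat{\Delta}^{\mathcal A}))=(\widehat{\Sig}^{\mathcal A}-\Sig^{\mathcal A})\Omega_j-r_j(\widehat{\Delta}^{\mathcal A}-\Delta^{\mathcal A})$, and the second piece has $\ell_\infty$‑norm of order $\sqrt{\log p/n}\gg\lam_{\Theta}$ when $n_{\mathcal A}\gg n$. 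So $\ell_1$‑minimality does not give $\|\widehat{\Theta}_j\|_1\le\|\Omega_j\|_1$, and the cone condition fails. Your hedge --- ``a feasible surrogate obtained by perturbing it'' --- does not close the gap: the natural feasible surrogate $\Theta^*_j=\Omega_j+\Omega^{\mathcal A}r_j(\widehat{\Delta}^{\mathcal A}-\Delta^{\mathcal A})$ is only $\ell_2$‑close to $\Omega_j$ (since only $\|\Omega^{\mathcal A}\|_2$ is controlled, not its $\ell_1$‑operator norm), so $\|\widehat{\Theta}_j\|_1\le\|\Theta^*_j\|_1$ does not yield a useful restricted cone around $\Omega_j$. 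Without the cone, neither the RE step nor the bound $\|v\|_1\lesssim\sqrt{s}\|v\|_2$ you need to close the inequality $v^{\intercal}\widehat{\Sig}^{\mathcal A}v\le 2\lam_{\Theta}\|v\|_1+\|v\|_2\sqrt{h\delta_n}$ is available.

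The paper sidesteps this exactly by refusing to run a direct CLIME argument. It introduces a Lasso companion $\widehat{\Theta}^L_j=\arg\min\{\tfrac{1}{2}\omega^{\intercal}\widehat{\Sig}^{\mathcal A}\omega-\omega^{\intercal}(e_j+r_j(\widehat{\Delta}^{\mathcal A}))+\lam_{\Theta}\|\omega\|_1\}$ and runs the \citet{BRT09}‑style \emph{sparse approximation} oracle inequality with $\Theta^*_j$ playing the role of the ``true'' parameter and $\Omega_j$ its $s$‑sparse approximation. The key is that in this Lasso analysis only the prediction‑norm gap $(\Omega_j-\Theta^*_j)^{\intercal}\widehat{\Sig}^{\mathcal A}(\Omega_j-\Theta^*_j)\lesssim\|r_j(\widehat{\Delta}^{\mathcal A}-\Delta^{\mathcal A})\|_2^2\lesssim h\delta_n$ enters (not the $\ell_1$‑gap), which is precisely what your Step~1 delivers. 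This gives $\|\widehat{\Theta}^L_j-\Omega_j\|_2^2\lesssim s\lam_{\Theta}^2+h\delta_n$. Only then, using that $\widehat{\Theta}^L_j$ is itself feasible for the CLIME program (by the KKT conditions), does the paper transfer the bound from $\widehat{\Theta}^L_j$ to $\widehat{\Theta}_j$. If you want to repair your argument, this Lasso detour and the introduction of $\Theta^*_j$ as the non‑sparse population target are the missing ingredients.
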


\subsection{Proof of Theorem \ref{thm2}}
\begin{proof}[Proof of Theorem \ref{thm2}]
We now prove the theoretical properties of aggregation in Step 3.
We will first show that for any fixed $1\leq j\leq p$,
\begin{align}
\label{eq-w3}
\P\left(\|\widehat{\Omega}_j-\Omega_j\|_2^2\geq  \frac{c_1t}{n}+ \frac{c_2s\log p}{n_{\mA}+n}+c_3h\delta_n\wedge \frac{s\log p}{n}\right)\leq \exp\{-c_4t\}+2\exp\{-c_5\log p\}.
\end{align}
Next, we will show that
\begin{align*}
\P\left(\frac{1}{p}\|\widehat{\Omega}-\widehat{\Theta}\|_F^2\geq \frac{c_1t}{n}+c_2\frac{s\log p}{n+n_{\mA}}+c_3h\delta_n\wedge \frac{s\log p}{n}\right)\leq \frac{c_4}{t}+\exp\{-c_5\log p\}.
\end{align*}

For any $v\in\R^2$,
\begin{align}
\label{eq3-pf3}
&\|(\widehat{\Omega}^{(\textup{CL})}_j,\widehat{\Theta}_j)\hat{v}_j-\Omega_j\|_2^2\leq  2\|(\widehat{\Omega}^{(\textup{CL})}_j,\widehat{\Theta}_j)(\hat{v}_j-v_j)\|_2^2+2\|(\widehat{\Omega}^{(\textup{CL})}_j,\widehat{\Theta}_j)v_j-\Omega_j\|_2^2.
\end{align}

When $h\leq s\sqrt{\log p/n}\leq c_1$ and $n\leq n_{\mA}$, we consider $v_j=(0,1)^{\intercal}$.
\begin{align*}
\hat{v}_j&=\{\widehat{W}(j)\}^{-1}\begin{pmatrix}
\widehat{\Omega}^{(\textup{CL})}_{j,j}-(\widehat{\Omega}^{(\textup{CL})}_j)^{\intercal}\widetilde{\Sig}\widehat{\Theta}_j\\
\widehat{\Theta}_{j,j}-\widehat{\Theta}_j^{\intercal}\widetilde{\Sig}\widehat{\Theta}_j
\end{pmatrix}+\{\widehat{W}(j)\}^{-1}\begin{pmatrix}
(\widehat{\Omega}^{(\textup{CL})}_j)^{\intercal}\widetilde{\Sig}\widehat{\Theta}_j\\
\widehat{\Theta}_j^{\intercal}\widetilde{\Sig}\widehat{\Theta}_j
\end{pmatrix}\\
&=\{\widehat{W}(j)\}^{-1}\begin{pmatrix}
(\widehat{\Omega}^{(\textup{CL})}_{j})^{\intercal}(e_j-\widetilde{\Sig}\Omega_{j})\\
\widehat{\Theta}_j^{\intercal}(e_j-\widetilde{\Sig}\Omega_{j})
\end{pmatrix}+
\{\widehat{W}(j)\}^{-1}\begin{pmatrix}
(\widehat{\Omega}^{(\textup{CL})}_j)^{\intercal}\widetilde{\Sig}(\widehat{\Theta}_j-\Omega_{j})\\
\widehat{\Theta}_j^{\intercal}\widetilde{\Sig}(\widehat{\Theta}_j-\Omega_{j})
\end{pmatrix}+v_j.
\end{align*}

Let $(\widehat{\Omega}^{(\textup{CL})}_j,\widehat{\Theta}_j)=U\Lambda V^{\intercal}$ denote its singular value decomposition for $U\in\R^{p\times 2}$, $\Lambda\in\R^{2\times 2}$. 
\begin{align*}
(\widehat{v}_j-v_j)\widehat{W}(j)(\hat{v}_j-v_j)&=(\hat{v}_j-v_j)^{\intercal}(\widehat{\Omega}^{(\textup{CL})}_j,\widehat{\Theta}_j)^{\intercal}(e_j-\widetilde{\Sig}\Omega_j)+(\hat{v}_j-v_j)^{\intercal}(\widehat{\Omega}^{(\textup{CL})}_j,\widehat{\Theta}_j)^{\intercal}\widetilde{\Sig}(\widehat{\Theta}_j-\Omega_j)\\
&=(\hat{v}_j-v_j)^{\intercal}V\Lambda U^{\intercal}(e_j-\widetilde{\Sig}\Omega_j)+(\hat{v}_j-v_j)^{\intercal}(\widehat{\Omega}^{(\textup{CL})}_j,\widehat{\Theta}_j)^{\intercal}\widetilde{\Sig}(\widehat{\Theta}_j-\Omega_j)\\
&\leq \|\Lambda V^{\intercal}(\hat{v}_j-v_j)\|_2\|\|U^{\intercal}(e_j-\widetilde{\Sig}\Omega_j)\|_2 +(\widehat{\Theta}_j-\Omega_j)^{\intercal}\widetilde{\Sig}(\widehat{\Theta}_j-\Omega_j)\\
&\quad +\frac{1}{2}(\widehat{v}_j-v_j)\widehat{W}(j)(\hat{v}_j-v_j),
\end{align*}
where the last step is by Young's inequality.
It implies that
\begin{align}
\frac{1}{2}(\widehat{v}_j-v_j)^{\intercal}\widehat{W}(j)(\hat{v}_j-v_j)\leq  \|V\Lambda(\hat{v}_j-v_j)\|_2\|U^{\intercal}(e_j-\widetilde{\Sig}\Omega_j)\|_2 +(\widehat{\Theta}_j-\Omega_j)^{\intercal}\widetilde{\Sig}(\widehat{\Theta}_j-\Omega_j).\label{eq1-pf3}
\end{align}
For the left hand side,
\begin{align*}
(\widehat{v}_j-v_j)^{\intercal}\widehat{W}(j)(\hat{v}_j-v_j)&=\langle (\widehat{\Omega}^{(\textup{CL})}_j,\widehat{\Theta}_j)(\hat{v}_j-v_j),\widetilde{\Sig}(\widehat{\Omega}^{(\textup{CL})}_j,\widehat{\Theta}_j)(\hat{v}_j-v_j)\rangle.
\end{align*}
We have,
\begin{align*}
   (\widehat{v}_j-v_j)\widehat{W}(j)(\hat{v}_j-v_j)&\geq \|\Lambda V^{\intercal}(\hat{v}_j-v_j)\|_2^2\Lambda_{\min}(U^{\intercal}\widetilde{\Sig}U).
\end{align*}
Since $U$ is independent of $\widetilde{\Sig}$, it is easy to show that
\[
   \Lambda_{\min}(U^{\intercal}\widetilde{\Sig}U)\geq  \Lambda_{\min}(U^{\intercal}\Sig U)-O_P(n^{-1/2})\geq \Lambda_{\min}(\Sig)-O_P(n^{-1/2}).
\]
We arrive at
\begin{align*}
 (\widehat{v}_j-v_j)\widehat{W}(j)(\hat{v}_j-v_j)\geq \|\Lambda V^{\intercal}(\hat{v}_j-v_j)\|_2^2(\Lambda_{\min}(\Sig)-O_P(n^{-1/2})).
\end{align*}
For the right hand side of (\ref{eq1-pf3}),
\begin{align*}
&\P\left(\|U^{\intercal}(e_j-\widetilde{\Sig}\Omega_j)\|_2\geq c_1t\right)\leq \exp(-c_2nt^2)\\
&\P\left((\widehat{\Theta}_j-\Omega_j)^{\intercal}\widetilde{\Sig}(\widehat{\Theta}_j-\Omega_j)\geq (\widehat{\Theta}_j-\Omega)^{\intercal}\Sig(\widehat{\Theta}_j-\Omega_j)(1+t)\right)\leq \exp(-nt^2).
\end{align*}
where the first line is due to $U$ is independent of $\widetilde{\Sig}$ and $ \|U_j\|_2=1$.
To summarize, for $v_j=(0,1)^{\intercal}$,
\begin{align}
\label{eq-v1}
\P\left(\|\Lambda V^{\intercal}(\hat{v}_j-v_j)\|_2^2\geq c_1\frac{t}{n}+c_2\|\widehat{\Theta}_j-\Omega_j\|_{2}^2\right)\leq \exp\{-c_3t\}+\exp\{-c_4\log p\}.
\end{align}
Notice that
\[
  \|\Lambda V^{\intercal}(\hat{v}_j-v_j)\|_2^2=\|(\widehat{\Omega}^{(\textup{CL})}_j,\widehat{\Theta}_j)(\hat{v}_j-v_j)\|_2^2.
\]
Invoking  that (\ref{eq3-pf3}), for $v=(0,1)^{\intercal}$,
\begin{align}
\label{eq-w1}
\P\left(\|\widehat{\Omega}_j-\Omega_j\|_2^2\geq c_1\frac{t}{n}+c_2\frac{s\log p}{n_{\mA}+n}+h\delta_n\right)\leq \exp\{-c_3t\}+\exp\{-c_4\log p\}+\exp\{-c_5n\},
\end{align}
where $\frac{s\log p}{n_{\mA}+n}+h\delta_n=\frac{s\log p}{n_{\mA}+n}+h\delta_n\wedge s\log p/n$ in the current scenario.

If  $h\geq s\sqrt{\log p/n}$ or $n>n_{\mA}$, we consider $v=(1,0)^{\intercal}$. Repeating above arguments, we have
\begin{align}
&\P\left(\|\Lambda V^{\intercal}(\hat{v}_j-v_j)\|_2^2\geq \frac{t}{n}+\|\widehat{\Omega}^{(\textup{CL})}-\Omega\|_{\infty,2}^2\right)\exp\{-c_3t\}+\exp\{-c_4\log p\}.\label{eq-v2}\\
&\P\left(\|\widehat{\Omega}_j-\Omega_j\|_2^2\geq \frac{t}{n}+\frac{s\log p}{n}\right)\leq \exp\{-c_3t\}+\exp\{-c_4\log p\}+\exp\{-c_5n\},\label{eq-w2}
\end{align}
where $s\log p/n\asymp \frac{s\log p}{n_{\mA}+n}+h\delta_n\wedge s\log p/n$ in the current scenario.
We can conclude (\ref{eq-w3}) from (\ref{eq-w1}) and (\ref{eq-w2}).

 Next, we establish the upper bound under Frobenius norm. We only prove for $v_j=(0,1)$, $1\leq j\leq p$.

It follows from (\ref{eq1-pf3}) that
\[
\sum_{j=1}^p\frac{1}{4}\|V_j\Lambda_j(\hat{v}_j-v_j)\|_2^2\leq  \sum_{j=1}^p\|U_j^{\intercal}(e_j-\widetilde{\Sig}\Omega_j)\|^2_2 +\sum_{j=1}^p(\widehat{\Theta}_j-\Omega_j)^{\intercal}\widetilde{\Sig}(\widehat{\Theta}_j-\Omega_j).
\]
That is,
\begin{align}
\frac{1}{4}\|\widehat{\Omega}-\widehat{\Theta}\|_F^2\leq  \sum_{j=1}^p\|U_j^{\intercal}(e_j-\widetilde{\Sig}\Omega_j)\|^2_2 +\|\widetilde{\Sig}^{1/2}(\widehat{\Theta}-\Omega)\|_F^2.\label{eq1-pf3}
\end{align}
For the second term on the RHS of (\ref{eq1-pf3}),
\begin{align*}
\|\widetilde{\Sig}^{1/2}(\widehat{\Theta}-\Omega)\|_F^2&\leq \|\Sig^{1/2}(\widehat{\Theta}-\Omega)\|_F^2+p\max_{1\leq j\leq p} |(\widehat{\Theta}_j-\Omega_j)^{\intercal}(\widetilde{\Sig}-\Sig)(\widehat{\Theta}_j-\Omega_j)|\\
&\leq Cp\|\widehat{\Theta}-\Omega\|_{\infty,2}^2+p\max_{1\leq j\leq p} |(\widehat{\Theta}_j-\Omega_j)^{\intercal}(\widetilde{\Sig}-\Sig)(\widehat{\Theta}_j-\Omega_j)|,
\end{align*}
where
\begin{align*}
\P\left(\max_{1\leq j\leq p} |(\widehat{\Theta}_j-\Omega_j)^{\intercal}(\widetilde{\Sig}-\Sig)(\widehat{\Theta}_j-\Omega_j)| \geq t\|\widehat{\Theta}-\Omega\|_{\infty,2}^2\right)\leq p\exp\{-c_2nt^2\}.
\end{align*}
Hence, for large enough constant $c_1$
\begin{align*}
\P\left(\frac{1}{p}\|\widetilde{\Sig}^{1/2}(\widehat{\Theta}-\Omega)\|_F^2\geq c_1\|\widehat{\Theta}-\Omega\|_{\infty,2}^2\right)\leq \exp\{-c_2\log p\}.
\end{align*}
For the first term on the RHS of (\ref{eq1-pf3}),
we know that 
\begin{align*}
\E[\sum_{j=1}^p\|U_j^{\intercal}(e_j-\widetilde{\Sig}\Omega_j)\|_2^2]=Cp/n,
\end{align*}
for some constant $C>0$. By Markov's inequality,
\begin{align*}
\P\left(\sum_{j=1}^p\|U_j^{\intercal}(e_j-\widetilde{\Sig}\Omega_j)\|_2^2\geq \frac{(C+t)p}{n}\right)\leq \frac{C}{t}.
\end{align*}
We arrive at
\begin{align*}
\P\left(\frac{1}{p}\|\widehat{\Omega}-\widehat{\Theta}\|_F^2\geq c_1\frac{t}{n}+c_2\frac{s\log p}{n+n_{\mA}}+c_3h\delta_n\right)\leq \frac{c_4}{t}+\exp\{-c_5\log p\}.
\end{align*}
Applying the above arguments for $v=(1,0)^{\intercal}$, one can obtain the desired bound under Frobenius norm.
\end{proof}

\subsection{Proof of Lemma \ref{lem0-thm1} and Lemma \ref{thm1}}

\begin{proof}[Proof of Lemma \ref{lem0-thm1}]
By Lemma 1 in \citet{Cai11}, for $1\leq j\leq p$,
\begin{align*}
   \widehat{\Omega}^{(\textup{CL})}_{j}&=\argmin_{\omega}\|\omega\|_1 \\
   &\text{subject to}~~ \|\widehat{\Sig}\omega-e_j\|_{\infty}\leq \lam_{\textup{CL}}.
\end{align*}
For $\lam_{\textup{CL}}\geq c\sqrt{\log p/n}$ with large enough constant $c$, $\Omega_j$ is a feasible solution to the above optimization. Hence,
\[
   (  \widehat{\Omega}^{(\textup{CL})}_{j}-\Omega_j)^{\intercal}\widehat{\Sig}( \widehat{\Omega}^{(\textup{CL})}_{j}-\Omega_j)\leq 2\|  \widehat{\Omega}^{(\textup{CL})}_{j}-\Omega_j\|_1\lam_{\textup{CL}}.
\]
Moreover, $\|\widehat{\Omega}^{(\textup{CL})}_{j}\|_1\leq \|\Omega_j\|_1$. Using the sparsity of $\Omega_j$, standard analysis lead to the desired results assuming $s\log p=o(n)$. Notice that Theorem \ref{thm2} assumes $s\sqrt{\log p}=O(\sqrt{n})$, hence $s\log p=o(n)$ is guaranteed.
\end{proof}

\begin{lemma}
\label{lem1-thm1}
Under the conditions of Theorem \ref{thm2}, when $h\lesssim s\sqrt{\log p/n}$ and $n_{\mA}\gtrsim n$, we have
\[
\P\left(\max_j\|r_j(\widehat{\Delta}^{\mA})-r_j(\Delta^{\mA})\|_2^2\geq c_1h\delta_n\right)\leq \exp(-c_2\log p)+\exp(-c_3n)
\]
for some positive constants $c_1$, $c_2$ and $c_3$.
\end{lemma}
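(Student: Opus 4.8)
The plan is to reduce the lemma to a single deterministic, entrywise bound on the bias-corrected matrix
\[
B:=\widehat{\Delta}^{(init)}+\widehat{\Omega}^{(\textup{CL})}\bigl(\widehat{\Sig}^{\mA}-\widehat{\Sig}-\widehat{\Sig}\widehat{\Delta}^{(init)}\bigr),
\]
and then to finish by an elementary soft-thresholding computation. The first observation is that, because the constraint in (\ref{eq-DeltaA-est}) is imposed coordinate by coordinate and the objective $\|\Delta\|_1=\sum_{i,j}|\Delta_{i,j}|$ is separable, the minimizer $\widehat{\Delta}^{\mA}$ is \emph{exactly} the entrywise soft-thresholding of $B$ at level $2\lam_{\Delta}$; this makes precise the ``adaptive thresholding'' remark after (\ref{eq-DeltaA-est}). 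The second observation is structural: since $\mathcal{D}_1(\Omega^{(k)},\Omega)\le h$ for every $k\in\mA$ and $\Delta^{\mA}=\sum_{k\in\mA}\alpha_k\Delta^{(k)}$ is a convex combination, the true divergence matrix satisfies $\max_j\|r_j(\Delta^{\mA})\|_1\le h$ (and likewise for columns). Finally, under the hypotheses $h\lesssim s\sqrt{\log p/n}$ and $s^2\log p\lesssim n$ one has $h=O(1)$.

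Granting, for the moment, that $\|B-\Delta^{\mA}\|_{\infty,\infty}\le\lam_{\Delta}$ on an event of probability at least $1-\exp(-c\log p)-\exp(-c'n)$, the lemma follows quickly. A short case check on the soft-thresholding map at level $2\lam_{\Delta}$ shows that on that event $|\widehat{\Delta}^{\mA}_{i,j}-\Delta^{\mA}_{i,j}|\le 3\min\{\lam_{\Delta},|\Delta^{\mA}_{i,j}|\}$ for all $i,j$, whence, for every $j$,
\[
\|r_j(\widehat{\Delta}^{\mA})-r_j(\Delta^{\mA})\|_2^2\le 9\sum_{i}\min\{\lam_{\Delta}^2,(\Delta^{\mA}_{i,j})^2\}\le 9\min\bigl\{\|r_j(\Delta^{\mA})\|_2^2,\ \lam_{\Delta}\|r_j(\Delta^{\mA})\|_1\bigr\}\le 9\min\{h^2,\lam_{\Delta}h\},
\]
using $\min\{a^2,b^2\}\le ab$ and the row-$\ell_1$ bound above. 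Since $\lam_{\Delta}\asymp\sqrt{\log p/n}$ and $\delta_n=\sqrt{\log p/n}\wedge h$, the right-hand side is $O(h\delta_n)$ uniformly in $j$, which is the assertion (the ``$\exp$'' terms propagate unchanged).

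The heart of the argument is therefore the estimate $\|B-\Delta^{\mA}\|_{\infty,\infty}\le\lam_{\Delta}=c_1\sqrt{\log p/n}$ for $c_1$ large. I would obtain it from the ``one-step'' decomposition
\[
B-\Delta^{\mA}=\bigl(I_p-\widehat{\Omega}^{(\textup{CL})}\widehat{\Sig}\bigr)\bigl(\widehat{\Delta}^{(init)}-\Delta^{\mA}\bigr)+\widehat{\Omega}^{(\textup{CL})}\bigl(\widehat{\Sig}^{\mA}-\widehat{\Sig}-\widehat{\Sig}\Delta^{\mA}\bigr),
\]
which uses only the population moment equation $\Sig\Delta^{\mA}-(\Sig^{\mA}-\Sig)=0$ from (\ref{eq-m3a}). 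In the second term substitute $\widehat{\Sig}^{\mA}-\widehat{\Sig}-\widehat{\Sig}\Delta^{\mA}=(\widehat{\Sig}^{\mA}-\Sig^{\mA})-(\widehat{\Sig}-\Sig)-(\widehat{\Sig}-\Sig)\Delta^{\mA}$ and split $\widehat{\Omega}^{(\textup{CL})}=\Omega+(\widehat{\Omega}^{(\textup{CL})}-\Omega)$; in the first term use $I_p-\widehat{\Omega}^{(\textup{CL})}\widehat{\Sig}=-\Omega(\widehat{\Sig}-\Sig)-(\widehat{\Omega}^{(\textup{CL})}-\Omega)\widehat{\Sig}$ (from $\Omega\Sig=I_p$). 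Each resulting piece is bounded by one of two mechanisms: (i) bilinear-form concentration for scalars of the type $\Omega_{i,.}^{\intercal}(\widehat{\Sig}-\Sig)_{.,l}$, $\Omega_{i,.}^{\intercal}(\widehat{\Sig}^{\mA}-\Sig^{\mA})_{.,l}$, or $\Omega_{i,.}^{\intercal}(\widehat{\Sig}-\Sig)\Delta^{\mA}_{.,l}$ — these are sub-exponential with parameter of order $\|\Omega_{i,.}\|_2=O(1)$ (resp.\ $\|\Omega_{i,.}\|_2\|\Delta^{\mA}_{.,l}\|_2=O(h)$), so after a union bound over the $O(p^2)$ pairs they are $O(\sqrt{\log p/n})$, respectively $O(\sqrt{\log p/n_{\mA}})\le O(\sqrt{\log p/n})$, \emph{with no factor of $\sqrt{s}$}; or (ii) for the pieces carrying $\widehat{\Omega}^{(\textup{CL})}-\Omega$, the row-wise CLIME rate $\max_i\|(\widehat{\Omega}^{(\textup{CL})}-\Omega)_{i,.}\|_1\lesssim s\sqrt{\log p/n}$ paired with $\|\widehat{\Sig}(\widehat{\Delta}^{(init)}-\Delta^{\mA})\|_{\infty,\infty}\lesssim\lam_{\Delta}$ (Step-1 feasibility plus concentration) or with $\|\widehat{\Sig}^{\mA}-\Sig^{\mA}\|_{\infty,\infty}\lesssim\sqrt{\log p/n_{\mA}}$, giving products of order $s\log p/n\lesssim\sqrt{\log p/n}$ precisely because $s^2\log p\lesssim n$. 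The remaining piece $\Omega(\widehat{\Sig}-\Sig)(\widehat{\Delta}^{(init)}-\Delta^{\mA})$ is $\le\|\Omega(\widehat{\Sig}-\Sig)\|_{\infty,\infty}\max_j\|(\widehat{\Delta}^{(init)}-\Delta^{\mA})_{.,j}\|_1\lesssim\sqrt{\log p/n}\cdot h=O(\sqrt{\log p/n})$, where $\|(\widehat{\Delta}^{(init)}-\Delta^{\mA})_{.,j}\|_1\le 2h$ follows from the $\ell_1$-minimality in (\ref{eq-Delta-init}) together with the feasibility of $\Delta^{\mA}$ there — itself a consequence of the same moment equation, the concentration bounds, and $h=O(1)$. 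Summing the $O(1)$ many pieces and choosing $c_1$ large absorbs all constants.

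The main obstacle is not any single inequality but the bookkeeping that prevents the factor $s$ from ever multiplying the smaller scale $\sqrt{\log p/n_{\mA}}$ (which would force the stronger $n_{\mA}\gtrsim sn$): the device is to control the ``$\Omega$-part'' of $\widehat{\Omega}^{(\textup{CL})}$ by bilinear-form concentration using $\|\Omega_{i,.}\|_2=O(1)$, rather than by a crude $\ell_1$--$\ell_\infty$ split that would cost $\|\Omega_{i,.}\|_1=O(\sqrt s)$, and to push all the $s$-cost onto the small term $\widehat{\Omega}^{(\textup{CL})}-\Omega$, which is always hit by a $\sqrt{\log p/n}$-scale factor. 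A secondary point is the row-wise CLIME bound used in (ii): Lemma \ref{lem0-thm1} is stated column-wise and $\widehat{\Omega}^{(\textup{CL})}$ need not be symmetric, so I would either repeat the argument of \citet{Cai11} on the columns of $(\widehat{\Omega}^{(\textup{CL})})^{\intercal}$ — legitimate because $\Omega=\Omega^{\intercal}$ is $s$-column-sparse, hence $s$-row-sparse — or symmetrize $\widehat{\Omega}^{(\textup{CL})}$ at the outset. The two probability terms in the conclusion then arise from, respectively, the union bounds for the Gaussian (bilinear) concentration events and the restricted-eigenvalue/$\ell_\infty$-operator events for $\widehat{\Sig}$ already invoked in Lemma \ref{lem0-thm1}.
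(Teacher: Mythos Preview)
Your proposal is correct and follows essentially the same three-step route as the paper's proof: feasibility of $\Delta^{\mA}$ in (\ref{eq-Delta-init}) giving the column-$\ell_1$ bound $\max_j\|\widehat{\Delta}^{(init)}_{.,j}-\Delta^{\mA}_{.,j}\|_1\le 2h$, the identical $rem_1+rem_2+rem_3$ decomposition for the sup-norm bound on $B-\Delta^{\mA}$, and then the row-wise $\ell_2$ estimate on $\widehat{\Delta}^{\mA}-\Delta^{\mA}$. The only differences are cosmetic: for the last step you observe that (\ref{eq-DeltaA-est}) is literally entrywise soft-thresholding of $B$ at level $2\lam_\Delta$ and use the coordinate bound $|\widehat{\Delta}^{\mA}_{i,j}-\Delta^{\mA}_{i,j}|\le 3\min\{\lam_\Delta,|\Delta^{\mA}_{i,j}|\}$, whereas the paper argues via feasibility of $\Delta^{\mA}$ in (\ref{eq-DeltaA-est}) together with the H\"older split $\|r_j(\cdot)\|_2^2\le\|r_j(\cdot)\|_1\|r_j(\cdot)\|_\infty$; and the paper dispatches $rem_1$ in one stroke via $\|I_p-\widehat{\Omega}^{(\textup{CL})}\widehat{\Sig}\|_{\infty,\infty}\le\lam_{\textup{CL}}$ rather than splitting it further as you do.
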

\begin{proof}[Proof of Lemma \ref{lem1-thm1}]
We first show that $\Delta^{\mA}$ is a feasible solution to (\ref{eq-Delta-init}).
\begin{align*}
&\|\widehat{\Sig}\Delta^{\mA}-(\widehat{\Sig}^{\mA}-\widehat{\Sig})\|_{\infty,\infty}\\
&=\|(\widehat{\Sig}-\Sig)\Delta^{\mA}\|_{\infty,\infty}+\|\Sig\Delta^{\mA}-(\widehat{\Sig}^{\mA}-\widehat{\Sig})\|_{\infty,\infty},
\end{align*}
where
\begin{align*}
&\P(\|(\widehat{\Sig}-\Sig)\Delta^{\mA}\|_{\infty,\infty}>t)\leq p^2\exp\left\{-\frac{c_1nt^2}{\max_{j,k}\Sig_{k,k}(\Delta^{\mA}_{.,j})^{\intercal}\Sig\Delta^{\mA}_{.,j}}\right\}.\\
&\P(|\Sig\Delta^{\mA}-(\widehat{\Sig}^{\mA}-\widehat{\Sig})\|_{\infty,\infty}\geq t)\leq p^2\exp\left\{-\frac{c_2nt^2}{\max_{k}\Sig_{k,k}}\right\}.
\end{align*}
Notice that
\begin{align*}
\max_{j,k}\Sig_{k,k}(\Delta^{\mA}_{.,j})^{\intercal}\Sig\Delta^{\mA}_{.,j}=\max_{j\leq p} (\Sig^{\mA}-\Sig)_{.,j}^{\intercal}\Omega(\Sig^{\mA}-\Sig)_{.,j}\leq  C^3.
\end{align*}
Hence, for $\lam_{\Delta}\geq c_1\sqrt{\log p/n}$, $\Delta^{\mA}$ is feasible and
\[
   \|\widehat{\Delta}^{(init)}_{j}\|_1\leq \|\Delta^{\mA}_{j}\|_1\leq h.
\]
Therefore,
\[
\P\left(\max_j\|\widehat{\Delta}_j^{(init)}-\Delta_j^{\mA}\|_1\leq 2h\right)\geq 1-\exp(-c_1n)-\exp(-c_2\log p).
\]
\underline{Step (ii)}
We now provide a sup-norm bound on the error of 
\[
   \widehat{\Delta}^{(db)}=\widehat{\Delta}^{(init)}+\widehat{\Omega}^{(\textup{CL})}(\widehat{\Sig}^{\mA}-\widehat{\Sig}-\widehat{\Sig}\widehat{\Delta}^{(init)}).
\]
This is essentially a debiased estimator of $\Delta^{\mA}$.
\begin{align}
 \widehat{\Delta}^{(db)}- \Delta^{\mA}&=\widehat{\Delta}^{(init)}-\Delta^{\mA}+\widehat{\Omega}^{(\textup{CL})}(\widehat{\Sig}^{\mA}-\widehat{\Sig}-\widehat{\Sig}\Delta^{\mA})-\widehat{\Omega}^{(\textup{CL})}\widehat{\Sig}(\widehat{\Delta}^{(init)}-\Delta^{\mA})\nonumber\\
 &=(I_p-\widehat{\Omega}^{(\textup{CL})}\widehat{\Sig})(\widehat{\Delta}^{(init)}-\Delta^{\mA})+\widehat{\Omega}^{(\textup{CL})}(\widehat{\Sig}^{\mA}-\Sig^{\mA}-(\widehat{\Sig}-\Sig)(I_p+\Delta^{\mA}))\nonumber\\
 &=\underbrace{(I_p-\widehat{\Omega}^{(\textup{CL})}\widehat{\Sig})(\widehat{\Delta}^{(init)}-\Delta^{\mA})}_{rem_1}+\underbrace{\Omega(\widehat{\Sig}^{\mA}-\Sig^{\mA}-(\widehat{\Sig}-\Sig)(I_p+\Delta^{\mA}))}_{rem_2}\nonumber\\
 &\quad \quad + \underbrace{(\widehat{\Omega}^{(\textup{CL})}-\Omega)(\widehat{\Sig}^{\mA}-\Sig^{\mA}-(\widehat{\Sig}-\Sig)(I_p+\Delta^{\mA}))}_{rem_3},\label{eq-rem}
\end{align}
where
\begin{align*}
&\|rem_1\|_{\infty,\infty}\leq  \|I_p-\widehat{\Omega}^{(\textup{CL})}\widehat{\Sig}\|_{\infty}\|\max_{j\leq p}\|\widehat{\Delta}^{(init)}_j-\Delta^{\mA}_j\|_1\leq h\lam_{\textup{CL}}\\
&\|rem_2\|_{\infty,\infty}\leq C\sqrt{\frac{\log p}{n}}\\
&\|rem_3\|_{\infty,\infty}\leq \|\widehat{\Omega}^{(\textup{CL})}-\Omega\|_{\infty,1}\|\widehat{\Sig}^{\mA}-\Sig^{\mA}-(\widehat{\Sig}-\Sig)(I_p+\Delta^{\mA})\|_{\infty,\infty}\leq C\frac{s\log p}{n}.
\end{align*}
for a large enough constant $C$ with probability at least $1-\exp(-c_1\log p)-\exp(-c_2n)$. The last line follows from Lemma \ref{lem0-thm1} and $n_{\mA}\gtrsim n$.
To summarize, we have with probability at least $1-\exp(-c_1\log p)-\exp(-c_2n)$,
\begin{align*}
\left\| \widehat{\Delta}^{(db)}- \Delta^{\mA}\right\|_{\infty,\infty}\leq c_1\sqrt{\frac{\log p}{n}}+c_2\frac{s\log p}{n}.
\end{align*}
Under the assumption that $s\sqrt{\log p}\lesssim n$, we have
\begin{align}
\label{eq0-pf1}
\left\| \widehat{\Delta}^{(db)}- \Delta^{\mA}\right\|_{\infty,\infty}\leq c_3\sqrt{\frac{\log p}{n}}
\end{align}
with probability at least $1-\exp(-c_1\log p)-\exp(-c_2n)$.

\underline{Step (iii)}. We are left to analyze (\ref{eq-DeltaA-est}). It is easy to show that $\Delta^{\mA}$ is a feasible solution to (\ref{eq-DeltaA-est}) with probability at least $1-\exp(-c_1\log p)-\exp(-c_2n)$. Using the fact that
\[
   \|r_j(\widehat{\Delta}^{\mA})\|_1\leq \|r_j(\Delta^{\mA})\|_1\leq h,
\]
we have
\begin{align*}
\max_{j\leq p}\|r_j(\widehat{\Delta}^{\mA}-\Delta^{\mA})\|_2^2\leq \max_{j\leq p}\|r_j(\widehat{\Delta}^{\mA}-\Delta^{\mA})\|_1\|r_j(\widehat{\Delta}^{\mA}-\Delta^{\mA})\|_{\infty}\leq Ch\delta_n
\end{align*}
for $\delta_n=\sqrt{\frac{\log p}{n}} \wedge h$ with probability at least $1-\exp(-c_1\log p)-\exp(-c_2n)$.
\end{proof}

\begin{proof}[Proof of Lemma \ref{thm1}]
By Lemma 1 in \citet{Cai11}, for $1\leq j\leq p$,
\begin{align}
   \widehat{\Theta}_{j}&=\argmin_{\omega}\|\omega\|_1 \label{opt-j}\\
   &\text{subject to}~~ \|\widehat{\Sig}^{\mA}\omega-(e_j+r_j(\widehat{\Delta}^{\mA})) \|_{\infty}\leq \lam_{\Theta}.\nonumber
\end{align}
We consider the Lasso version for the $j$-th column
\[
   \widehat{\Theta}^L_{j}=\argmin \frac{1}{2}\omega^{\intercal}\widehat{\Sig}^{\mA}\omega-\omega^{\intercal}(e_j+r_j(\widehat{\Delta}^{\mA}))+ \lam_{\Theta}\|\omega\|_1.
\]
Let $\Theta^*=\Omega+\Omega^{\mA}(\widehat{\Delta}^{\mA}-\Delta^{\mA})^{\intercal}$. The main idea of this proof is that we view $\Theta^*_j$ as the true parameter and view $\Omega_j$ as a spare approximation of $\Theta^*_j$. The arguments are analogous to Theorem 6.1 of \citet{BRT09}. We cannot directly use their arguments because the loss function here is in a different format.
Oracle inequality:
\begin{align*}
&\frac{1}{2}\langle \widehat{\Theta}^L_{j},\widehat{\Sig}^{\mA}\widehat{\Theta}^L_{j}\rangle-(\widehat{\Theta}^L_{j})^{\intercal}(e_j +r_j(\widehat{\Delta}^{\mA}))+ \lam_{\Theta}\|\widehat{\Theta}^L_{j}\|_1\\
&\quad\leq  \frac{1}{2}\langle \Omega_{j},\widehat{\Sig}^{\mA}\Omega_{j}\rangle-\Omega_{j}^{\intercal}(e_j +r_j(\widehat{\Delta}^{\mA}))+ \lam_{\Theta}\|\Omega_{j}\|_1\\
\implies& \frac{1}{2} \langle \widehat{\Theta}^L_{j}-\Theta^*_{j},\widehat{\Sig}^{\mA}(\widehat{\Theta}^L_{j}-\Theta^*_{j})\rangle\leq \frac{1}{2} \langle \Omega_{j}-\Theta^*_{j},\widehat{\Sig}^{\mA}(\Omega_{j}-\Theta^*_{j})\rangle\\
&\quad +|\langle \widehat{\Theta}^L_{j}-\Omega_{j},\widehat{\Sig}^{\mA}\Theta^*_j-(e_j +r_j(\widehat{\Delta}^{\mA}))\rangle|+ \lam_{\Theta}\|\Omega_{j}\|_1 - \lam_{\Theta}\|\widehat{\Theta}^L_{j}\|_1.
\end{align*}
In the event that
\begin{align*}
     \mathcal{E}_1&=\left\{  \|\widehat{\Sig}^{\mA}\Theta^*_{j}-(e_j -r_j(\widehat{\Delta}^{\mA}))\|_{\infty}\leq \lam_{\Theta}/2, \inf_{\|u_{S_j^c}\|_1\leq 6\|u_{S_j}\|_1\neq 0}\frac{u^{\intercal}\widehat{\Sig}^{\mA}u}{\|u\|_2^2}\geq \phi_0>0,\right.\\
    &\quad\quad \left. \langle \Omega_{j}-\Theta^*_{j},\widehat{\Sig}^{\mA}(\Omega_{j}-\Theta^*_{j})\rangle\leq C<\infty\right\},
\end{align*}
we have
\begin{align*}
\frac{1}{2} \langle \widehat{\Theta}^L_{j}-\Theta^*_{j},\widehat{\Sig}^{\mA}(\widehat{\Theta}^L_{j}-\Theta^*_{j})\rangle&\leq \frac{1}{2} \langle \Omega_{j}-\Theta^*_{j},\widehat{\Sig}^{\mA}(\Omega_{j}-\Theta^*_{j})\rangle \\
&\quad+\frac{\lam_{\Theta}}{2}\| \widehat{\Theta}^L_{j}-\Omega_{j}\|_1+ \lam_{\Theta}\|\Omega_{j}\|_1 - \lam_{\Theta}\|\widehat{\Theta}^L_{j}\|_1\\
&\leq  \frac{1}{2} \langle \Omega_{j}-\Theta^*_{j},\widehat{\Sig}^{\mA}(\Omega_{j}-\Theta^*_{j})\rangle \\
&\quad+\frac{3\lam_{\Theta}}{2}\| \widehat{\Theta}^L_{S_j,j}-\Omega_{S_j,j}\|_1- \frac{\lam_{\Theta}}{2}\|\widehat{\Theta}^L_{S_j^c,j}-\Omega_{S_j^c,j}\|_1
\end{align*}
The left hand side can be lower bounded by
\[
\frac{1}{2} \langle \widehat{\Theta}^L_{j}-\Omega_{j},\widehat{\Sig}^{\mA}(\widehat{\Theta}^L_{j}-\Omega_{j})\rangle-\frac{1}{4} \langle \Omega_{j}-\Theta^*_{j},\widehat{\Sig}^{\mA}(\Omega_{j}-\Theta^*_{j})\rangle.
\]
As a result,
\begin{align*}
\frac{1}{2} \langle \widehat{\Theta}^L_{j}-\Omega_{j},\widehat{\Sig}^{\mA}(\widehat{\Theta}^L_{j}-\Omega_{j})\rangle
&\leq  \frac{3}{4} \langle \Omega_{j}-\Theta^*_{j},\widehat{\Sig}^{\mA}(\Omega_{j}-\Theta^*_{.,j})\rangle \\
&\quad+\frac{3\lam_{\Theta}}{2}\| \widehat{\Theta}^L_{S_j,j}-\Omega_{S_j,j}\|_1- \frac{\lam_{\Theta}}{2}\|\widehat{\Theta}^L_{S_j^c,j}-\Omega_{S_j^c,j}\|_1
\end{align*}
(i) If 
\[
  \frac{3\lam_{\Theta}}{2}\| \widehat{\Theta}^L_{S_j,j}-\Omega_{S_j,j}\|_1\geq \frac{3}{4} \langle \Omega_{j}-\Theta^*_{j},\widehat{\Sig}^{\mA}(\Omega_{j}-\Theta^*_{j})\rangle ,
\]
then
\[
 \|\widehat{\Theta}^L_{S_j^c,j}-\Omega_{S_j^c,j}\|_1\leq 6\| \widehat{\Theta}^L_{S_j,j}-\Omega_{S_j,j}\|_1
\]
 and we use the RE condition in event $\mathcal{E}_1$ to arrive at
\begin{align*}
    \frac{\phi_0}{2} \| \widehat{\Theta}^L_{.,j}-\Omega_{j}\|_2^2&\leq 3\lam_{\Theta}\| \widehat{\Theta}^L_{S_j,j}-\Omega_{S_j,j}\|_1\leq 3\sqrt{s_j}\lam_{\Theta}\| \widehat{\Theta}^L_{S_j,j}-\Omega_{S_j,j}\|_2\\
    &\leq 3\sqrt{s_j}\lam_{\Theta}\| \widehat{\Theta}^L_{S_j,j}-\Omega_{S_j,j}\|_2,
\end{align*}
which gives
\begin{align*}
 \| \widehat{\Theta}^L_{j}-\Omega_{j}\|_2&\leq 3\sqrt{s_j}\lam_{\Theta}/\phi_0.
\end{align*}

(ii) If 
  \[
  \frac{3\lam_{\Theta}}{2}\| \widehat{\Theta}^L_{S_j,j}-\Omega_{S_j,j}\|_1\leq \frac{3}{4} \langle \Omega_{j}-\Theta^*_{j},\widehat{\Sig}^{\mA}(\Omega_{j}-\Theta^*_{j})\rangle ,
\]
then
\[
   \lam_{\Theta}\|\widehat{\Theta}^L_{j}-\Omega_{j}\|_1\leq 3\langle \Omega_{j}-\Theta^*_{j},\widehat{\Sig}^{\mA}(\Omega_{j}-\Theta^*_{j})\rangle
\]
We use Theorem 1 in \citet{Raskutti10} to arrive at for any $ \|u\|_1\leq \langle \Omega_{j}-\Theta^*_{j},\widehat{\Sig}^{\mA}(\Omega_{j}-\Theta^*_{j})\rangle/\lam_{\Theta}$,
\[
   u^{\intercal}\widehat{\Sig}^{\mA}u\geq u^{\intercal}\Sig^{\mA}u/4-\|\Sig^{\mA}\|_2\langle \Omega_{j}-\Theta^*_{j},\widehat{\Sig}^{\mA}(\Omega_{j}-\Theta^*_{j})\rangle^2.
\]
Therefore,
\begin{align*}
\frac{1}{8}\Lambda_{\min}(\Sig^{\mA})\|\widehat{\Theta}^L_{,j}-\Omega_j\|_2^2\leq  \frac{1}{2} \langle \Omega_{j}-\Theta^*_{j},\widehat{\Sig}^{\mA}(\Omega_{j}-\Theta^*_{j})\rangle(1+o(1)).
\end{align*}
To summarize, in event $\mathcal{E}_1$, we have
\begin{align}
\label{eq2-pf2}
\|\widehat{\Theta}^L_{j}-\Omega_j\|_2^2\leq C \langle \Omega_{j}-\Theta^*_{j},\widehat{\Sig}^{\mA}(\Omega_{j}-\Theta^*_{j})\rangle+s\lam_{\Theta}^2.
\end{align}
We now  verify $\P(\mathcal{E}_1)\rightarrow 1$ and bound $\langle \Omega_{j}-\Theta^*_{j},\widehat{\Sig}^{\mA}(\Omega_{j}-\Theta^*_{j})\rangle$.
Notice that $\Theta^*_{j}$ satisfies
\begin{align*}
   &\widehat{\Sig}^{\mA}\Theta^*_{j}-(e_j +r_j(\widehat{\Delta}^{\mA}))=\widehat{\Sig}^{\mA}\Omega_j+\widehat{\Sig}^{\mA}\Omega^{\mA}r_j(\widehat{\Delta}^{\mA}-\Delta^{\mA})-e_j -r_j(\widehat{\Delta}^{\mA})\\
   &=\widehat{\Sig}^{\mA}\Omega_j-e_j -r_j(\Delta^{\mA})+(\widehat{\Sig}^{\mA}\Omega^{\mA}-I_p)r_j(\widehat{\Delta}^{\mA}-\Delta^{\mA})\\
   &=(\widehat{\Sig}^{\mA}-\Sig^{\mA})\Omega_j+(\widehat{\Sig}^{\mA}\Omega^{\mA}-I_p)r_j(\widehat{\Delta}^{\mA}-\Delta^{\mA}).
\end{align*}
It is easy to show that with probability at least $1-\exp(-c_1\log p)$,
\begin{align*}
\|(\widehat{\Sig}^{\mA}-\Sig^{\mA})\Omega_j\|_{\infty,\infty}\leq c_2\|\Sig^{\mA}\|_2^{1/2}\{\Omega_j^{\intercal}\Sig^{\mA}\Omega_j\}^{1/2}\sqrt{\frac{\log p}{n_{\mA}}}\leq c_3\sqrt{\frac{\log p}{n_{\mA}}},
\end{align*}
where the last step is due to Condition \ref{cond1}. Similarly, with probability at least $1-\exp(-c_1\log p)-\exp(-c_2n)$,
\begin{align*}
&\|(\widehat{\Sig}^{\mA}\Omega^{\mA}-I_p)r_j(\widehat{\Delta}^{\mA}-\Delta^{\mA})\|_{\infty,\infty}\leq \|\widehat{\Sig}^{\mA}\Omega^{\mA}-I_p\|_{\infty,\infty}\|r_j(\widehat{\Delta}^{\mA}-\Delta^{\mA})\|_{\infty,1}\\
&\leq h\|\Sig^{\mA}\|_2^{1/2}\|\Omega^{\mA}\|_2^{1/2}\sqrt{\frac{\log p}{n_{\mA}}}.
\end{align*}
By Condition \ref{cond1} and $h\lesssim s\sqrt{\log p/n}\leq c_1$, it suffices to take $\lam_{\Theta}\geq C\sqrt{\log p/n_{\mA}}$. The restricted eigenvalue condition in $\mathcal{E}_1$ holds with probability $1-\exp(-c_1\log p)-\exp(-c_2n)$ provided that $s\log p=o(n_{\mA})$.
 
Finally, we bound the following term
\begin{align*}
\langle \Omega_j-\Theta^*_{j},\widehat{\Sig}^{\mA}(\Omega_j-\Theta^*_{j})\rangle&=\langle \Omega^{\mA}r_j(\widehat{\Delta}^{\mA}-\Delta^{\mA}),\widehat{\Sig}^{\mA}\Omega^{\mA}r_j(\widehat{\Delta}^{\mA}-\Delta^{\mA})\rangle\\
&\leq \max_j\|r_j(\widehat{\Delta}^{\mA}-\Delta^{\mA})\|_2^2\sup_{\|u\|_1\leq h} u^{\intercal}\Omega^{\mA}\widehat{\Sig}^{\mA}\Omega^{\mA}u\\
&\leq c_1h\delta_n
\end{align*}
with probability at least $1-\exp(-c_1\log p)-\exp(-c_2n)$  provided that $h=O(1)$.
We have shown that (\ref{eq2-pf2}) holds with probability going to 1.

Finally, we establish the convergence rate of $\widehat{\Theta}$ based on $\widehat{\Theta}^L$. The arguments are similar to Theorem 5.1 in \citet{BRT09}. For completeness, we include it here. 
As $\widehat{\Theta}^L_j$ is a feasible solution to (\ref{opt-j}), we have 
\begin{align*}
(\widehat{\Theta}_j-\widehat{\Theta}^L_j)^{\intercal}\widehat{\Sig}^{\mA}(\widehat{\Theta}_j-\widehat{\Theta}^L_j)&\leq 
|\langle\widehat{\Theta}_j-\widehat{\Theta}^L_j,\widehat{\Sig}^{\mA}\widehat{\Theta}_j-e_j-r_j(\widehat{\Delta}^{\mA})\rangle|\\
&\quad+|\langle\widehat{\Theta}_j-\widehat{\Theta}^L_j,\widehat{\Sig}^{\mA}\widehat{\Theta}^L_j-e_j-r_j(\widehat{\Delta}^{\mA})\rangle|\\
&\leq 2\|\widehat{\Theta}_j-\widehat{\Theta}^L_j\|_1\lam_{\Theta}.
\end{align*}
The fact that $\|\widehat{\Theta}_j\|_1\leq \|\widehat{\Theta}^L_j\|_1$ implies that
\begin{align*}
\|\widehat{\Theta}_{S^c_j,j}-\widehat{\Theta}^L_{S^c_j,j}\|_1\leq \|\widehat{\Theta}_{S_j,j}-\widehat{\Theta}^L_{S_j,j}\|_1+2\|\widehat{\Theta}^L_{S^c_j,j}\|_1.
\end{align*}
If case (i) discussed above, 
\[
  \|\widehat{\Theta}^L_{S^c_j,j}\|_1=\|\widehat{\Theta}^L_{S^c_j,j}-\Omega_{S^c_j,j}\|_1\leq 6\|\widehat{\Theta}^L_{S_j,j}-\Omega_{S_j,j}\|_1\leq c_1s\lam_{\Theta}.
\]
In case (ii) discussed above,
\begin{align*}
\lam_{\Theta}\|\widehat{\Theta}^L_{S^c_j,j}\|_1\leq c_2(\widehat{\Delta}^{\mA}_j-\Delta^{\mA}_j)^{\intercal}\widehat{\Sig}^{\mA}(\widehat{\Delta}^{\mA}_j-\Delta^{\mA}_j)\leq c_2h\delta_n.
\end{align*}

We can then separately discuss the two cases: $\|\widehat{\Theta}_{S_j,j}-\widehat{\Theta}^L_{S_j,j}\|_1\geq 2\|\widehat{\Theta}^L_{S^c_j,j}\|_1$ and $\|\widehat{\Theta}_{S_j,j}-\widehat{\Theta}^L_{S_j,j}\|_1\leq 2\|\widehat{\Theta}^L_{S^c_j,j}\|_1$. Using previous arguments, one can easily prove desired results. 

\end{proof}

\section{Proof of debiased estimators}
\begin{proof}[Proof of Theorem \ref{thm-db}]
We start with the following decomposition.
\begin{align*}
\widehat{\Omega}_{i,j}^{(db)}-\Omega_{i,j}&=e_j^{\intercal}\widehat{\Omega}_i+e_i^{\intercal}\widehat{\Omega}_j-\widehat{\Omega}_i^{\intercal}\widetilde{\Sig}\widehat{\Omega}_j-e_j^{\intercal}\Omega_i\\
&=e_j^{\intercal}(\widehat{\Omega}_{i}-\Omega_{i})+(e_i^{\intercal}-\widehat{\Omega}_{i}^{\intercal}\widetilde{\Sig})\widehat{\Omega}_j\\
&=e_j^{\intercal}(\widehat{\Omega}_{i}-\Omega_{i})+(e_i^{\intercal}-\Omega_i^{\intercal}\widetilde{\Sig})\widehat{\Omega}_j-(\widehat{\Omega}_{i}-\Omega_i)^{\intercal}\widetilde{\Sig}\widehat{\Omega}_j\\
&=(e_j^{\intercal}-\widehat{\Omega}_j^{\intercal}\widetilde{\Sig})(\widehat{\Omega}_i-\Omega_i)+(e_i^{\intercal}-\Omega_i^{\intercal}\widetilde{\Sig})\widehat{\Omega}_j\\
&=(e_j^{\intercal}-\Omega_j^{\intercal}\widetilde{\Sig})(\widehat{\Omega}_i-\Omega_i)-(\widehat{\Omega}_j-\Omega_j)^{\intercal}\widetilde{\Sig}(\widehat{\Omega}_i-\Omega_i)+(e_i^{\intercal}-\Omega_i^{\intercal}\widetilde{\Sig})\widehat{\Omega}_j\\
&=(e_j^{\intercal}-\Omega_j^{\intercal}\widetilde{\Sig})(\widehat{\Omega}_i-\Omega_i)-(\widehat{\Omega}_j-\Omega_j)^{\intercal}\widetilde{\Sig}(\widehat{\Omega}_i-\Omega_i)+(e_i^{\intercal}-\Omega_i^{\intercal}\widetilde{\Sig})\Omega_j\\
&\quad+(e_i^{\intercal}-\Omega_i^{\intercal}\widetilde{\Sig})(\widehat{\Omega}_j-\Omega_j)
\end{align*}
It holds that
\begin{align*}
\widehat{\Omega}_{i,j}^{(db)}-\Omega_{i,j}
&=\Omega_{i,j}-\Omega_i^{\intercal}\widetilde{\Sig}\Omega_j +rem_{i,j},
\end{align*}
where 
\begin{align*}
rem_{i,j}&=\underbrace{(e_j^{\intercal}-\Omega_j^{\intercal}\widetilde{\Sig})(\widehat{\Omega}_i-\Omega_i)}_{R_{1,i,j}}-\underbrace{(\widehat{\Omega}_j-\Omega_j)^{\intercal}\widetilde{\Sig}(\widehat{\Omega}_i-\Omega_i)}_{R_{2,i,j}}\\
&\quad +\underbrace{(e_i^{\intercal}-\Omega_i^{\intercal}\widetilde{\Sig})(\widehat{\Omega}_j-\Omega_j)}_{R_{3,i,j}}
\end{align*}

Let $\widehat{\Omega}^{v}_j=(\widehat{\Omega}^{(\textup{CL})}_j,\widehat{\Theta}_j)v_j$ for $v_j=(0,1)^{\intercal}$ or $v_j=(1,0)^{\intercal}$. The first term on the RHS of $rem_{i,j}$ can be upper bounded by
\begin{align*}
|R_{1,i,j}|&\leq |(\widehat{\Omega}_j-\widehat{\Omega}^{v}_j)^{\intercal}V\Lambda U^{\intercal}(e_j-\Omega_j^{\intercal}\widetilde{\Sig})|+|(\widehat{\Omega}^{v}_j-\Omega_j)^{\intercal}(e_j-\Omega_j^{\intercal}\widetilde{\Sig})|\\
&= |(\hat{v}_j-v_j)^{\intercal}V\Lambda U^{\intercal}(e_j-\Omega_j^{\intercal}\widetilde{\Sig})|+|(\widehat{\Omega}^{v}_j-\Omega_j)^{\intercal}(e_j-\Omega_j^{\intercal}\widetilde{\Sig})|\\
&\leq \|\Lambda V^{\intercal}(\hat{v}_j-v_j)\|_2\|U^{\intercal}(e_j-\Omega_j^{\intercal}\widetilde{\Sig})\|_2+|(\widehat{\Omega}^{v}_j-\Omega_j)^{\intercal}(e_j-\Omega_j^{\intercal}\widetilde{\Sig})|,
\end{align*}
where $U\Lambda V^{\intercal}$ is the SVD of $(\widehat{\Omega}^{(\textup{CL})}_j,\widehat{\Theta}_j)$ defined in the proof of Theorem \ref{thm2}.
The first term can be bounded using (\ref{eq-v1}) for $v_j=(0,1)^{\intercal}$ and using (\ref{eq-v2}) for $v_j=(1,0)^{\intercal}$. As in the prof of Theorem \ref{thm2}, $\|U^{\intercal}(e_j-\Omega_j^{\intercal}\widetilde{\Sig})\|_2=O_P(n^{-1/2})$. Recall that by our construction, $\widehat{\Omega}^{(\textup{CL})}_j,\widehat{\Omega}_j$ are independent of $\widetilde{\Sig}$,
We have
\begin{align*}
  &|(\widehat{\Omega}^{v}_j-\Omega_j)^{\intercal}(e_j-\Omega_j^{\intercal}\widetilde{\Sig})|\leq 
  \frac{\min_{\omega\in\{\widehat{\Omega}^{(\textup{CL})}_j,\widehat{\Theta}_j\}}\|\Sig^{1/2}(\omega-\Omega_j)\|_2}{\sqrt{n}}.
\end{align*}
To summarize,
 \begin{align*}
|R_{1,i,j}|&=O_P\left(n^{-1}+\frac{\min_{\omega\in\{\widehat{\Omega}^{(\textup{CL})}_j,\widehat{\Theta}_j\}}\|\Sig^{1/2}(\omega-\Omega_j)\|_2}{\sqrt{n}}\right).
\end{align*}
The term $R_{3,i,j}$ can be similarly bounded. For the second term, we use similar arguments to show that
\begin{align*}
|R_{2,i,j}|&\leq 2(\widehat{\Omega}^{v}_j-\Omega_j)^{\intercal}\widetilde{\Sig}(\widehat{\Theta}_{i}^{v}-\Omega_i)+2(\widehat{\Omega}_j-\widehat{\Omega}^{v}_j)^{\intercal}\widetilde{\Sig}(\widehat{\Omega}_i-\widehat{\Theta}^{v}_{i}).
\end{align*}
Taking $v_j\in\{(0,1),(1,0)\}$ and $v_i\in\{(0,1),(1,0)\}$, $\widehat{\Omega}^{v}_j$ still is independent of $\widetilde{\Sig}$ and hence
\[
   (\widehat{\Omega}^{v}_j-\Omega_j)^{\intercal}\widetilde{\Sig}(\widehat{\Theta}_{i}^{v}-\Omega_i)=O_P(\min_{\theta\in\{\widehat{\Omega}^{(\textup{CL})}_j,\widehat{\Theta}_j\}}\|\theta-\Omega_j\|_2\min_{\theta\in\{\widehat{\Omega}^{(\textup{CL})}_i,\widehat{\Theta}_i\}}\|\theta-\Theta_i\|_2).
\]
For the second term, we have
\begin{align*}
&|(\widehat{\Omega}_j-\widehat{\Omega}^{v}_j)^{\intercal}\widetilde{\Sig}(\widehat{\Omega}_i-\widehat{\Omega}^{v}_{i})|\leq  \|\Lambda V^{\intercal}(\hat{v}-v)\|_2^2 \|U^{\intercal}\widetilde{\Sig}U\|_2=  \|\Lambda V^{\intercal}(\hat{v}-v)\|_2^2O_P(1).
\end{align*}
Using (\ref{eq-v1}) and (\ref{eq-v2}) again, we arrive at
\begin{align*}
\P\left(|R_{2,i,j}|\geq c_1\min_{\omega\in\{\widehat{\Omega}^{(\textup{CL})}_j,\widehat{\Theta}_j\}}\|\omega-\Omega_j\|_2\min_{\omega\in\{\widehat{\Omega}^{(\textup{CL})}_i,\widehat{\Theta}_i\}}\|\omega-\Theta_i\|_2+\frac{t}{n}\right)\leq \exp\{-c_3t\}+2\exp\{-c_4\log p\}.
\end{align*}
To summarize,
\begin{align}
|rem_{i,j}|&\leq c_1\min_{\omega\in\{\widehat{\Omega}^{(\textup{CL})}_j,\widehat{\Theta}_j\}}\|\omega-\Omega_j\|_2\min_{\omega\in\{\widehat{\Omega}^{(\textup{CL})}_i,\widehat{\Theta}_i\}}\|\omega-\Theta_i\|_2+\frac{t}{n} \nonumber\\
&\quad~ +c_2\frac{\min_{\omega\in\{\widehat{\Omega}^{(\textup{CL})}_j,\widehat{\Theta}_j\}}\|\Sig^{1/2}(\omega-\Omega_j)\|_2}{\sqrt{n}}+c_3\frac{\min_{\omega\in\{\widehat{\Omega}^{(\textup{CL})}_i,\widehat{\Theta}_i\}}\|\Sig^{1/2}(\omega-\Omega_i)\|_2}{\sqrt{n}}.\label{eq4-pf3}
\end{align}
with probability at least $1-\exp\{-c_4t\}+2\exp\{-c_5\log p\}$.
\begin{align*}
|\widehat{V}_{i,j}-V_{i,j}|&=|\widehat{\Omega}_{j,j}\widehat{\Omega}_{i,i}+\widehat{\Omega}_{i,j}\widehat{\Omega}_{j,i}-V_{i,j}|\\
&\leq \max_{j_1,i_1,j_2,i_2\in\{i,j\}}2|\widehat{\Omega}_{j_1,i_1}-\Omega_{j_1,i_1}|\Theta_{j_2,i_2}+\max_{j_1,i_1,j_2,i_2\in\{i,j\}}|\widehat{\Omega}_{j_1,i_1}-\Omega_{j_1,i_1}||\widehat{\Omega}_{j_2,i_2}-\Omega_{j_2,i_2}|\\
&=O_P(\max_{l\in\{i,j\}}\|\widehat{\Omega}_{l}-\Omega_l\|_2)
\end{align*}

The asymptotic normality of $\widehat{\zeta}_{i,j}=\Omega_{i,j}-\Omega_i^{\intercal}\widetilde{\Sig}\Omega_j$ follows from the Gaussian property and central limit theorem.

\end{proof}

\section{Proof of Theorem \ref{thm-fdr}}
Let
\[
   z_{i,j}=\frac{\widehat{\zeta}_{i,j}-\Omega_{i,j}}{V_{i,j}^{1/2}/\sqrt{n}}=\frac{\Omega_{i,j}-\Omega_i^{\intercal}\widetilde{\Sig}\Omega_j}{V_{i,j}^{1/2}/\sqrt{n}}.
\]
The following proof is largely based on the results in \citet{Liu13}. Specifically, we will first verify that $z_{i,j}$ is asymptotically normal uniformly in $i,j$ and the bias of the $\hat{z}_{i,j}$ is uniformly negligible under the current conditions. It is left to verify that the highly correlated $z_{i,j}$ are not too many under the current conditions, which follows from the arguments in \citet{Liu13}. We will highlight that the condition (12) in \citet{Liu13} can be omitted using the arguments in \citet{JJ19}.

\begin{lemma}[Uniform convergence in distribution]
\label{lem-unif}
Assume that \[
  \max_{i,j}|\widehat{T}_{i,j}|=o((n\log p)^{-1/2})~\text{and}~n\gg (\log p)^3.
     \]
     Then it holds that
\[
\max_{i,j}  |\hat{z}_{i,j}-z_{i,j}|=o_P((\log p)^{-1/2}),
\]
where
\begin{align*}
\max_{i,j}\left|\frac{\P\left(|z_{i,j}|>t\right)}{G(t)}-1\right|\leq C(\log p)^{-3/2}.
\end{align*}
\end{lemma}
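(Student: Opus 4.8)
The plan is to reduce both assertions to facts about a standardized sum of i.i.d.\ products of jointly Gaussian variables. Write $\tn=|\mathcal I^c|$ for the number of debiasing samples and $\xi_l^{(i)}=\Omega_i^{\intercal}x_l$ for $l\in\mathcal I^c$; then $(\xi_l^{(i)},\xi_l^{(j)})$ is bivariate Gaussian with $\E[\xi_l^{(i)}\xi_l^{(j)}]=\Omega_i^{\intercal}\Sig\Omega_j=\Omega_{i,j}$ and $\textup{Var}(\xi_l^{(i)}\xi_l^{(j)})=\Omega_{i,i}\Omega_{j,j}+\Omega_{i,j}^2=V_{i,j}$. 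Since $\Omega_i^{\intercal}\widetilde\Sig\Omega_j=\tn^{-1}\sum_{l\in\mathcal I^c}\xi_l^{(i)}\xi_l^{(j)}$,
\[
z_{i,j}=-\frac{\sqrt n}{\tn\,V_{i,j}^{1/2}}\sum_{l\in\mathcal I^c}\bigl(\xi_l^{(i)}\xi_l^{(j)}-\Omega_{i,j}\bigr),
\]
i.e.\ $\sqrt{n/\tn}\asymp1$ times a standardized sum of $\tn\asymp n$ i.i.d., mean-zero, unit-variance, sub-exponential terms whose exponential moments near the origin are bounded uniformly in $(i,j)$ by $1/C\le\Lambda_{\min}(\Sig)\le\Lambda_{\max}(\Sig)\le C$ from Condition \ref{cond1}. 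The fixed-pair CLT for $z_{i,j}$ is already recorded at the end of the proof of Theorem \ref{thm-db}; what is new here is the tail ratio uniformly in $(i,j)$ and the closeness of $\hat z_{i,j}$ to $z_{i,j}$.

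\textbf{The moderate-deviation bound.}
For the tail-ratio statement I would invoke a Cram\'er-type moderate deviation theorem for standardized sums of i.i.d.\ variables with a finite exponential moment (the classical Petrov estimate, or the self-normalized version used by \citet{Liu13} and sharpened in \citet{JJ19}), which gives, uniformly in $(i,j)$ and over $0\le t\le\sqrt{2\log q-2\log\log q}$,
\[
\frac{\P(|z_{i,j}|>t)}{G(t)}=1+O\!\Bigl(\frac{1+t^3}{\sqrt n}\Bigr).
\]
The $O(\cdot)$ is uniform in $(i,j)$ precisely because the controlling moments are uniform (Condition \ref{cond1}); on the displayed $t$-range $t\lesssim\sqrt{\log p}$, so the relative error is $O((\log p)^{3/2}/\sqrt n)$, which is $O((\log p)^{-3/2})$ under the growth condition (together with $p\le n^r$ as in Theorem \ref{thm-fdr}). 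It is also worth noting that on the true nulls $\Omega_{i,j}=0$, so the third cumulant of $\xi_l^{(i)}\xi_l^{(j)}$ vanishes and the error is even smaller there.

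\textbf{The closeness of $\hat z_{i,j}$ to $z_{i,j}$.}
Substitute $\widehat\Omega^{(db)}_{i,j}=\Omega_{i,j}+(\Omega_{i,j}-\Omega_i^{\intercal}\widetilde\Sig\Omega_j)+\widehat T_{i,j}$ (the decomposition in Theorem \ref{thm-db}) into $\hat z_{i,j}=\sqrt n\,\widehat\Omega^{(db)}_{i,j}/\widehat V_{i,j}^{1/2}$; on the true-null pairs (where $\Omega_{i,j}=0$; on the remaining pairs the same identity holds up to the deterministic signal $\sqrt n\,\Omega_{i,j}\widehat V_{i,j}^{-1/2}$, which is irrelevant to the FDR analysis) this yields
\[
\hat z_{i,j}-z_{i,j}=z_{i,j}\Bigl(\frac{V_{i,j}^{1/2}}{\widehat V_{i,j}^{1/2}}-1\Bigr)+\frac{\sqrt n\,\widehat T_{i,j}}{\widehat V_{i,j}^{1/2}}.
\]
By Condition \ref{cond1}, $V_{i,j}\ge\Lambda_{\min}(\Omega)^2\gtrsim1$, and by the variance bound of Theorem \ref{thm-db} together with the hypothesis $\max_{i,j}|\widehat T_{i,j}|=o((n\log p)^{-1/2})$ one gets $\max_{i,j}|\widehat V_{i,j}-V_{i,j}|\le C\max_{i,j}\widehat T_{i,j}^{1/2}=o_P((n\log p)^{-1/4})$, hence $\min_{i,j}\widehat V_{i,j}\gtrsim1$ with probability tending to one and $\max_{i,j}|V_{i,j}^{1/2}\widehat V_{i,j}^{-1/2}-1|=o_P((n\log p)^{-1/4})$. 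The second term is then at most $\sqrt n\max_{i,j}|\widehat T_{i,j}|/\min_{i,j}\widehat V_{i,j}^{1/2}=o_P((\log p)^{-1/2})$; for the first term, a union bound over the $q\asymp p^2$ pairs together with Bernstein's inequality for the sub-exponential sum defining $z_{i,j}$ (legitimate since $\log p\lesssim n$) gives $\max_{i,j}|z_{i,j}|=O_P(\sqrt{\log p})$, so it is $O_P(\sqrt{\log p})\cdot o_P((n\log p)^{-1/4})=o_P((\log p)^{1/4}n^{-1/4})=o_P((\log p)^{-1/2})$, the last step using $n\gg(\log p)^3$. Combining the two terms gives the first assertion.

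\textbf{Main obstacle.}
The delicate point is exactly the last product estimate: pitting the stochastic fluctuation $\max_{i,j}|z_{i,j}|=O_P(\sqrt{\log p})$ against the variance-estimation error and the residual bias $\sqrt n\,\widehat T_{i,j}$ at the scale $(\log p)^{-1/2}$ is what forces $n\gg(\log p)^3$, and, for the tail ratio, making the moderate-deviation relative error genuinely uniform in $(i,j)$ and valid up to $t=\sqrt{2\log q-2\log\log q}$ relies on the uniform-in-$(i,j)$ moment control from Condition \ref{cond1}. A further subtlety to keep track of is that $\widetilde\Sig$ enters both $z_{i,j}$ and, through the Step 3 aggregation of Trans-CLIME, $\widehat\Omega$ (hence $\widehat T_{i,j}$ and $\widehat V_{i,j}$); this dependence has already been absorbed into the statements of Theorems \ref{thm2} and \ref{thm-db}, so here one should use those bounds as black boxes rather than re-deriving them under a false independence assumption.
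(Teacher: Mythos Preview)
Your proposal is correct and follows essentially the same approach as the paper: the tail-ratio statement is obtained by a Cram\'er-type moderate deviation (the paper simply cites Lemma 6.1 of \citet{Liu13}), and the closeness of $\hat z_{i,j}$ to $z_{i,j}$ is obtained by the same two-term decomposition into a bias piece $\sqrt n\,\widehat T_{i,j}$ and a variance-ratio piece $z_{i,j}\bigl(V_{i,j}^{1/2}/\widehat V_{i,j}^{1/2}-1\bigr)$, bounding the latter via $|\widehat V_{i,j}-V_{i,j}|\lesssim\widehat T_{i,j}^{1/2}$ and $\max_{i,j}|z_{i,j}|=O_P(\sqrt{\log p})$, and closing with $n\gg(\log p)^3$.

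The only noteworthy difference is that you invoke the lemma's hypothesis $\max_{i,j}|\widehat T_{i,j}|=o((n\log p)^{-1/2})$ directly, whereas the paper goes back to the explicit decomposition \eqref{eq4-pf3} of $rem_{i,j}$ (which equals $\widehat T_{i,j}$) and takes a union bound over the $p^2$ pairs, yielding the three-term estimate $\max_{i,j}|\hat z_{i,j}-z_{i,j}|=O_P\bigl((\widehat T_{i,j}\log p)^{1/2}+\sqrt n\,\widehat T_{i,j}+\log p/\sqrt n\bigr)$. This is slightly more informative because it shows exactly where the extra $\log p/\sqrt n$ term (and hence the condition $n\gg(\log p)^3$) enters, but the conclusion is identical. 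Your aside that the stated bound $\max_{i,j}|\hat z_{i,j}-z_{i,j}|=o_P((\log p)^{-1/2})$ is really a statement about the null pairs (the signal $\sqrt n\,\Omega_{i,j}\widehat V_{i,j}^{-1/2}$ being irrelevant to the FDR step) is a correct clarification that the paper leaves implicit.
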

\begin{proof}[Proof of Lemma \ref{lem-unif}]

By Lemma 6.1 in \citet{Liu13},
\begin{align*}
\max_{i,j}\left|\frac{\P\left(|z_{i,j}|>t\right)}{G(t)}-1\right|\leq C(\log p)^{-3/2}.
\end{align*}
We need to show that
\begin{equation}
\label{fdr-pf1}
   \max_{i,j}|\hat{z}_{i,j}-z_{i,j}|\leq \max_{i,j} \frac{|rem_{i,j}|}{V_{i,j}^{1/2}/\sqrt{n}}+\max_{i,j}|z_{i,j}||\widehat{V}_{i,j}^{1/2}/V_{i,j}^{1/2}-1|=o_P((\log p)^{-1/2}).
\end{equation}
By (\ref{eq4-pf3}),
\[
   \P\left(\max_{i,j}|rem_{i,j}|\geq c_1\max_{i,j}(\widehat{T}_{i,j}\frac{\log p}{n})^{1/2}+c_2\widehat{T}_{i,j}+c_3\frac{\log p}{n}\right)\leq 3\exp(-c_4\log p).
\]
On the other hand,
\begin{align*}
\max_{i,j}|\widehat{V}_{i,j}^{1/2}/V_{i,j}^{1/2}-1|=\max_{i,j}\frac{|\widehat{V}_{i,j}/V_{i,j}-1|}{\widehat{V}_{i,j}^{1/2}/V_{i,j}^{1/2}+1}.
\end{align*}
Notice that
\begin{align*}
\max_{i,j}|\widehat{V}_{i,j}-V_{i,j}|&=\max_{i,j}|\widehat{\Omega}_{j,j}\widehat{\Omega}_{i,i}+\widehat{\Omega}_{i,j}\widehat{\Omega}_{j,i}-V_{i,j}|\\
&\leq \max_{j_1,l_1,j_2,l_2}2|\widehat{\Omega}_{j_1,l_1}-\Omega_{j_1,l_1}|\Omega_{j_2,l_2}+\max_{j_1,l_1,j_2,l_2}|\widehat{\Omega}_{j_1,l_1}-\Omega_{j_1,l_1}||\widehat{\Omega}_{j_2,l_2}-\Theta_{j_2,l_2}|\\
&=O_P(\|\widehat{\Omega}-\Omega\|_{\infty,2})=O_p(\max_{i,j}|\widehat{T}_{i,j}|^{1/2}).
\end{align*}
As $\max_{i,j}|z_{i,j}|=O_P(\sqrt{\log p})$ in (\ref{fdr-pf1}), under the conditions of Lemma \ref{lem-unif},
\[
     \max_{i,j}|\hat{z}_{i,j}-z_{i,j}|=O_P(\max_{i,j}(\widehat{T}_{i,j}\log p)^{1/2}+\sqrt{n}\widehat{T}_{i,j}+\frac{\log p}{\sqrt{n}})=o_P((\log p)^{-1/2})
\]
for $n\gg \log p^3$.
\end{proof}

\begin{proof}[Proof of Theorem \ref{thm-fdr}]
The proof is largely based on the proof of Theorem 3.1 in \citet{Liu13}.
Specifically, Lemma \ref{lem-unif} and the conditions on $\mathcal{C}_i(\gam)$ and the sample size condition assumed in Theorem \ref{thm-fdr} guarantee all the conditions of Theorem 3.1 in \citet{Liu13}expect (12). (12) is not needed here because the range of $\hat{t}$ considered here is $[0,\sqrt{2\log q-2\log \log q}]$, while the range of $\hat{t}$ considered in \citet{Liu13} is $[0,2\sqrt{\log p}]$. Specifically, as explained in equation (14) in their paper, the condition (12) guarantees that 
\begin{equation}
\label{eq1-pf4}
    q_0G(\hat{t})\rightarrow \infty,
\end{equation}
where $G(t)=2(1-\Phi(t))$ is the tail probability of standard Gaussian.
As $G(t)\geq \frac{1}{t+1/t}\phi(t)$ and $q_0\asymp q$, it is easy to check that (\ref{eq1-pf4}) holds for arbitrary\\
 $\hat{t}\in [0,\sqrt{2\log q-2\log \log q}]$.

\end{proof}

\section{Estimation and inference with CLIME in one-sample case}
To understand the performance of the proposed debiasing procedure in one-sample case, we derive the convergence rate for
\[
\widehat{\Omega}_{i,j}^{(dCL)}=\widehat{\Omega}^{(\textup{CL})}_{j,i}+\widehat{\Omega}^{(\textup{CL})}_{i,j}-(\widehat{\Omega}^{(\textup{CL})}_{j})^{\intercal}\widehat{\Sig}^n\widehat{\Omega}^{(\textup{CL})}_{i},
\]
where $\widehat{\Sig}^n$ denotes the sample covariance matrix based on all the primary data.
  \begin{theorem}[Asymptotic normality for debiased CLIME]
\label{thm-db-clime}
Assume Condition \ref{cond1} and $s\log p\ll n$. For any fixed $i\neq j$,
\[
\widehat{\Omega}_{i,j}^{(dCL)}-\Omega_{i,j}=R_{i,j}+\widehat{T}^{(\textup{CL})}_{i,j},
\]
where
\[
   \frac{\sqrt{n}R_{i,j}}{V_{i,j}^{1/2}}\xrightarrow{D}  N(0,1)~\text{and}~\widehat{T}^{(\textup{CL})}_{i,j}=O_P(\frac{s\log p}{n}).
\]

\end{theorem}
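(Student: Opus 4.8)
The plan is to reuse the algebraic decomposition from the proof of Theorem~\ref{thm-db} with $\widetilde{\Sig}$ replaced by $\widehat{\Sig}^n$ and $\widehat{\Omega}$ replaced by $\widehat{\Omega}^{(\textup{CL})}$ (both built from all $n$ primary samples, so in particular $\widehat{\Sig}^n$ is the matrix appearing in the CLIME constraint (\ref{eq-CLIME})). Writing $\widehat{\Omega}^{(\textup{CL})}_i=\Omega_i+(\widehat{\Omega}^{(\textup{CL})}_i-\Omega_i)$ and using $\Omega\Sig\Omega=\Omega$, the same manipulation that produced $rem_{i,j}$ in the proof of Theorem~\ref{thm-db} gives
\[
\widehat{\Omega}_{i,j}^{(dCL)}-\Omega_{i,j}=\underbrace{\Omega_{i,j}-\Omega_i^{\intercal}\widehat{\Sig}^n\Omega_j}_{R_{i,j}}+\underbrace{R_{1,i,j}-R_{2,i,j}+R_{3,i,j}}_{\widehat{T}^{(\textup{CL})}_{i,j}},
\]
with $R_{1,i,j}=(e_j-\widehat{\Sig}^n\Omega_j)^{\intercal}(\widehat{\Omega}^{(\textup{CL})}_i-\Omega_i)$, $R_{3,i,j}=(e_i-\widehat{\Sig}^n\Omega_i)^{\intercal}(\widehat{\Omega}^{(\textup{CL})}_j-\Omega_j)$, and $R_{2,i,j}=(\widehat{\Omega}^{(\textup{CL})}_j-\Omega_j)^{\intercal}\widehat{\Sig}^n(\widehat{\Omega}^{(\textup{CL})}_i-\Omega_i)$.

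For the leading term, $R_{i,j}=-\frac{1}{n}\sum_{l=1}^n(\Omega_i^{\intercal}x_lx_l^{\intercal}\Omega_j-\Omega_{i,j})$ is an average of i.i.d.\ mean-zero variables; by Wick's formula for Gaussian quadratic forms each summand has variance $(\Omega_i^{\intercal}\Sig\Omega_i)(\Omega_j^{\intercal}\Sig\Omega_j)+(\Omega_i^{\intercal}\Sig\Omega_j)^2=\Omega_{i,i}\Omega_{j,j}+\Omega_{i,j}^2=V_{i,j}$ (again using $\Omega\Sig\Omega=\Omega$), and it has finite higher moments under Condition~\ref{cond1}, so the Lindeberg CLT yields $\sqrt{n}\,R_{i,j}/V_{i,j}^{1/2}\xrightarrow{D}N(0,1)$; this step requires no sparsity assumption.

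The remainder is controlled entirely by $\ell_1$--$\ell_\infty$ Hölder bounds on a single event of probability $1-O(e^{-c\log p})$ on which: (a) $\|\widehat{\Omega}^{(\textup{CL})}_l-\Omega_l\|_1=O(s\sqrt{\log p/n})$ for $l\in\{i,j\}$, by the $\|\cdot\|_{\infty,1}$ bound of Lemma~\ref{lem0-thm1} (whose proof uses only $s\log p=o(n)$, hence applies here); (b) $\|e_l-\widehat{\Sig}^n\Omega_l\|_\infty=O(\sqrt{\log p/n})$ for $l\in\{i,j\}$, by Gaussian concentration together with $\E\widehat{\Sig}^n=\Sig$, $\Sig\Omega_l=e_l$ and $\Lambda_{\max}(\Sig)\le C$; and (c) $\|\widehat{\Sig}^n\widehat{\Omega}^{(\textup{CL})}_i-e_i\|_\infty\le\lam_{\textup{CL}}=O(\sqrt{\log p/n})$ by feasibility of $\widehat{\Omega}^{(\textup{CL})}$ in (\ref{eq-CLIME}), so that $\|\widehat{\Sig}^n(\widehat{\Omega}^{(\textup{CL})}_i-\Omega_i)\|_\infty=O(\sqrt{\log p/n})$ by the triangle inequality with (b). Then $|R_{1,i,j}|\le\|e_j-\widehat{\Sig}^n\Omega_j\|_\infty\|\widehat{\Omega}^{(\textup{CL})}_i-\Omega_i\|_1$, $|R_{3,i,j}|$ analogously, and $|R_{2,i,j}|\le\|\widehat{\Omega}^{(\textup{CL})}_j-\Omega_j\|_1\|\widehat{\Sig}^n(\widehat{\Omega}^{(\textup{CL})}_i-\Omega_i)\|_\infty$ are all $O(s\log p/n)$ on this event, which gives $\widehat{T}^{(\textup{CL})}_{i,j}=O_P(s\log p/n)$.

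The one point worth flagging --- and it is what makes this argument \emph{shorter} than that of Theorem~\ref{thm-db} rather than longer --- is that in the one-sample case no sample split is available, so $\widehat{\Sig}^n$ and $\widehat{\Omega}^{(\textup{CL})}$ are fully dependent. Independence is never needed, however: because CLIME delivers column-wise $\ell_1$ consistency (not merely $\ell_2$), bounding $R_{2,i,j}$ through the CLIME constraint in (c) rather than through any spectral-norm control of $\widehat{\Sig}^n$ (which is hopeless when $p\gg n$) closes the argument on the single deterministic event above. Thus the whole proof reduces to: the displayed decomposition, the CLT for $R_{i,j}$, and the three Hölder estimates on the event of Lemma~\ref{lem0-thm1}.
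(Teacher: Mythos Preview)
Your proposal is correct and follows essentially the same route as the paper: the identical decomposition into $R_{i,j}+R_{1,i,j}-R_{2,i,j}+R_{3,i,j}$, the CLT for the leading term, and $\ell_1$--$\ell_\infty$ H\"older bounds on the remainder using the CLIME $\ell_1$-rate from Lemma~\ref{lem0-thm1}. Your treatment of $R_{2,i,j}$ via the CLIME feasibility constraint is a slightly more explicit variant of what the paper calls ``the estimation error bound of CLIME proved in Lemma~\ref{lem0-thm1}'' (the paper could equally use Cauchy--Schwarz with the in-sample quadratic form bounded in that lemma's proof), but the idea and the resulting rate are the same.
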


According to our previous discussion, is easy to see that $\widehat{\Omega}_{i,j}^{(dCL)}$ is minimax optimal in $\mathbb{G}(s,\infty)$ for estimationg $\Omega_{i,j}$.

\begin{proof}[Proof of Theorem \ref{thm-db-clime}]
By the proof of Theorem \ref{thm-db},
\begin{align*}
\widehat{\Omega}_{i,j}^{(dCL)}-\Omega_{i,j}
&=\Omega_{i,j}-\Omega_i^{\intercal}\Sig^n\Omega_j +rem^{(\textup{CL})}_{i,j},
\end{align*}
where 
\begin{align*}
rem^{(\textup{CL})}&=\underbrace{(e_j^{\intercal}-\Omega_j^{\intercal}\Sig^n)(\widehat{\Omega}^{(\textup{CL})}_i-\Omega_i)}_{R_{1,i,j}}-\underbrace{(\widehat{\Omega}^{(\textup{CL})}_{j}-\Omega_j)^{\intercal}\Sig^n(\widehat{\Omega}^{(\textup{CL})}_i-\Omega_i)}_{R_{2,i,j}}\\
&\quad +\underbrace{(e_i^{\intercal}-\Omega_i^{\intercal}\Sig^n)(\widehat{\Omega}^{(\textup{CL})}_{j}-\Omega_j)}_{R_{3,i,j}}.
\end{align*}
We apply $\ell_1$-$\ell_{\infty}$ split to $R_{1,i,j}$ and to $R_{3,i,j}$. For $R_{2,i,j}$, we apply the estimation error bound of CLIME proved in Lemma \ref{lem0-thm1}. Standard arguments lead to desired results.

\end{proof}

\section{Proofs of minimax lower and upper bounds}
\label{sec-lq}

\subsection{Minimax lower bound under Frobenius norm}
\begin{proof}[Proof of the lower bound in Theorem \ref{thm-mini-frob2}]

We prove the minimax lower bounds for $q\in(0,1]$ first.
We will separately show that
\begin{align}
\inf_{\widehat{\Omega}}\sup_{\Omega\in\mathbb{G}_q(s,0)}\P\left(\frac{1}{p}\|\widehat{\Omega}-\Omega\|_F^2\geq \frac{s\log p}{n_{\mA_q}+n}\right)>1/4>0.\label{lb1-frob}
\end{align}
and
\begin{align}
\inf_{\widehat{\Omega}}\sup_{\Omega\in\mathbb{G}_q\left(h^q(\frac{\log p}{n})^{-q/2},h\right)}\P\left(\frac{1}{p}\|\widehat{\Omega}-\Omega\|_F^2\geq  h^q(\frac{\log p}{n})^{1-q/2}\wedge h^2\wedge \frac{s\log p}{n}\right)>1/4>0.\label{lb2-frob}
\end{align}

\underline{Proof of (\ref{lb1-frob})}.
\textit{(i) Step 1: Constructing the parameter set.}
Let $r=\lfloor p/2 \rfloor$. Let $B$ be the collection of row vectors $b=(v_j)_{j=1,\dots,p}$ such that $v_j=0$ for $1\leq j\leq p-r$ and $v_j\in\{0,1\}$ for $p-r+1\leq j\leq p$ under the constraint that $\|b\|_0=k$, where the value of $k$ will be specified later. We treat each $(b_1,\dots, b_r)\in B^r$ as an $r\times p$ matrix with the $i$-th row equal to $b_i$. Each $b_mi$ defines a $p\times p$ symmetric matrix $A_i(b_i)$ by making the $i$-th row and $-$th column of $A_i(b_i)$ equal to $b_i$ and the rest of entries 0. Note that $\sum_{i=1}^rA_i(b_i)$ is less than or equal to $k$. Let $\Gamma=\{0,1\}^r$. 
For each $\gam\in\Gamma$ and $b\in B^r$, we associate them with an inverse covariance matrix
\[
 \Omega(\gam,b)=I_p+\eps_{n,p} \sum_{i=1}^r \gam_iA_i(b_i),
\]
where $\eps_{n,p}$ will be decided later. Let $\theta=(\gam,b)\in\Theta$. Define the a collection $\mathcal{G}_*$ of inverse covariance matrices as
\[
   \mathcal{G}_*=\{\Omega(\theta): \Omega(\theta)=I_p+\eps_{n,p} \sum_{i=1}^r \gam_iA_i(b_i), \theta\in\Theta\}.
\]

We now specify $\eps_{n,p}$ and $k$ such that $\mathcal{G}_*\subseteq \mathbb{G}_q(s,0)$.
Let $\eps_{n,p}=\nu\sqrt{\log p/(n+n_{\mA_q})}$, $k=s$ and $\Omega^{(k)}=\Omega$ for any $k\in\mA$. It is easy to check that
\[
   \max_j\|\Omega_j\|_0\leq s~\text{and}~\max_{k\in \mA_q} \mathcal{D}_q(\Omega^{(k)},\Omega)=0.
\]

\textit{Step 2: Applying the general lower bound argument.} We use Lemma 2 in \citet{Cai16} for estimating a function $\psi(\theta)$. For readers' convenience, we present it here. Let $\gam_i(\theta)$ be the $i$-th coordinate of the first component of $\theta$.

\textbf{Lemma 2 in \citet{Cai16}}. For any estimator $T$ of $\psi(\theta)$ based on an observation from the experiment $\{\P_{\theta},\theta\in\Theta\}$, and any $s>0$,
\[
      \max_{\Theta}2^2\E_{\theta}[d^s(T,\psi(\theta))]\geq \alpha \frac{r}{2}\min_{1\leq i\leq r}\|\bar{\P}_{0,i}\wedge \bar{\P}_{1,i}\|,
\]
where 
\[
    \bar{\P}_{0,i}=\frac{1}{2^{r-1}|B^r|}\sum_{\theta}\{\P_{\theta}:\gam_i(\theta)=a\}
\]
and
\[
   \alpha=\min_{(\theta,\theta'): H(\gam(\theta),\gam(\theta'))\geq 1}\frac{d^s(\psi(\theta),\psi(\theta'))}{H(\gam(\theta),\gam(\theta'))}.
\]

Taking $d_s(A,B)$ as $\|A-B\|_F^2$,we have
\begin{equation}
\label{frob-eq1}
      \inf_{\widehat{\Omega}}\max_{\theta\in\Theta}2^2\E_{\theta}\|\widehat{\Omega}-\Omega(\theta)\|_F^2\geq \alpha \frac{p}{4}\min_{1\leq i\leq r}\|\bar{\P}_{0,i}\wedge \bar{\P}_{1,i}\|,\end{equation}
 where
 \[
 \alpha=\min_{(\theta,\theta'): H(\gam(\theta),\gam(\theta'))\geq 1}\frac{\|\Omega(\theta)-\Omega(\theta')\|_F^2}{H(\gam(\theta),\gam(\theta'))}.
 \]
 It is easy to see that
 \[
     \min_{(\theta,\theta'): H(\gam(\theta),\gam(\theta'))\geq 1}\frac{\|\Omega(\theta)-\Omega(\theta')\|_F^2}{H(\gam(\theta),\gam(\theta'))}\asymp \frac{s\log p}{n+n_{\mA_q}}.
 \]
It follows from Lemma 6 of \citet{Cai16} that there is a constant $c$ that
\[
  \min_{1\leq i\leq r}\|\bar{\P}_{0,i}\wedge \bar{\P}_{1,i}\|\geq c.
\]
In view of (\ref{frob-eq1}), the proof of (\ref{lb1-frob}) is complete now.

\underline{(ii) Proof of (\ref{lb2-frob})}. 

\underline{(ii-1)}. When $s\log p/n\leq  h^q(\frac{\log p}{n})^{1-q/2}\wedge h^2$, we consider 
\[
   \Omega\in \mathcal{G}^*,~\text{where}~\eps_{n,p}=\nu\sqrt{\log p/n}, k=s
\]
and $\Omega^{(k)}=I_p$, for $k\in\mA$. It is easy to check that
\[
   \max_j\|\Omega_j\|_0\leq s~\text{and}~\max_{k\in \mA_q} \mathcal{D}_q(\Omega^{(k)},\Omega)\leq s^{1/q}\sqrt{\frac{\log p}{n}}\leq h.
\]
Using the arguments in the proof of (\ref{lb1-frob}), we can show based on (\ref{frob-eq1}) that
 \[
    \alpha\geq \frac{s\log p}{n}~\text{and}~ \min_{1\leq i\leq r}\|\bar{\P}_{0,i}\wedge \bar{\P}_{1,i}\|\geq c.
 \]
\underline{(ii-2)}. 
When $h^q(\frac{\log p}{n})^{1-q/2}\leq s\log p/n  \wedge h^2$, we consider 
\[
   \Omega\in \mathcal{G}^*,~\text{where}~\eps_{n,p}=\nu\sqrt{\log p/n}, k=h^q(\frac{\log p}{n})^{-q/2}
\]
and $\Omega^{(k)}=I_p$, for $k\in\mA$. It is easy to check that
\[
   \max_j\|\Omega_j\|_0\leq h^q(\frac{\log p}{n})^{-q/2}\leq s~\text{and}~\max_{k\in \mA_q} \mathcal{D}_q(\Omega^{(k)},\Omega)\leq h.
\]
Using the arguments in the proof of (\ref{lb1-frob}), we can show based on (\ref{frob-eq1}) that
 \[
    \alpha\geq h^q(\frac{\log p}{n})^{1-q/2}~\text{and}~ \min_{1\leq i\leq r}\|\bar{\P}_{0,i}\wedge \bar{\P}_{1,i}\|\geq c.
 \]
 \underline{(ii-3)}. 
When $h^2\leq s\log p/n  \wedge h^q(\frac{\log p}{n})^{1-q/2}$, it implies that $h\leq \sqrt{\log p/n}$. We consider 
\[
   \Omega\in \mathcal{G}^*,~\text{where}~\eps_{n,p}=h, k=1
\]
and $\Omega^{(k)}=I_p$, for $k\in\mA$. It is easy to check that
\[
   \max_j\|\Omega_j\|_0=1\leq s~\text{and}~\max_{k\in \mA_q} \mathcal{D}_q(\Omega^{(k)},\Omega)\leq h.
\]
Using the arguments in the proof of (\ref{lb1-frob}), we can show based on (\ref{frob-eq1}) that
 \[
    \alpha\geq h^2~\text{and}~ \min_{1\leq i\leq r}\|\bar{\P}_{0,i}\wedge \bar{\P}_{1,i}\|\geq c.
 \]

We now the minimax lower bounds for $q=0$. 
First, it exactly follows from the proof of (\ref{lb1}) that
\begin{align*}
\inf_{\widehat{\Omega}}\sup_{\mathbb{G}_0(s,0)}\P\left(\frac{1}{p}\|\widehat{\Omega}-\Omega\|_F^2\geq c_1(n_{\mA_0}+n)^{-1/2}+c_2\frac{s\log p}{n_{\mA_0}+n}\right)>1/4>0.
\end{align*}
Next, we can show that 
\begin{align*}
\inf_{\widehat{\Omega}}\sup_{\mathbb{G}_0(h\wedge s,h\wedge s)\cap \{\Omega^{(k)}=I_p, k\in\mA_0\}}\P\left(\frac{1}{p}\|\widehat{\Omega}-\Omega\|_F^2\geq \frac{(h\wedge s)\log p}{n}\right)>1/4>0.
\end{align*}
In the parameter space $\mathbb{G}_0(h\wedge s,h\wedge s)\cap \{\Omega^{(k)}=I_p, k\in\mA_0\}$, $\Omega^{(k)}$ provides no information to $\Omega$ and hence it is equivalent to the one sample case with sparsity constraint $h\wedge s$.
\end{proof}

\subsection{Minimax upper bound under Frobenius norm}
\begin{algorithm}[H]
 \SetKwInOut{Input}{Input}
    \SetKwInOut{Output}{Output}
\SetAlgoLined
 \Input{Primary data $X$ and informative auxiliary samples $\{X^{(k)}\}_{k\in\mA_q}$}
 \Output{$\widehat{\Omega}^{\mA_q}$ }
 
\underline{Step 1}. For each $k\in\mA_q$, compute
\begin{align}
&\widehat{\Delta}^{(k)}=\argmin_{\Delta\in\R^{p\times p}} \|\Delta\|_1 \label{eq-Deltak-l0}\\
&\text{subject to} \quad\|\widehat{\Sig}\Delta-(\widehat{\Sig}^{(k)}-\widehat{\Sig})\|_{\infty,\infty}\leq \lam_{k}.\nonumber
\end{align}

Let
\[
   \widehat{\Delta}^{(init)}=\sum_{k\in\mA_q}\alpha_k\widehat{\Delta}^{(k)}.
\]
\begin{align}
\widehat{\Delta}^{\mA_q}&=\argmin_{\Delta\in\R^{p\times p}} \|\Delta\|_1 \label{eq-DeltaA-lq}\\
&\text{subject to} \quad\|\Delta-\widehat{\Delta}^{(init)}-\widehat{\Omega}^{(\textup{CL})}(\widehat{\Sig}^{\mA_q}-\widehat{\Sig}-\widehat{\Sig}\widehat{\Delta}^{(init)})\|_{\infty,\infty}\leq 2\lam_{\Delta}.\nonumber
\end{align}

\underline{Step 2}.
Compute
\begin{align}
&\widehat{\Theta}^{\mA_q}=\argmin_{\Theta\in\R^{p\times p}} \|\Theta\|_1\label{eq-Theta-est-lq}\\
&\text{subject to} \quad\|\widehat{\Sig}^{\mA}\Theta-(\widehat{\Delta}^{\mA_q}+I_p)^{\intercal}\|_{\infty,\infty}\leq \lam_{\Theta}.\nonumber
\end{align}

\underline{Step 3}
Let $\nu_q=h$ for $q=0$ and $\nu_q=h^q(\log p/n)^{-q/2}$ for $q\in (0,1)$.
\[
   \widehat{\Omega}^{\mA_q}=\left\{\begin{array}{ll}
      &\widehat{\Theta}^{\mA_q}~~\text{if}~\nu_q\ll s ~\text{and} ~n_{\mA_q}\gtrsim n\\
   &\widehat{\Omega}^{(\textup{CL})}~~\text{otherwise}.
   \end{array}\right.
\]

\caption{Realization of $\widehat{\Omega}^{\mA_q}$ for $q\in [0,1)$}
\label{alg1-l0}
\end{algorithm}

\begin{proof}[Proof of minimax upper bound in Theorem \ref{thm-mini-frob2}]
It follows from the proof of Lemma \ref{lem1-thm1} that $\Delta^{(k)}$ is a feasible solution to (\ref{eq-Deltak-l0}). 

(i) First consider \underline{$q=0$}.
For $\lam_k\geq c\sqrt{\frac{\log p}{n\wedge n_k}}$, with probability at least $1-\exp(-c_1\log p)-\exp(-c_2n)$
\begin{equation}
\label{eq1-pf5}
   \max_{1\leq j\leq p}\|\widehat{\Delta}_j^{(k)}-\Delta^{(k)}_j\|_1\leq  h\sqrt{\frac{\log p}{n\wedge n_k}}.
\end{equation}
Next, we analyze
\begin{align*}
\widehat{\Delta}^{(db)}-\Delta^{\mA_0}&=\widehat{\Delta}^{(init)}+\widehat{\Omega}^{(\textup{CL})}(\widehat{\Sig}^{\mA_0}-\widehat{\Sig}-\widehat{\Sig}\widehat{\Delta}^{(init)})-\Delta^{\mA_0}\\
&=rem_1+rem_2+rem_3
\end{align*}
as defined in (\ref{eq-rem}).
One can see that
\begin{align*}
\max_j\|\widehat{\Delta}^{(init)}_j-\Delta^{\mA_0}_j\|_1&\leq \max_j\sum_{k\in\mA_0}\alpha_k\|\widehat{\Delta}^{(k)}_j-\Delta^{(k)}_j\|_1\\
&\leq \max_j\sum_{k\in\mA_0}\frac{n_k}{n_{\mA_0}}h\sqrt{\frac{\log p}{n\wedge n_k}}\leq h\sqrt{\frac{\log p}{n}},
\end{align*}
where the last step is due to $n_{\mA_0}\gtrsim |\mA_0|n$. Similar to the analysis of $rem_1$, $rem_2$, and $rem_3$ in Lemma \ref{lem1-thm1}, we have
\[
    \|\widehat{\Delta}^{(db)}-\Delta^{\mA_0}\|_{\infty,\infty}\leq c_1\sqrt{\frac{\log p}{n}}+(h+s)\frac{\log p}{n}
\]
with probability at least $1-\exp(-c_1\log p)-\exp(-c_2n)$. Hence, we take  $\lam_{\Delta}=c_2\sqrt{\log p/n}$. We can show that when $h\leq s$ and $s\sqrt{\log p}\leq c_1\sqrt{n}$\begin{align*}
\|r_j(\widehat{\Delta}^{\mA})-r_j(\Delta^{\mA})\|_2^2\leq  c_3\frac{h\log p}{n}
\end{align*}
with probability at least $1-\exp(-c_1\log p)-\exp(-c_2n)$.
The rest of the proof follows from Lemma \ref{thm1}.

(ii) Consider any fixed \underline{$q\in(0,1)$}. Define $J_k=\{j: |\delta^{(k)}_j|\geq c_1\lam_k\}$. Notice that $|J_k|\leq h^q/\lam_k^{q}$ and $\|\delta^{(k)}_{J_k^c}\|_1\leq \lam_k^{1-q}h^q$. Hence, for $\lam_k\geq c\sqrt{\frac{\log p}{n\wedge n_k}}$, standard arguments give that with probability at least $1-\exp(-c_1\log p)-\exp(-c_2n)$
\[
   \max_{1\leq j\leq p}\|\widehat{\Delta}_j^{(k)}-\Delta^{(k)}_j\|_2^2\leq  (|J_k|\lam_k^2+\|\delta^{(k)}_{J_k^c}\|_1\lam_k)\leq c_1h^q\left(\frac{\log p}{n_0\wedge n_k}\right)^{1-q/2}.
\]
By Lemma 5 in \citet{Raskutti11b}, we have
\[
\max_{1\leq j\leq p}\|\widehat{\Delta}_j^{(k)}-\Delta^{(k)}_j\|_1\leq  h^q\left(\frac{\log p}{n_0\wedge n_k}\right)^{1/2-q/2}.
\]
The rest of the proof follows from above proof for $q=0$.
\end{proof}

\subsection{Minimax lower bounds for estimating $\Omega_{i,j}$}
\begin{proof}[Proof of the lower bounds in Theorem \ref{thm-mini-db2}]
We will first derive the minimax lower bound for $q\in(0,1]$ for $\mathbb{G}_q$( defined in \ref{eq-Gam}).
We will separately show that
\begin{align}
\inf_{\widehat{\Omega}_{i,j}}\sup_{\Omega\in\mathbb{G}_q(s,0)}\P\left(|\widehat{\Omega}_{i,j}-\Omega_{i,j}|\geq c_1(n_{\mA_q}+n)^{-1/2}+c_2\frac{s\log p}{n_{\mA_q}+n}\right)>1/4>0.\label{lb1}
\end{align}
and
\begin{align}
\inf_{\widehat{\Omega}_{i,j}}\sup_{\Omega\in\mathbb{G}_q(s,h)}\P\left(|\widehat{\Omega}_{i,j}-\Omega_{i,j}|\geq n^{-1/2}\wedge h+\frac{s\log p}{n}\wedge h^q(\frac{\log p}{n})^{1-q/2}\wedge h^2\right)>1/4>0.\label{lb2}
\end{align}
The parameter spaces in (\ref{lb1}) and (\ref{lb2}) are subspaces of $\mathbb{G}_q(s,h)$. Hence, the minimax rate in $\mathbb{G}_q(s,h)$ is lower bounded by the maximum of the two bounds, which is of order
\begin{align*}
  & (n_{\mA_q}+n)^{-1/2}+\frac{s\log p}{n_{\mA_q}+n}+n^{-1/2}\wedge h+\frac{s\log p}{n}\wedge h^q(\frac{\log p}{n})^{1-q/2}\wedge h^2\\
   &\asymp R_q+\frac{s\log p}{n_{\mA_q}+n}+\frac{s\log p}{n}\wedge h^q(\frac{\log p}{n})^{1-q/2}\wedge h^2.
\end{align*}
This is exactly the claim of Theorem \ref{thm-mini-db} and the lower bounds of Theorem \ref{thm-mini-db2}. Hence, we only need to prove (\ref{lb1}) and (\ref{lb2}).

(i) Proof of (\ref{lb1}). In the parameter space $\mathbb{G}_q(s,0)$, $\Omega^{(k)}=\Omega$ for any $k\in\mA$. Hence, it is equivalent to the one sample case with $n_{\mA_q}+n$ independent observations.  It follows from Theorem 5 in \citet{Zhao15} that for $3<s\leq c_0\min\{p^\nu,\sqrt{(n_{\mA_q}+n)/\log p}\}$ with $\nu<1/2$ and some constant $c_0$, (\ref{lb1}) holds.

(ii) Proof of (\ref{lb2}). 

(ii-1)
We first show that
\[
  \inf_{\widehat{\Omega}_{i,j}}\sup_{\Omega\in\mathbb{G}_q(s,h)\cap\{\Omega^{(k)}=I_p,k\in\mA_q\}}\P\left( |\widehat{\Omega}_{i,j}-\Omega_{i,j}|\geq c_1n^{-1/2}\wedge h\right)>1/2
\]
when $\frac{s\log p}{n}\wedge  h^q(\frac{\log p}{n})^{1-q/2}\wedge h^2< h\wedge n^{-1/2}$. Notice that $h<1$ implied by the constraint.
Consider
\begin{align*}
&\mathcal{H}_0: \Omega=\Omega^{(k)}=I_p\quad v.s.\\
& \mathcal{H}_1: \Omega=\Omega_1=\begin{pmatrix}
1 & b& \mathbf{0}^{\intercal}_{p-2}\\
b & 1&\mathbf{0}^{\intercal}_{p-2}\\
\mathbf{0}_{p-2}& \mathbf{0}_{p-2} &  I_{p-2} 
\end{pmatrix},~~ \Omega^{(k)}=I_p, ~k\in\mA_q,
\end{align*}
where $b=cn^{-1/2}\wedge h$. Notice that the distributions under $\mathcal{H}_0$ and $\mathcal{H}_1$ are within $\mathbb{G}_q(s,h)$ for $s>3$.
The distribution of $X^{(k)}$, $k\in\mA_q$ are unchanged under $\mathcal{H}_0$ and $\mathcal{H}_1$. Hence, the KL-divergence under $\mathcal{H}_0$ and $\mathcal{H}_1$ is
\begin{align*}
   KL(f_{\mathcal{H}_0},f_{\mathcal{H}_1})&=nKL\left(N(0, I_p), N(0,\Omega_1^{-1})\right)\\
   &=\frac{n}{2}\sum_{i=1}^p\log (1+\lam_j(\Omega_1))-\frac{n}{2}\sum_{j=1}^p\lam_j(\Omega_1),
\end{align*}
where $\lam_j(\Omega_1)$, $j=1,\dots,p$ are the eigenvalues of $\Omega_1$. It is easy to calculate that
\[
  \lam_1(\Omega_1)=1+|b|,\lam_2(\Omega_1)=\dots=\lam_{p-1}(\Omega_1)=1,\lam_p(\Omega_1)=1-|b|.
\]
By Taylor expansion of $\log (1+x)$, it is easy to show that
\begin{align*}
   KL(f_{\mathcal{H}_0},f_{\mathcal{H}_1})&\leq nb^2/2\leq c^2/2<1/8
\end{align*}
for $c<1/2$. We can take $b=cn^{-1/2}\wedge h$ for a small enough constant $c$. Standard arguments lead to
\[
  \inf_{\widehat{\Omega}}\sup_{\Omega\in\mathbb{G}_q(s,h)\cap\{\Omega^{(k)}=I_p,k\in\mA_q\}}\P\left( |\widehat{\Omega}_{1,2}-\Omega_{1,2}|\geq c_1n^{-1/2}\wedge h\right)>1/2.
\]

(ii-2)
We now show that
\[
  \inf_{\widehat{\Omega}_{i,j}}\sup_{\Omega\in\mathbb{G}_q(s,h)\cap\{\Omega^{(k)}=I_p,k\in\mA_q\}}\P\left( |\widehat{\Omega}_{i,j}-\Omega_{i,j}|\geq c\frac{s\log p}{n}\right)>1/2,
\]
when $n^{-1/2}\wedge h\leq \frac{s\log p}{n}\wedge  h^q(\frac{\log p}{n})^{1-q/2}\wedge h^2$ and $\frac{s\log p}{n}\leq  h^q(\frac{\log p}{n})^{1-q/2}\wedge h^2$.
Consider
\begin{align*}
&\mathcal{H}_0: \Sig=\Sig^{(k)}=\begin{pmatrix}
1 & b_0&\mathbf{0}^{\intercal}_{p-2}\\
b_0 & 1 &\mathbf{0}^{\intercal}_{p-2}\\
\mathbf{0}_{p-2}&\mathbf{0}^{\intercal}_{p-2} &  I_{p-2} 
\end{pmatrix}:=\bar{\Sig}_0\quad v.s.\\
& \mathcal{H}_{\delta}: \Sig=\begin{pmatrix}
1 & b_0&\delta^{\intercal}\\
b_0 & 1 &\mathbf{0}^{\intercal}_{p-2}\\
\delta &\mathbf{0}_{p-2} &  I_{p-2} 
\end{pmatrix}:=\bar{\Sig}_{\delta},~~ \Sig^{(k)}=\bar{\Sig}_0 ~k\in\mA_q,
\end{align*}
where $\|\delta\|_0=s-2$ and $\delta_j\in\{0, C_1\sqrt{\log p/n}\}$, $j=1,\dots,p-2$. 
Notice that
\begin{align*}
&\bar{\Omega}_{\delta}=\{\bar{\Sig}_{\delta}\}^{-1}=\begin{pmatrix}
\frac{1}{1-b_0^2-\|\delta\|_2^2} & -\frac{b_0}{1-b_0^2-\|\delta\|_2^2}&-\frac{\delta^{\intercal}}{1-b_0^2-\|\delta\|_2^2}\\
-\frac{b_0}{1-b_0^2-\|\delta\|_2^2}& \frac{1}{1-b_0^2-\|\delta\|_2^2}  &\frac{b_0\delta^{\intercal}}{1-b_0^2-\|\delta\|_2^2}\\
-\frac{\delta}{1-b_0^2-\|\delta\|_2^2}&\frac{b_0\delta}{1-b_0^2-\|\delta\|_2^2} &  I_{p-2} +\frac{\delta\delta^{\intercal}}{1-b_0^2-\|\delta\|_2^2}
\end{pmatrix}
\end{align*}
We can check that $\|\Delta^{(k)}\|_{\infty,1}=0$ under $\mathcal{H}_0$. Under $\mathcal{H}_1$,
\begin{align*}
\Delta^{(k)}&=\bar{\Omega}_{\delta}\Sig^{(k)}-I_p=\bar{\Omega}_{\delta}\begin{pmatrix}
0& 0&\delta^{\intercal}\\
0 & 0&\mathbf{0}^{\intercal}_{p-2}\\
\delta &\mathbf{0}_{p-2} &  0_{p-2} 
\end{pmatrix}\\
&=\begin{pmatrix}
\frac{-\|\delta\|_2^2}{1-b_0^2-\|\delta\|_2^2}& 0&\frac{\delta^{\intercal}}{1-b_0^2-\|\delta\|_2^2} \\
\frac{b_0\|\delta\|_2^2}{1-b_0^2-\|\delta\|_2^2}& 0&-\frac{b_0\delta^{\intercal}}{1-b_0^2-\|\delta\|_2^2}\\
\frac{1-b_0^2}{1-b_0^2-\|\delta\|_2^2}\delta&\mathbf{0}_{p-2}&  -\rho\delta\delta^{\intercal}
\end{pmatrix}.
\end{align*}
Hence, for $\log p/n=o(1)$, 
\begin{align*}
\max_j\|\Delta^{(k)}_j\|_{q}+\max_j\|r_j(\Delta^{(k)})\|_q&\leq c_1\|\delta\|_q+c_2\|\delta\|_2^2\\
&\leq C_1s^{1/q}(\log p/n)^{1/2}+s\log p/n\leq C_2<h.
\end{align*}
We see that
\[
   \{\bar{\Omega}_0\}_{1,2}-\{\bar{\Omega}_{\delta}\}_{1,2}=\frac{-b_0}{1-b_0^2}+\frac{b_0}{1-b_0^2-\|\delta\|_2^2}=\frac{b_0\|\delta\|_2^2}{1-b_0^2-\|\delta\|_2^2}.
\]
When $\|\delta\|_2^2=C_1^2(s-2)\log p/n<c_1<1-b_0^2$, $|\{\bar{\Omega}_0\}_{1,2}-\{\bar{\Omega}_{\delta}\}_{1,2}|\geq c_2\frac{s\log p}{n}$. 
One can bound the total variation distance using Lemma 1 and the proof of Theorem 5 in \citet{Zhao15}.

(ii-3)  We now show that
\[
  \inf_{\widehat{\Omega}_{i,j}}\sup_{\Omega\in\mathbb{G}_q(s,h)\cap\{\Omega^{(k)}=I_p,k\in\mA_q\}}\P\left( |\widehat{\Omega}_{i,j}-\Omega_{i,j}|\geq ch^q(\frac{\log p}{n})^{1-q/2}\right)>1/2,
\]
when $n^{-1/2}\leq \frac{s\log p}{n}\wedge h^q(\frac{\log p}{n})^{1-q/2}\wedge h^2$ and $h^q(\frac{\log p}{n})^{1-q/2}\leq \frac{s\log p}{n}\wedge h^2$. Notice that in this scenario $h\geq \sqrt{\log p/n}$. 

In this case, we take
\begin{align*}
&\mathcal{H}_0: \Sig=\Sig^{(k)}=\begin{pmatrix}
1 & b_0&\mathbf{0}^{\intercal}_{p-2}\\
b_0 & 1 &\mathbf{0}^{\intercal}_{p-2}\\
\mathbf{0}_{p-2}&\mathbf{0}^{\intercal}_{p-2} &  I_{p-2} 
\end{pmatrix}:=\bar{\Sig}_0\quad v.s.\\
& \mathcal{H}_{\delta}: \Sig=\begin{pmatrix}
1 & b_0&\delta^{\intercal}\\
b_0 & 1 &\mathbf{0}^{\intercal}_{p-2}\\
\delta &\mathbf{0}_{p-2} &  I_{p-2} 
\end{pmatrix}:=\bar{\Sig}_{\delta},~~ \Sig^{(k)}=\bar{\Sig}_0 ~k\in\mA,
\end{align*}
where $\|\delta\|_0$ is the integer part of $\{h/(\sqrt{\log p/n})\}^q$ and $\delta_j\in\{0, C_1\sqrt{\log p/n}\}$, $j=1,\dots,p-2$.  Since $h\geq \sqrt{\log p/n}$, $\|\delta\|_0\geq 1$. We can also check that 
\[
   \max_j\|\Delta^{(k)}_j\|_q+\max_j\|r_j(\Delta^{(k)})\|_q \leq c_1\|\delta\|_q+c_2\|\delta\|_2^2\leq c_3h+c_4h^q(\frac{\log p}{n})^{1-q/2}\leq c_5h,
   \] 
   where the last step is due to $h^q(\frac{\log p}{n})^{1-q/2}\leq h$ when $h\gtrsim 1$ and $h^q(\frac{\log p}{n})^{1-q/2}\leq h^2=o(h)$ when $h\ll 1$.
   The rest of the proof follows from the proof of (ii-2).

(ii-4)  We now show that
\[
  \inf_{\widehat{\Omega}_{i,j}}\sup_{\Omega\in\mathbb{G}_q(s,h)\cap\{\Omega^{(k)}=I_p,k\in\mA_q\}}\P\left( |\widehat{\Omega}_{i,j}-\Omega_{i,j}|\geq h^2\right)>1/2,
\]
when $n^{-1/2}\leq \frac{s\log p}{n}\wedge h^q(\frac{\log p}{n})^{1-q/2}\wedge h^2$ and $h^2\leq \frac{s\log p}{n}\wedge h^q(\frac{\log p}{n})^{1-q/2}$. Notice that in this scenario $h< \sqrt{\log p/n}$. 

In this case, we take
\begin{align*}
&\mathcal{H}_0: \Sig=\Sig^{(k)}=\begin{pmatrix}
1 & b_0&\mathbf{0}^{\intercal}_{p-2}\\
b_0 & 1 &\mathbf{0}^{\intercal}_{p-2}\\
\mathbf{0}_{p-2}&\mathbf{0}^{\intercal}_{p-2} &  I_{p-2} 
\end{pmatrix}:=\bar{\Sig}_0\quad v.s.\\
& \mathcal{H}_{\delta}: \Sig=\begin{pmatrix}
1 & b_0&\delta^{\intercal}\\
b_0 & 1 &\mathbf{0}^{\intercal}_{p-2}\\
\delta &\mathbf{0}_{p-2} &  I_{p-2} 
\end{pmatrix}:=\bar{\Sig}_{\delta},~~ \Sig^{(k)}=\bar{\Sig}_0 ~k\in\mA,
\end{align*}
where $\|\delta\|_0=1$ and $\delta_j\in\{0, h\}$, $j=1,\dots,p-2$.  We can also check that 
\[
   \max_j\|\Delta^{(k)}_j\|_q+\max_j\|r_j(\Delta^{(k)})\|_q \leq c_1\|\delta\|_q+c_2\|\delta\|_2^2\leq c_3h+ch^2\leq c_5h
   \] 
    in this case. The rest of the proof follows from the proof of (ii-2).

Finally, we show the results for $q=0$. 
First, it exactly follows from the proof of (\ref{lb1}) that
\begin{align*}
\inf_{\widehat{\Omega}_{i,j}}\sup_{\Omega\in\mathbb{G}_0(s,0)}\P\left(|\widehat{\Omega}_{i,j}-\Omega_{i,j}|\geq c_1(n_{\mA_0}+n)^{-1/2}+c_2\frac{s\log p}{n_{\mA_0}+n}\right)>1/4>0.
\end{align*}
Next, we can show that 
\begin{align*}
\inf_{\widehat{\Omega}_{i,j}}\sup_{\Omega\in\mathbb{G}_0(h\wedge s,h\wedge s)\cap \{\Omega^{(k)}=I_p, k\in\mA_0\}}\P\left(|\widehat{\Omega}_{i,j}-\Omega_{i,j}|\geq n^{-1/2}+\frac{(h\wedge s)\log p}{n}\right)>1/4>0.
\end{align*}
In the parameter space $\Omega\in\mathbb{G}_0(h\wedge s,h\wedge s)\cap \{\Omega^{(k)}=I_p, k\in\mA_0\}$, $\Omega^{(k)}$ provides no information to $\Omega_{1,2}$ and hence it is equivalent to the one sample case with sparsity constraint $h\wedge s$.

The proof is complete by combining above two results.
\end{proof}

\subsection{Minimax upper bounds for estimating $\Omega_{i,j}$}
\begin{proof}[Proof of minimax upper bounds in Theorem \ref{thm-mini-db2}]
First consider \underline{$q=0$}.
For $\widehat{\Omega}^{\mA_0}$ defined in Algorithm \ref{alg1-l0},
\[
   \widehat{\Omega}^{(db,0)}=\widehat{\Omega}^{\mA_0}_{j,i}+\widehat{\Omega}^{\mA_0}_{i,j}-(\widehat{\Omega}^{\mA_0}_{j})^{\intercal}\widetilde{\Sig}\widehat{\Omega}^{\mA_0}_{i}.
\]
The proof follows from the proof of Theorem \ref{thm-db} with setting $\hat{v}_j=v_j$ and the upper bound proof of Theorem \ref{thm-mini-frob2}. 

Next consider $q\in (0,1]$.
For $\widehat{\Omega}^{\mA_q}$ defined in Algorithm \ref{alg1-l0},
\[
   \widehat{\Omega}^{(db,q)}=\left\{\begin{array}{ll}
  & \widehat{\Omega}^{\mA_q}_{j,i}+\widehat{\Omega}^{\mA_q}_{i,j}-(\widehat{\Omega}^{\mA_q}_{j})^{\intercal}\widetilde{\Sig}\widehat{\Omega}^{\mA_q}_{i}~\text{if}~h\gtrsim n^{-1/2}\\
   & \widehat{\Omega}^{\mA_q}_{j,i}+\widehat{\Omega}^{\mA_q}_{i,j}-(\widehat{\Omega}^{\mA_q}_{j})^{\intercal}\widetilde{\Sig}^{\mA_q}\widehat{\Omega}^{\mA_q}_{i}~\text{if}~h\ll n^{-1/2},
   \end{array}\right.
\]
where 
\[
    \widetilde{\Sig}^{\mA_q}=\sum_{k\in \mA_q}\frac{n_k}{n_{\mA_q}+n}\widetilde{\Sig}^{(k)}+\frac{n}{n_{\mA_q}+n}\widetilde{\Sig}
    \]
    for some $\widetilde{\Sig}^{(k)}$ independent of $\widehat{\Sig}^{(k)}$ for all $k\in\mA_q$. Again, this can be achieved by sample splitting of $X^{(k)}$, $k\in\mA_q$.

When $h\gtrsim n^{-1/2}$, it follows from the proof of Theorem \ref{thm-db} with setting $\hat{v}_j=v_j$ and upper bound proof of Theorem \ref{thm-mini-frob2} that
\begin{align}
|\widecheck{\Omega}^{(db,q)}_{i,j}-\Omega_{i,j}|=O_P\left(n^{-1/2}+\frac{s\log p}{n_{\mA_q}+n}+ h^q\delta_n^{2-q}\wedge \frac{s\log p}{n}\right).\label{re1-db-lq}
\end{align}

When $h\ll n^{-1/2}$, tt holds that
\begin{align*}
\widehat{\Omega}_{i,j}^{(db,q)}-\Omega_{i,j}
&=\Omega_{i,j}-\Omega_i^{\intercal}\widetilde{\Sig}^{\mA_q}\Omega_j +rem_{i,j},
\end{align*}
where 
\begin{align*}
rem_{i,j}&=\underbrace{(e_j^{\intercal}-\Omega_j^{\intercal}\widetilde{\Sig}^{\mA_q})(\widehat{\Omega}^{\mA_q}_i-\Omega_i)}_{R_{1,i,j}}-\underbrace{(\widehat{\Omega}^{\mA_q}_j-\Omega_j)^{\intercal}\widetilde{\Sig}^{\mA_q}(\widehat{\Omega}^{\mA_q}_i-\Omega_i)}_{R_{2,i,j}}\\
&\quad +\underbrace{(e_i^{\intercal}-\Omega_i^{\intercal}\widetilde{\Sig}^{\mA_q})(\widehat{\Omega}^{\mA_q}_j-\Omega_j)}_{R_{3,i,j}}.
\end{align*}
We first bound $R_{1,i,j}$ and $R_{3,i,j}$.
\begin{align*}
|R_{1,i,j}|&\leq |\Omega_j^{\intercal}(\Sig-\widetilde{\Sig}^{\mA_q})(\widehat{\Omega}^{\mA_q}_i-\Omega_i)|\\
&\leq |\Omega_j^{\intercal}(\Sig^{\mA_q}-\widetilde{\Sig}^{\mA_q})(\widehat{\Omega}^{\mA_q}_i-\Omega_i)|+|\Omega_j^{\intercal}(\Sig^{\mA_q}-\Sig)(\widehat{\Omega}^{\mA_q}_i-\Omega_i)|\\
&=O_P(\frac{\|\widehat{\Omega}^{\mA_q}_i-\Omega_i\|_2}{\sqrt{n_{\mA_q}+n}}) + \|\Omega_j^{\intercal}(\Sig^{\mA_q}-\Sig)\|_2\|\widehat{\Omega}^{\mA_q}_i-\Omega_i\|_2\\
&=o_P((n_{\mA_q}+n)^{-1/2})+ \max_{k\in\mA_q}\|\Delta^{(k)}_i\|_2o_P(1),
\end{align*}
where the second last line is due to the independence of $\widetilde{\Sig}^{\mA_q}$ and $\widehat{\Omega}^{\mA_q}$ and the last line is due to the sub-additivity of $\ell_2$-norm and the definition of $\Delta^{(k)}$. As $\max_{k\in\mA_q}\|\Delta^{(k)}_i\|_2\leq \max_{k\in\mA_q}\|\Delta^{(k)}_i\|_q\leq h$, we arrive at
\begin{align*}
|R_{1,i,j}|+|R_{3,i,j}|=o_P(n_{\mA_q}+n)^{-1/2}+h).
\end{align*}
For $R_{2,i,j}$,
\begin{align*}
|R_{2,i,j}|&\leq |(\widehat{\Omega}^{\mA_q}_j-\Omega_j)^{\intercal}\Sig^{\mA_q}(\widehat{\Omega}^{\mA_q}_i-\Omega_i)|(1+O_P(n_{\mA_q}+n)^{-1/2})\\
&=O_P\left(\max_i\|\widehat{\Omega}^{\mA_q}_i-\Omega_i\|_2^2\right).
\end{align*}
Hence, we arrive at when $h\ll n^{-1/2}$,
\begin{align}
\label{re2-db-lq}
|rem_{i,j}|=o_P(n_{\mA_q}+n)^{-1/2}+h)+O_P(\frac{s\log p}{n_{\mA_q}+n}+ h^q\delta_n^{2-q}\wedge \frac{s\log p}{n}).
\end{align}
Finally, we analyze
\begin{align}
\Omega_{i,j}-\Omega_i^{\intercal}\widetilde{\Sig}^{\mA_q}\Omega_j&=\Omega_{i}^\intercal\Sig^{\mA_q}\Omega_j-\Omega_i^{\intercal}\widetilde{\Sig}^{\mA_q}\Omega_j+\Omega_{i}^T(\Sig-\Sig^{\mA_q})\Omega_j\nonumber\\
&=O_P(n_{\mA_q}+n)^{-1/2})+\|\Omega_j\|_2\max_{k\in\mA_q}\|\Delta^{(k)}_i\|_q\nonumber\\
&=O_P(n_{\mA_q}+n)^{-1/2})+h.\label{re3-db-lq}
\end{align}
Combining (\ref{re2-db-lq}) and (\ref{re3-db-lq}), we have when $h\ll n^{-1/2}$,
\begin{align*}
|\widehat{\Omega}_{i,j}^{(db,q)}-\Omega_{i,j}|=O_P((n_{\mA_q}+n)^{-1/2}+h)+O_P(\frac{s\log p}{n_{\mA_q}+n}+ h^q\delta_n^{2-q}\wedge \frac{s\log p}{n}).
\end{align*}
Together with (\ref{re1-db-lq}), we arrive at desired results.
\end{proof}

\section{Data applications}
\subsection{Hubs detected by two methods}
\label{sec-hub}
\begin{table}[!htbp]
\begin{tabular}{|c|l|}
\hline
 tissue &Top 5 hubs\\
\hline
A.C. cortex& \texttt{BTG2; ROBO1; SEMA3A; MTPN; SHANK3; OLIG2}\\
\hline
C.B. ganglia &\texttt{ZSWIM6; PTEN; FAIM2; ARHGAP35; SHANK3} \\
\hline 
C. hemisphere & \texttt{LHX4; NTRK2} \\
\hline
Cerebellum & \texttt{LHX4; ZSWIM6; SPOCK1; SUFU; CBLN1} \\
\hline
Cortex & \texttt{ZEB2; CDH11} \\
\hline
F. cortex &\texttt{ROBO1; HES1; SEMA3A; CLN8; CSNK1D} \\
\hline
Hippocampus & \texttt{SATB2; ERBB4; SEMA3A; NIN; CRKL} \\
\hline
Hypothalamus & \texttt{ROBO2; SOX4; SEMA3A; SOX1} \\
\hline
N.A.B. ganglia & \texttt{CHD5; ZSWIM6; SEMA3A; DRD2; SHANK3} \\
\hline
P.B. ganglia & \texttt{GLI2; SALL1; SHANK3; NPY} \\
\hline
\end{tabular}
\caption{The list of whose degree is among the top 5 largest in each target tissue detected by Trans-CLIME at FDR level $\alpha=0.1$.}
\label{table-hub}
\end{table}

\begin{table}[!htbp]
\begin{tabular}{|c|l|}
\hline
 tissue &Top 5 hubs\\
\hline
A.C. cortex& \texttt{HES5; PROX1; SEMA3A; PTEN; DYNC2H1; ATF5}\\
\hline
C.B. ganglia &\texttt{HES5; NTRK2; CSNK1D; ARHGAP35} \\
\hline 
C. hemisphere & \texttt{LHX4; UNC5D} \\
\hline
Cerebellum & \texttt{MTPN; CBLN1} \\
\hline
Cortex & \texttt{EPHB2; UQCRQ; PSEN1; NFE2L1; ARHGAP35; CRKL; SHANK3; NPY} \\
\hline
F. cortex &\texttt{SZT2; DRD1; CSNK1D; CRKL; GABRB1} \\
\hline
Hippocampus & \texttt{LDB1; CRK; CSNK1E; DCC} \\
\hline
Hypothalamus & \texttt{LMO4; PTEN; KNDC1; DRD2} \\
\hline
N.A.B. ganglia & \texttt{IFT172; DYNC2H1; SHH} \\
\hline
P.B. ganglia & \texttt{HES5; ROBO1; EPHB3; RORA; OLIG2} \\
\hline
\end{tabular}
\caption{The list of nodes whose degree is among the top 5 largest in each target tissue detected by CLIME at FDR level $\alpha=0.1$.}
\label{table-hub-cl}
\end{table}
\subsection{Sample sizes}
The list of 13 tissues in consideration and their sample sizes (Table \ref{table-tis}). The first 10 tissues are considered as the target tissues individually (in the order of the $x$-axis in Figure \ref{fig1-data}). The last three tissues are only employed as auxiliary tissues as their sample sizes are relatively small.
\begin{table}[!htbp]
\begin{tabular}{|c|c|c|}
\hline
no.& tissue & sample size\\
\hline
1&A.C. cortex& 109\\
\hline
2 & Brain caudate basal ganglia &144\\
\hline 
3 & Brain cerebellar hemisphere & 125\\
\hline
4 & Brain cerebellum & 154\\
\hline
5 & Brain cortex & 136\\
\hline
6& Brain frontal cortex & 118\\
\hline
7 & Brain hippocampus & 111\\
\hline
8 & Brain hypothalamus & 108\\
\hline
9& Brain nucleus accumbens basal ganglia & 130\\
\hline
10 & Brain putamen basal ganglia & 111\\
\hline
11& Brain amygdala & 88\\
\hline
12 & Brain spinal cord cervical& 83\\
\hline
13 & Brain substantia nigra & 80\\
\hline
\end{tabular}
\caption{The list of 13 tissues in consideration and their sample sizes.}
\label{table-tis}
\end{table}

\end{document}